\documentclass[a4paper,UKenglish]{article}
\usepackage{microtype}
\usepackage{geometry}

\bibliographystyle{plainurl}

\usepackage{amsmath}
\usepackage{amssymb,array,amsthm}

\usepackage{paralist}
\usepackage{tabularx}
\usepackage{changepage}
\usepackage{enumerate}
\usepackage{todonotes}
\usepackage{graphicx}
\usepackage{algorithm}
\usepackage[noend]{algorithmic}
\usepackage{tikz}
\usepackage{bm}
\usepackage{authblk}

\usetikzlibrary{arrows.meta,decorations.markings,circuits,decorations.pathmorphing,decorations.pathreplacing,arrows,automata,}

\usepackage{subcaption}

\newcommand{\cyl}[1]{\textrm{Cyl}(#1)}

\newcommand{\eqadm}[1]{\simeq_{#1}}
\newcommand{\botheqadm}[2]{\simeq^{#1}_{#2}}
\newcommand{\botheqadmstrat}[1]{\approx^{#1}}
\newcommand{\supp}{\textrm{Supp}}

\newcommand{\sure}{\texttt{S}}
\newcommand{\asure}{\texttt{A}}
\newcommand{\bothmodels}[2]{\models^{#1} #2}
\newcommand{\bothmodelsh}[3]{\models^{#1}_{#2} #3}
\newcommand{\rmodels}[1]{\bothmodels{\asure}{#1}}
\newcommand{\smodels}[1]{\bothmodels{\sure}{#1}}
\newcommand{\starmodels}[1]{\bothmodels{\star}{#1}}
\newcommand{\starmodelsh}[2]{\bothmodelsh{\star}{#1}{#2}}
\newcommand{\rmodelsh}[2]{\bothmodelsh{\asure}{#1}{#2}}
\newcommand{\smodelsh}[2]{\bothmodelsh{\sure}{#1}{#2}}

\newcommand{\distr}[1]{\mathcal{D}(#1)}
\newcommand{\Prob}{\mathbb{P}}

\newcommand{\bothvalpred}[3]{\texttt{Val}^{#1}_{#2,#3}}
\newcommand{\bothvalpredstates}[3]{\texttt{Val}^{#1}_{#2,#3}}
\newcommand{\rectset}{\Gamma}
\newcommand{\rectsetdet}{\Gamma^{\textrm{det}}}

\newcommand{\choicep}[1]{\Sigma_{#1}}
\newcommand{\transfun}{\delta}
\newcommand{\transfunrand}{\delta_{r}}
\newcommand{\transfunbis}{\overline{\delta}}
\newcommand{\vinitbis}{\overline{\ensuremath{s_{\mathsf{init}}}}}
\newcommand{\Sigmabis}{\overline{\Sigma}}
\newcommand{\choicepbis}[1]{\Sigmabis_{#1}}
\newcommand{\statesbis}{\overline{\states}}
\newcommand{\obs}{\mathfrak{o}}

\newcommand{\N}{\mathbb{N}}

\newcommand{\game}{\ensuremath{\mathcal{G}}}

\newcommand{\states}{S}

\newcommand{\Runs}{\texttt{Runs}}

\newcommand{\outc}{\texttt{Outcome}}
\newcommand{\prefixoutc}{\texttt{Hist}}
\newcommand{\outcome}[1]{\texttt{Outcome}(#1)}
\newcommand{\prefixoutcome}[1]{\texttt{Hist}(#1)}
\newcommand{\supphist}[1]{\texttt{Hist}(#1)}%For histories that is obtained with positive proba by a profile #1 
\newcommand{\profile}{\ensuremath{{\vec \strat}}}

\newcommand{\strat}{\sigma}
\newcommand{\vinit}{\ensuremath{s_{\mathsf{init}}}}

\newcommand{\valhist}[2]{\chi^{#2}_{#1}}%{\texttt{Val}_{#2}^{#3}(#1)}
%{\texttt{Val}_{#2,#4}^{#3}(#1)}
\newcommand{\valhstrat}[3]{\chi_{#3}^{#1}(#2)}%{\texttt{Val}_{#2,#4}^{#3}(#1)}
\newcommand{\valh}[3]{\chi_{#3}^{#1}(#2)}%{\texttt{Val}_{#2}^{#3}(#1)}

\newcommand{\win}[1]{\Phi(#1)}

\newcommand{\bothphivalone}[2]{\Phi_1^{#1}(#2)}
\newcommand{\bothphivalzero}[2]{\Phi_0^{#1}(#2)}

 \newcommand{\bothgameadmp}[2]{\game^{#1}_{#2}}

 \newcommand{\bothdomstrat}[1]{\prec^{#1}}
 \newcommand{\bothdomstrateq}[1]{\preccurlyeq^{#1}}

\newcommand{\bothdommove}[2]{<^{#1}_{#2}}
 \newcommand{\bothdommoveeq}[2]{\leqslant^{#1}_{#2}}

\newcommand{\hist}{h}

\newcommand{\last}[1]{\mathrm{last}(#1)}

\newcommand{\indexplayers}{P}
\newcommand{\players}{\indexplayers}

\newcommand{\bothPhiAA}[2]{\Phi_{\bothgameadmp{#1}{#2}}}

\newcommand{\prefix}{\ensuremath{\subseteq_{\mathsf{pref}}}}

\newcommand{\bothAfterHelp}[3]{\texttt{AfterHelp}^{#1}_{#2,#3}}
\newcommand{\bothAfterHelpMove}[2]{\texttt{AfterHelpMove}^{#1}_{#2}}

\newcommand{\succmove}[2]{\ensuremath{\mathrm{Succ}(#1,#2)}}
\newcommand{\succmovenew}[1]{\ensuremath{\mathrm{Succ}(#1)}}

\newcommand{\switchstrat}[3]{#1\langle#2\leftarrow #3\rangle}
\newcommand{\defeq}{\ensuremath{=_{\mathrm{def}}}}

%\newcommand{\admstrat}[1]{\mathcal{A}_{#1}}

%\newcommand{\succprofile}[2]{\transfun(#2,#1(#2))}

% For the vectors
\renewcommand{\vec}[1]{\ensuremath{\bm{#1}}}

\newcommand{\target}{\ensuremath{\mathsf{Trg}}}

\newcommand{\stratbis}{\ensuremath{\overline{\strat}}}
\newcommand{\stratter}{\ensuremath{\hat{\strat}}}
\newcommand{\histbis}{\ensuremath{\overline{\hist}}}
\newcommand{\stateofaction}{\ensuremath{\mathsf{st}}}
\theoremstyle{plain}
\newtheorem{proposition}{Proposition}
\newtheorem{example}{Example}
\newtheorem{theorem}{Theorem}
\newtheorem{lemma}{Lemma}
\newtheorem{remark}{Remark}

\date{}

\title{Admissibility in Concurrent Games\footnote{Work partially
    supported by the ERC inVEST (279499) project.}}

\author[1]{Nicolas Basset}
\author[1]{Gilles Geeraerts}
\author[1]{Jean-Fran\c{c}ois Raskin}
\author[2]{Ocan Sankur}

\affil[1]{Universit\'e libre de Bruxelles, Brussels, Belgium\\
  \texttt{\{nicolas.basset, gilles.geeraerts, jraskin\}@ulb.ac.be}}
\affil[2]{CNRS, IRISA, Rennes, France\\
  \texttt{ocan.sankur@irisa.fr}}

\begin{document}

\maketitle 

\begin{abstract}
  In this paper, we study the notion of \emph{admissibility}
  for randomised strategies in
  \emph{concurrent games}. Intuitively, an admissible strategy is one
  where the player plays  `as well as possible', because
  there is no other strategy that \emph{dominates} it, i.e., that wins (almost surely)
  against a super set of adversarial strategies. We prove that admissible
  strategies always exist in concurrent games, and we characterise
  them precisely. Then, when the objectives of the players are
  $\omega$-regular, we show how to perform \emph{assume-admissible
    synthesis}, i.e., how to compute admissible strategies that win (almost surely)
  under the hypothesis that the other players play admissible
  strategies only.
\end{abstract}

\section{Introduction}

In a concurrent $n$-player game played on a graph, all $n$ players
{\em independently} and {\em simultaneously} choose moves at each
round of the game, and those $n$ choices determine the next state of
the game~\cite{DBLP:journals/tcs/AlfaroHK07}. Concurrent games
generalise turn-based games and it is well-known that, while
deterministic strategies are sufficient in the turn-based case,
randomised strategies are necessary for winning with probability one
even for reachability objectives. Intuitively, randomisation is
necessary because in concurrent games, in each round, players have no
information about the concurrent choice of moves made by the other
players. Randomisation allows for some probability of choosing a good
move while not knowing the choice of the other players. As a
consequence, there are two classical semantics that are considered to
analyse these games {\em qualitatively}: winning with certainty (sure
semantics in the terminology of~\cite{DBLP:journals/tcs/AlfaroHK07}),
and winning with probability one (almost sure semantics in the
terminology of~\cite{DBLP:journals/tcs/AlfaroHK07}). We consider both
semantics here.

Previous papers on concurrent games are mostly concerned with two-player
zero-sum games, i.e. two players that have fully antagonistic
objectives. In this paper, we consider the {\em more general setting}
of $n$-player non zero-sum concurrent games in which each player has
its own objective. The notion of winning strategy is not sufficient to
study non zero-sum games and other solution concepts have been
proposed. One such concept is the notion of {\em admissible
  strategy}~\cite{adam2008admissibility}.

For a player with objective $\Phi$, a strategy $\sigma$ is said to be
\emph{dominated} by a strategy $\sigma'$ if $\sigma'$ does as well as
$\sigma$ with respect to $\Phi$ against all the strategies of the
other players and strictly better for some of them. A strategy
$\sigma$ is \emph{admissible} for a player if it is not dominated by
any other of his strategies. Clearly, playing a strategy which is not
admissible is sub-optimal and a rational player should only play
admissible strategies. While recent works have studied the notion of
admissibility for $n$-player non zero-sum game
graphs~\cite{Berwanger07,Faella09,BRS14,brs15,FSTTCS2016}, they are
all concerned with the special case of turn-based games and this work
is the first to consider the more general concurrent games.

Throughout the paper, we consider the running example in
\figurename~\ref{ex:runex}. This is a concurrent game played by two
players. Player~$1$'s objective is to reach $\target$, while
Player~$2$ wants to reach $s_2$.  Edges are labelled by pairs of moves
of both players which activate that transition (where $-$ means `any
move').  It is easy to see that no player can enforce its objective
with or without randomisation, so, there is no winning strategy in
this game for either player.  This is because moving from $s_0$ to
$s_1$ and from $s_1$ to $s_2$ requires the cooperation of both
players. Moreover, the transitions from $s_2$ behave as in the
classical `matching pennies' game: player $1$ must chose between $f$
and $f'$; player $2$ between $g$ and $g'$; and the target is reached
only when the choices `match'.  So, randomisation is needed to make
sure $\target$ is reached with probability one, from $s_2$. In the
paper, we will describe the dominated and admissible strategies of
this game.

%We now
%summarize our main technical contributions.
\begin{figure}
 
  \begin{subfigure}{.32\textwidth}
    \centering
    \begin{tikzpicture}[yscale=.5]
      % VP
      %\draw [rounded corners] (1,0) rectangle (5,5.5) {} ;
      % SCO
      \draw [rounded corners] (1.5,1) rectangle (4,4) {} ;
      % LA
      \draw [rounded corners] (2,2) rectangle (4.5,5) {} ;
      
      %\node at (3,0.5) {\Large\textbf{VP}} ;
      \node at (2.5, 1.5) {\Large\textbf{SCO}} ;
      \node at (3.5, 4.5) {\Large\textbf{LA}} ;
      \node at (3,3) {\Large\textbf{Adm.}} ;
      
      \draw [very thick] (4,2) {[rounded corners] -- (4,4)} -- (2,4) {[rounded corners] -- (2,2)} -- (4,2) ;
    \end{tikzpicture}
    \caption{Concurrent games. \label{fig:relconc}}
  \end{subfigure}
  %TB
  \begin{subfigure}{.32\textwidth}
    \centering
    \begin{tikzpicture}[yscale=.5]
      % VP
      %\draw [rounded corners] (1,0) rectangle (5,5.5) {} ;
      % SCO
      \draw [rounded corners, very thick] (1.5,1) rectangle (4,4) {} ;
      % LA
      \draw [rounded corners] (1.5,1) rectangle (4.5,5) {} ;
      
      \node at (3.5, 4.5) {\Large\textbf{LA}} ;
      
      \node at (2.75,2.5) {\begin{tabular}{c}\Large\textbf{Adm.}\\=\\\Large\textbf{SCO}\end{tabular}} ;
    \end{tikzpicture}
    \caption{Turn-based games.\label{fig:reltb}}
  \end{subfigure}
  %Safety
  \begin{subfigure}{.32\textwidth}
    \centering
    \begin{tikzpicture}[yscale=.5]
      % VP
      %\draw [rounded corners] (1,0) rectangle (5,5.5) {} ;
      % SCO
      \draw [rounded corners] (1.5,1) rectangle (4.5,5) {} ;
      % LA
      \draw [rounded corners, very thick] (2,2) rectangle (4.5,5) {} ;
      
      %\node at (3,0.5) {{\Large\textbf{VP}}={\Large\textbf{SCO}}} ;
      \node at (2.5, 1.5) {\Large\textbf{SCO}} ;
      \node at (3.25,3.5) {\begin{tabular}{c}\Large\textbf{Adm.}\\=\\\Large\textbf{LA}\end{tabular}} ;
    \end{tikzpicture}
    \caption{Safety games.\label{fig:relsafety}}
  \end{subfigure}
  \caption{
  The relationships between the classes of Admissible, LA, and
    SCO strategies for three families of games. All the
    inclusions are strict.}
  \label{fig:relations}
\end{figure}
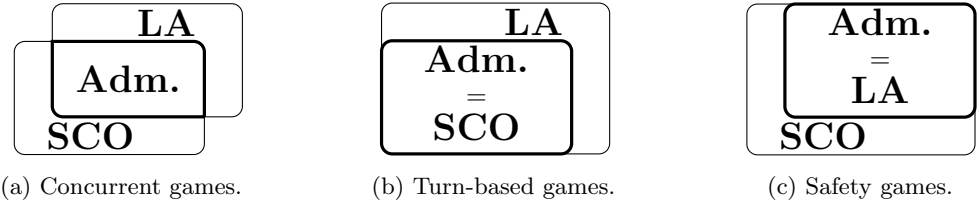

\subparagraph*{Technical contributions} First, we study the notion of
admissible strategies for both the {\em sure} and {\em almost sure}
semantics of concurrent games. We show in
Theorem~\ref{theo:classifconc} that in both semantics admissible
strategies always exist. The situation is thus similar to the
turn-based case~\cite{Berwanger07,BRS14}. Nevertheless, the techniques
used in this simpler case do not generalise easily to the concurrent
case and we need substantially more involved technical tools here. To
obtain our universal existence result, we introduce two weaker
solution concepts: \emph{locally admissible moves} and \emph{strongly
  cooperative optimal} strategies. While cooperative optimal
strategies were already introduced in~\cite{FSTTCS2016} and shown
equivalent to admissible strategies in the turn based setting, there
are strictly weaker than admissible strategies in the concurrent
setting (both for the sure and the almost sure semantics), and they
need to be combined with the notion of locally admissible moves to
fully characterise admissible strategies. In the special case of
safety objectives, we can show that admissible strategies are exactly
those that always play locally admissible moves. This situation is
depicted in \figurename~\ref{fig:relations}.

Second, we build on our characterisation of admissible strategies
based on the notions of locally admissible moves and strongly
cooperative optimal strategies to obtain algorithms to solve the {\em
  assume admissible synthesis} problem for concurrent games.  In the
assume admissible synthesis problem, we ask whether a given player has
an \emph{admissible} strategy that is \emph{winning} against \emph{all
  admissible} strategies of the other players.  So this rule relaxes
the classical synthesis rule by asking for a strategy that is winning
against the admissible strategies of the other players only and not
against all of them. This is reasonable as in a multi-player game,
each player has his own objective which is generally not the
complement of the objectives of the other players. The
assume-admissible rule makes the hypothesis that players are rational,
hence they play admissible strategies and it is sufficient to win
against those strategies.  Our algorithm is applicable to all
$\omega$-regular objectives and it is based on a reduction to a
zero-sum two-player game in the sure semantics. While this reduction
shares intuitions with the reduction that we proposed in~\cite{brs15}
to solve the same problem in the turn-based case, our reduction here
is based on games with imperfect
information~\cite{DBLP:journals/lmcs/RaskinCDH07}. In contrast, in the
turn-based case, games of perfect information are sufficient. The
correctness and completeness of our reduction are proved in
Theorem~\ref{theo:aasynth}.

\subparagraph*{Related works} Concurrent reachability games were
studied in~\cite{DBLP:journals/tcs/AlfaroHK07} and algorithms to solve
more general omega-regular objectives were given
in~\cite{DBLP:journals/tocl/ChatterjeeAH11}. The games studied there
are two-player and zero-sum only. We rely on the algorithms defined
in~\cite{DBLP:journals/tocl/ChatterjeeAH11} to compute states from
which players have almost surely winning strategies for their
objectives when all the other players play adversely. States where
a player has a (deterministic) strategy to surely win against all
other players can be computed by a reduction to more classical
turn-based game graphs~\cite{DBLP:journals/jacm/AlurHK02}. 
Nash equilibria have been studied in concurrent games~\cite{BBMU11},
but without randomised strategies.
None of
those papers consider the notion of admissibility.

We use the notion of admissibility to obtain synthesis algorithms for
systems composed of several sub-systems starting from non zero-sum
specifications. There have been a few other proposals in the
literature that are based on refinements of the notion of Nash
equilibrium (and not on admissibility), most notably: assume-guarantee
synthesis~\cite{DBLP:conf/tacas/ChatterjeeH07} and rational
synthesis~\cite{DBLP:conf/tacas/FismanKL10,DBLP:conf/eumas/KupfermanPV14}.
Those works assume the simpler setting of turn-based games and so they
do not deal with randomised strategies. In the context of infinite
games played on graphs, one well known limitation of Nash equilibria
is the existence of non-credible threats, admissibility does not
suffer from this problem.

In~\cite{DBLP:conf/fm/DammF14}, Damm and Finkbeiner use the notion of {\em dominant
  strategy} to provide a compositional semi-algorithm for the (undecidable)
  distributed synthesis problem.  So while we use the notion of admissible
  strategy, they use a notion of dominant strategy. The notion of dominant
  strategy is {\em strictly stronger}:
  every dominant strategy is admissible but an admissible strategy is not
  necessary dominant. Also, in multiplayer games with
  omega-regular objectives with complete information (as considered here),
  admissible strategies are always guaranteed to exist~\cite{Berwanger07} while
  it is not the case for dominant strategies.  
%They consider turn-based games and not concurrent games.  

%\subparagraph{Structure of the paper}. Sect.~2, ...

%%% Local Variables:
%%% mode: latex
%%% TeX-master: "ICALP"
%%% End:

\section{Preliminaries}\label{sec:preliminaries}

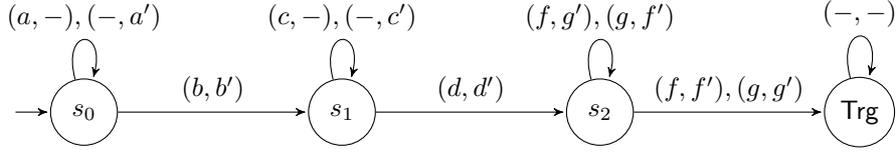
\begin{figure}
  \centering
    \begin{tikzpicture}[->,>=stealth',shorten >=1pt,auto,node distance=2.5cm, initial text={}]
      \node [initial,state] (q1)                      {$s_0$};
      \node[state]          (q2) [right=of q1]         {$s_1$};
      \node[state]          (q3) [right=of q2]         {$s_2$};    
      \node[state]          (q4) [right=of q3]         {$\target$};            
      \path (q1) edge node {$(b,b')$} (q2);
      \path (q2) edge node {$(d,d')$} (q3);          
      \path (q1) edge [loop above] node {$(a,-),(-,a')$} (q1);
      \path (q2) edge [loop above] node {$(c,-),(-,c')$} (q2);
%      \path (q1) edge [bend left,dashed] node {$(e,b'),(b,e')$} (q3);      
       \path (q3) edge [loop above] node {$(f,g'),(g,f')$} (q3);
       \path (q3) edge  node {$(f,f'),(g,g')$} (q4);       
       \path (q4) edge [loop above] node {$(-,-)$} (q4);

    \end{tikzpicture}
    \caption{A concurrent game where Player 1 and 2 want
      to reach $\target$ and $s_2$ respectively.\label{ex:runex}}
\end{figure}

\subparagraph*{Concurrent games played on graphs} Let
$\players=\{1, 2,\ldots n\}$ be a set of players. A
\emph{concurrent game} played on a finite graph by the players in
$\players$ is a tuple
$\game=(\states, \Sigma,\vinit, (\choicep{p})_{p\in\players},
\transfun)$ where,
\begin{inparaenum}[(i)]
\item $\states$ is a finite set of \emph{states};
  and~$\vinit \in \states$ the \emph{initial state};
 \item $\Sigma$ is a finite set of \emph{actions}; 
 \item For all $p\in \players$,
   $\choicep{p}: \states\to 2^\Sigma\setminus\{\emptyset\}$ is a
   \emph{action assignment} that assigns, to all states $s\in \states$,
   the set of actions available to player $p$ from state $s$.
 \item
   $\transfun : \states \times \Sigma \times \ldots \times \Sigma
   \rightarrow \states$ is the \emph{transition function}.
\end{inparaenum}
We write $\Sigma(s) = \Sigma_1(s)\times \ldots \times \Sigma_n(s)$ for
all $s\in \states$.  It is often convenient to consider a player $p$
separately and see the set of all other players $P\setminus\{p\}$ as a
single player denoted $-p$.  Hence, the set of actions of $-p$ in state
$s$ is:
$\choicep{-p}(s)\defeq\prod_{q\in\players\setminus\{p\}}\choicep{q}(s)$.
We assume that~$\Sigma_i(s) \cap \Sigma_j(s) = \emptyset$ for all~$s \in S$
and~$i\neq j$.
We denote by
$\succmove{s}{a}=\{\transfun(s,a,b)\mid b\in\choicep{-p}(s)\}$ the set
of possible \emph{successors} of the state $s\in \states$ when player
$p$ performs action $a\in \choicep{p}(s)$.  A particular case of
concurrent games are the \emph{turn-based games}. A game
$\game=(\states, \Sigma,\vinit, (\choicep{p})_{p\in\players},
\transfun)$ is turn-based iff for all states $s\in \states$, there is
a unique player $p$ s.t. the successors of $s$ depend only on $p$'s
choice of action, i.e., $\succmove{s}{a}$ contains exactly one state
for all $a\in\choicep{p}(s)$.

A~\emph{history} is a finite
path~$\hist=s_1s_2\ldots s_k \in \states^*$ s.t.
\begin{inparaenum}[(i)]
\item $k\in\N$;
\item  $s_1 = \vinit$; and
\item for every~$2\leq i\leq k$, there
  exists~$(a_1,\ldots,a_n) \in \Sigma^{|P|}$ with
  $s_i = \delta(s_{i-1}, a_1,\ldots,a_n)$.
\end{inparaenum}
The \emph{length} $|\hist|$ of a
history $\hist=s_1s_2\ldots s_k$ is its number of states $k$; for
every $1\leq i\leq k$, we denote by $\hist_i$ the state $s_i$ and by
$\hist_{\leq i}$ the history $s_1s_2\ldots s_i$.  We denote by
$\last{\hist}$ the last state of $\hist$, that is,
$\last{\hist}=\hist_{|\hist|}$.
%In particular, $\last{\hist}$ is the last state of $\hist$. 
A~\emph{run} is defined similarly as a history except that its length
is infinite. For a run $\rho=s_1s_2\ldots\in \states^\omega$ and
$i\in\N$, we also write $\rho_{\leq i}=s_1s_2\ldots s_i$ and
$\rho_i=s_i$.  Let~$\prefixoutc(\game)$ (resp. $\Runs(\game)$) denote
the set of histories (resp. runs) of $\game$.  The game is played from
the initial state $\vinit$ for an infinite number of rounds, producing
a run.  At each round~$i\geq 0$, with current state $s_i$, all
players~$p$ select simultaneously a action $a^i_p\in\choicep{p}(s_i)$,
and the state $\delta(s_i,a^i_1,\ldots,a^i_n)$ is appended to the
current history. The selection of the action by a player is done
according to strategies defined below.

\subparagraph*{Randomised moves and strategies} Given a finite set
$A$, a probability distribution on $A$, is a function
$\alpha : A\to [0,1]$ such that $\sum_{a\in A}\alpha(a)=1$; and we let
$\supp(\alpha)=\{a\mid \alpha(a)>0\}$ be the support of $\alpha$.  We
denote by $\alpha(B)=\sum_{a\in B}\alpha(a)$ the probability of a
given set $B$ according to $\alpha$.  The set of probability
distributions on $A$ is denoted by $\distr{A}$.  
A \emph{randomised move} of player~$p$ in state~$s$ 
is a probability distribution on
$\choicep{p}(s)$, that is, an element of $\distr{\choicep{p}(s)}$.  A
randomised move that assigns probability $1$ to an action and $0$ to
the others is called a \emph{Dirac move}. We will henceforth denote
randomised moves as sums of actions weighted by their respective
probabilities. For instance $0.5f+0.5g$ denotes the randomised move
that assigns probability $0.5$ to $f$ and $g$ (and $0$ to all other
actions). In particular, we denote by $b$ a Dirac move that assigns
probability $1$ to action $b$.

Given a state $s$ and a tuple
$\vec \beta=(\beta_p)_{p\in\players}\in \prod_{p\in\players}
\distr{\choicep{p}(s)}$ of randomised moves from~$s$, one per player,
we let $\transfunrand(s,\vec \beta)\in \distr{\states}$ be the
probability distribution on states s.t. for all $s'\in \states$:
$\transfunrand(s,\vec \beta) (s')=\sum_{\vec a \mid \transfun(s,\vec
  a)=s'} \vec \beta(\vec a)$, where
$\vec \beta(a_1,\ldots, a_n)=\prod_{i=1}^n\beta_i(a_i)$. Intuitively,
$\transfunrand(s,\vec \beta)(s')$ is the probability to reach $s'$
from $s$ when the players play according to $\vec \beta$.

A \emph{strategy} for player~$p$ is a function $\strat$ from histories
to randomised moves (of player $p$) such that, for all
$\hist \in \prefixoutc(\game)$:
$\strat(\hist)\in \distr{\choicep{p}(\last{\hist})}$.  A strategy is
called Dirac at history $\hist$, if $\strat(\hist)$ is a Dirac move;
it is called Dirac if it is Dirac in all histories.  We denote
by~$\rectset_p(\game)$ the set of player-$p$ strategies in the game,
and by $\rectset^{{\it det}}_p(\game)$ the set of player-$p$
strategies that only use Dirac moves (those strategies are also called
\emph{deterministic}); we might omit~$\game$ if it is clear from
context.  A~\emph{strategy profile~$\profile$} for a
subset~$A\subseteq P$ of players is a tuple $(\profile_p)_{p \in A}$
with~$\profile_p \in \rectset_p$ for all $p\in A$.  When the set of
players~$A$ is omitted, we assume $A=P$. Let
$\profile=(\profile_p)_{p \in P}$ be a strategy profile. Then, for all
players $p$, we let $\profile_{-p}$ denote the restriction of
$\profile$ to $P\setminus\{p\}$ (hence, $\profile_{-p}$ can be
regarded as a strategy of player $-p$ that returns, for all histories
$\hist$, a randomised move from
$\prod_{p\in\players\setminus\{p\}} \distr{\choicep{p}(s)}\subseteq
\distr{\choicep{-p}(\last{\hist})}$).  We sometimes denote $\profile$
by the pair $(\profile_p,\profile_{-p})$.  Given a history $\hist$, we
let $(\profile_p)_{p \in A}(\hist)=(\profile_p(\hist))_{p\in A}$.

Let
$\hist$ be a history and let $\rho$ be a history or a run. Then, we
write $\hist \prefix \rho$ iff~$\hist$ is a prefix of~$\rho$, i.e.,
$\rho_{\leq |\hist|}=\hist$.  
Consider two strategies $\strat$ and $\strat'$ for player $p$, and a
history~$\hist$. We denote by $\switchstrat{\strat}{\hist}{\strat'}$ the
strategy that follows strategy~$\strat$ and \emph{shifts} to~$\strat'$ as
soon as~$\hist$ has been played.  Formally,
$\switchstrat{\strat}{\hist}{\strat'}$ is the strategy s.t., for all
histories $\hist'$: 
$\switchstrat{\strat}{\hist}{\strat'}(\hist') = \strat'(\hist')$ if~$\hist\prefix \hist'$;
and $\switchstrat{\strat}{\hist}{\strat'}(\hist')=\strat(\hist')$ otherwise.
% \[
% 	\switchstrat{\strat}{\hist}{\strat'}(\hist') =_{def}
% 		\begin{cases}
%                   \strat'(\hist') & \text{if } \hist
%                   \prefix \hist' \\
%                   \strat(\hist') & \text{otherwise.}
% 		\end{cases}
% \]

\subparagraph*{Probability measure and outcome of a profile} Given a
history $\hist$, we let $\cyl{\hist}=\{\rho\mid\hist\prefix\rho\}$ be
the \emph{cylinder} of $\hist$. To each strategy profile $\profile$,
we associate a probability measure $\Prob_{\profile}$ on certain sets
of runs. First, for a history $\hist$, we define
$\Prob_\profile(\cyl{\hist})$ inductively on the length of $\hist$:
$\Prob_\profile(\cyl{\vinit})=1$, and
$\Prob_{\profile}(\cyl{\hist' s'})=
\Prob_{\profile}(\cyl{\hist'})\cdot
\transfunrand(\last{\hist'},\profile(\hist'))(s')$when $|\hist|>1$
and $\hist=\hist's'$.  Based on this definition, we can extend the
definition of $\Prob_\profile$ to any Borel set of runs on
cylinders. In particular, the function $\Prob_\profile$ is
well-defined for all $\omega$-regular sets of runs, that we will
consider in this paper~\cite{vardi1985automatic}.  We extend the
$\prefixoutc$ notation and let $\supphist{\profile}$ be the set of
histories $\hist$ such that $\Prob_{\profile}(\cyl{\hist})>0$.  Given
a profile $\profile$ we denote by $\outcome{\profile}$ the set of runs
$\rho$ s.t. all prefixes $\hist$ of $\rho$ belong to
$\supphist{\profile}$. In particular,
$\Prob_{\profile}(\outcome{\profile})=1$.  Note that when $\profile$
is composed of Dirac strategies then $\outcome{\profile}$ is a
singleton.  The outcome (set of histories) of a strategy
$\strat\in\rectset_p$, denoted by $\outcome{\strat}$
($\prefixoutc(\strat)$), is the union of outcomes (set of histories,
respectively) of profile $\profile$ s.t. $\profile_p=\strat$.

\subparagraph*{Winning conditions} To determine the gain of all
players in the game $\game$, we define \emph{winning conditions} that
can be interpreted with two kinds of semantics denoted by the symbols
$\sure$ for the \emph{sure semantics} or and $\asure$ for the
\emph{almost sure semantics}.  A winning condition $\Phi$ is a subset
of $\Runs(\game)$ called \emph{winning runs}.  From now on, we assume
that concurrent games are equipped with a function $\Phi$, called the
winning condition, and mapping all players $p\in\players$ to a winning
condition $\Phi(p)$. 
A profile $\profile$ is $\asure$-winning for
$\Phi(p)$ if $\Prob_{\profile}( \Phi)=1$ which we write
$\game,\profile\rmodels{\Phi(p)}$.  A profile $\profile$ is
$\sure$-winning for $\Phi(p)$ if
$\outcome{\game,\profile}\subseteq\Phi(p)$
%\remg{There are no $\game$
%  in the definition. Same for $\prefixoutc(\game,\profile)$.}\ 
which
we write $\game,\profile\smodels{\Phi(p)}$.  Note that when $\profile$
is Dirac, the two semantics coincide:
$\game,\profile\smodels{\Phi(p)}$ iff
$\game,\profile\rmodels{\Phi(p)}$.  The profile $\profile$ is
$\asure$-winning from $\hist$ if $\hist\in\prefixoutc(\game,\profile)$
and
$\Prob_{\profile}( \Phi(p) \mid \cyl{\hist})=\Prob_{\profile}( \Phi(p)
\cap\cyl{\hist})/\Prob_{\profile}(\cyl{\hist}) =1$ which we denote
$\game,\profile\rmodelsh{\hist}{\Phi(p)}$.
The profile $\profile$ is winning for the sure semantics from $\hist$ if
$\{ \rho \in \outc(\game,\profile) \mid h \prefix \rho\} \subseteq \Phi(p)$,
%$\hist\in\prefixoutc(\game,\profile)$ and $\profile\smodels{\Phi(p)}$
%\todo{???}
%\todo{?????}  
which we denote $\game,\profile\smodelsh{\hist}{\Phi(p)}$.  We often
omit $\game$ in notations when clear from the context.  Most of our
definitions and results hold for both semantics and we often state
them using the symbol $\star\in\{\sure,\asure\}$ as in the following
definition.  Given a semantics $\star\in\{\sure,\asure\}$, a strategy
$\strat$ for player $p$ (from a history $\hist$) is called
$\star$-winning for player $p$ if for every $\tau\in\rectset_{-p}$,
the profile $(\strat,\tau)$ is $\star$-winning for player $p$ (from
$\hist$).  Note that a strategy $\strat$ for player $p$ is
$\sure$-winning iff $\outcome{\strat}\subseteq \Phi(p)$.  We often
describe winning conditions using standard linear temporal operators
$\Box$ and~$\Diamond$; \textit{e.g.} $\Box\Diamond S$ means the set of
runs that visit infinitely often~$S$. See~\cite{BK2008} for a formal
definition.

A winning condition~$\Phi(p)$ is \emph{prefix-independent} if for
all~$s_1s_2\ldots \in \Phi(p)$, and all~$i\geq 1$:
$s_is_{i+1}\ldots \in \Phi(p)$.  When $\Phi(p)$ contains all runs that
do not visit some designated set $\mathsf{Bad}_p\subseteq \states$ of
states, we say that $\Phi(p)$ is a \emph{safety condition}. A safety
game is a game whose winning condition $\Phi$ is such that $\Phi(p)$
is a safety condition for all players $p$. Without loss of generality,
we assume that safety games are so-called \emph{simple} safety games:
a safety game
$(\states, \Sigma,\vinit, (\choicep{p})_{p\in\players}, \transfun)$ is
\emph{simple} iff for all players $p$, for all $s\in \states$:
$s\in\mathsf{Bad}_p$ implies that no $s'\not\in\mathsf{Bad}_p$ is
reachable from $s$. That is, once the safety condition is violated,
then it remains violated forever at all future histories.

We note the following property of winning strategies.
\begin{lemma}\label{lem:winsforeveryhist}
  Given $\star\in\{\sure,\asure\}$, $\profile\starmodels{\Phi(p)}$ iff
  $\profile\starmodelsh{\hist}{\Phi(p)}$ for every
  $\hist\in \supphist{\profile}$.
\end{lemma}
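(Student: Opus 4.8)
The plan is to establish the equivalence separately for each semantics $\star\in\{\sure,\asure\}$, and in each case to split along the two implications. I expect the right-to-left implication to be immediate and essentially semantics-independent, while the left-to-right implication carries what little technical content there is; for the almost-sure semantics this content is a short conditional-probability computation.

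For the right-to-left implication I would instantiate the hypothesis at the single history $\hist=\vinit$. Since $\Prob_\profile(\cyl{\vinit})=1>0$ by definition of $\Prob_\profile$, we have $\vinit\in\supphist{\profile}$, so the hypothesis applies. Moreover every run of $\game$ extends $\vinit$, so $\cyl{\vinit}$ has full measure and contains $\outcome{\profile}$. Consequently, winning from $\vinit$ collapses to winning outright: for $\star=\sure$ the set $\{\run\in\outcome{\profile}\mid\vinit\prefix\run\}$ equals $\outcome{\profile}$, so $\profile\smodelsh{\vinit}{\Phi(p)}$ is exactly $\outcome{\profile}\subseteq\Phi(p)$; for $\star=\asure$, conditioning on the full-measure event $\cyl{\vinit}$ leaves probabilities unchanged, so $\profile\rmodelsh{\vinit}{\Phi(p)}$ is exactly $\Prob_\profile(\Phi(p))=1$. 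Either way $\profile\starmodels{\Phi(p)}$ follows.

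For the left-to-right implication I would fix an arbitrary $\hist\in\supphist{\profile}$. In the sure case the claim is a one-line inclusion: $\{\run\in\outcome{\profile}\mid\hist\prefix\run\}\subseteq\outcome{\profile}\subseteq\Phi(p)$, which is precisely $\profile\smodelsh{\hist}{\Phi(p)}$. In the almost-sure case, $\Prob_\profile(\Phi(p))=1$ forces the complement $\Runs\setminus\Phi(p)$ to be a $\Prob_\profile$-null set; hence $\Prob_\profile((\Runs\setminus\Phi(p))\cap\cyl{\hist})\le\Prob_\profile(\Runs\setminus\Phi(p))=0$, and since $\Prob_\profile(\cyl{\hist})>0$ we get $\Prob_\profile(\Phi(p)\mid\cyl{\hist})=\Prob_\profile(\Phi(p)\cap\cyl{\hist})/\Prob_\profile(\cyl{\hist})=1$, i.e. $\profile\rmodelsh{\hist}{\Phi(p)}$.

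There is no genuine obstacle here: all sets involved ($\cyl{\hist}$, $\Phi(p)$, and their intersections) are measurable, because $\Phi(p)$ is $\omega$-regular and the cylinders generate the $\sigma$-algebra on which $\Prob_\profile$ is defined, so every conditional probability above is well defined. The only points deserving explicit mention are that $\vinit\in\supphist{\profile}$ (so the right-to-left direction may legitimately instantiate the hypothesis), and the monotonicity step $\Prob_\profile((\Runs\setminus\Phi(p))\cap\cyl{\hist})\le\Prob_\profile(\Runs\setminus\Phi(p))$, which is what turns full measure of $\Phi(p)$ into full conditional measure on every positive-probability cylinder.
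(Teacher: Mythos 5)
Your proof is correct and takes essentially the same approach as the paper: the only substantive content in both is the one-step measure-monotonicity argument for the almost-sure, left-to-right direction (you phrase it directly via null sets, the paper by contraposition, which is an immaterial difference), with the other directions being immediate. Your write-up is in fact more explicit than the paper's, which dismisses the sure semantics and the right-to-left implication as trivial and handles only the non-trivial case.
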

\begin{proof}
  The only non-trivial implication is: if $\profile\rmodels{\Phi(p)}$
  then $\profile\rmodelsh{\hist}{\Phi(p)}$ for every
  $\hist\in \supphist{\profile}$ which we show by contraposition.
  Assume there exists $\hist\in \supphist{\profile}$ such that
  $\profile\not \rmodelsh{\hist}{\Phi(p)}$. This means that
  $\Prob_{\profile}(\cyl{\hist})>0$ and
  $\Prob_{\profile}( \Phi(p)
  \cap\cyl{\hist})/\Prob_{\profile}(\cyl{\hist})<1$.  It follows that
  $\profile\not \rmodelsh{\hist}{\Phi(p)}$.
\end{proof}

\begin{example}\label{example:running}
  Let us consider three player-$1$ strategies in Fig.~\ref{ex:runex}.
  \begin{inparaenum}[(i)]
  \item $\strat_1$ is any strategy that plays $a$ in $s_0$;
  \item $\strat_2$ is any strategy that plays $b$ in $s_0$, $d$ in
    $s_1$ and $f$ in $s_2$; and
  \item $\strat_3$ is any strategy that plays $b$ in $s_0$, $d$ in
    $s_1$, and $0.5f+0.5g$ in $s_2$.
  \end{inparaenum}
  Clearly, $\strat_1$ never allows one to reach $\target$ while some
  runs respecting $\strat_2$ and $\strat_3$ do (remember that there is
  no $\star$-winning strategy in this game).  We will see later that
  the best choice of player $1$ (among $\strat_2$, $\strat_3$) depends
  on the semantics we consider.  In the almost-sure semantics,
  $\strat_3$ is `better' for player~$1$, because $\strat_3$ is an
  $\asure$-winning strategy from all histories ending in $s_2$, while
  $\strat_2$ is not.  On the other hand, in the sure semantics,
  playing $\strat_2$ is 'better' for player $1$ than
  $\strat_3$. Indeed, for all player-$2$ strategies $\tau$, either
  $\outcome{\strat_3,\tau}$ contains only runs that do not reach
  $s_2$, or $\outcome{\strat_3,\tau}$ contains at least a run that
  reaches $s_2$, but, in this case, it also contains a run of the form
  $\hist s_2^\omega$ (because, intuitively, player $1$ plays
  \emph{both} $f$ and $g$ from $s_2$). So, $\strat_3$ is not winning
  against any $\tau$, while $\strat_2$ wins at least against a player
  $2$ strategy that plays $b'$ in $s_0$, $d'$ in $s_1$ and $f'$ in
  $s_2$. We formalise these intuitions in the next section.
\end{example}

%%% Local Variables:
%%% mode: latex
%%% TeX-master: "main"
%%% End:

\section{Admissibility} \label{sec:admissibility} In this section, we
define the central notion of the paper: admissibility
\cite{Berwanger07,BrenguierRS17}. Intuitively, a strategy is
admissible when it plays `as well as possible'. Hence the definition
of admissible strategies is based on a notion of domination between
strategies: a strategy $\strat'$ dominates another strategy $\strat$
when $\strat'$ wins every time $\strat$ does. Obviously, players have
no interest in playing dominated strategies, hence admissible
strategies are those that are not dominated. Apart from these
(classical) definitions, we characterise admissible strategies as
those that satisfy two weaker notions: they must be both
\emph{strongly cooperative optimal} and play only
\emph{locally-admissible moves}. Finally, we discuss important
characteristics of admissible strategies that will enable us to
perform assume-admissible synthesis (see Section~\ref{sec:assume-admiss-synth}).

% Throughout this section, we assume that the players can only select
% strategies from a set $\rectset$, which we assume to be
% \emph{rectangular}, i.e., $\rectset$ is of the form
% $\rectset_1\times\cdots \times\rectset_n$ where $\rectset_i$ is a set
% of strategies for player $i$, for all $i$.  

In this section, we fix a game $\game$, a player $p$, and, following
our previous conventions, we denote by $\rectset_{-p}$ the set
$\{\profile_{-p}\mid \profile\in\rectset\}$.

\subparagraph*{Admissible strategies}
We first recall the classical notion of \emph{admissible strategy}
\cite{Berwanger07,adam2008admissibility}. Given two
strategies $\strat,\strat'\in\rectset_p$, 
we say
that $\strat$ is $\star$-\emph{weakly dominated} by $\strat'$, denoted
$\strat \bothdomstrateq{\star} \strat'$, if for all
$\tau\in\rectset_{-p}$: $(\strat, \tau) \starmodels{\Phi(p)}$
\emph{implies} $(\strat', \tau) \starmodels{\Phi(p)}$.  This indeed
captures the idea that $\strat'$ is \emph{not worse} that $\strat$,
because it wins (for $p$) every time $\strat$ does.  Note that
$\bothdomstrateq{\star}$ is not anti-symmetric, hence we note
$\strat\botheqadmstrat{\star} \strat'$ when $\strat$ and $\strat'$ are
equivalent, i.e.  $\strat\bothdomstrateq{\star} \strat'$ and
$\strat'\bothdomstrateq{\star} \strat$.  In other words
$\strat\botheqadmstrat{\star}\strat'$ iff for every
$\tau\in \rectset_{-p}$,
$(\strat,\tau) \starmodels{ \Phi(p)}\Leftrightarrow(\strat',\tau)
\starmodels{\Phi(p)}$.  When $\strat \bothdomstrateq{\star} \strat'$
but $\strat' \not \bothdomstrateq{\star} \strat$ we say that $\strat$
is \emph{$\star$-dominated} by $\strat'$, and we write
$\strat \bothdomstrat{\star} \strat'$. Observe that
$\strat \bothdomstrat{\star} \strat'$ holds if and only if
$\strat \bothdomstrateq{\star} \strat'$ and there exists at least one
$ \tau\in\rectset_{-p}$, such that
$(\strat, \tau) \not \starmodels{\Phi(p)}$ and
$(\strat', \tau) \starmodels{\Phi(p)}$.  That is, $\strat'$ is now
\emph{strictly better} than $\strat$.  Then, \emph{a strategy $\strat$
  is \textbf{$\star$-admissible} iff there is no strategy $\strat'$
  s.t. $\strat \bothdomstrat{\star} \strat'$}, i.e., $\strat$ is
$\star$-admissible iff it is not $\star$-dominated.

\begin{example} \label{example:admissible} Let us continue our running
  example, by formalising the intuitions we have sketched in
  Example~\ref{example:running}. Since $\strat_1$ does not allow to
  reach the target, while some runs respecting $\strat_2$ and
  $\strat_3$ do, we have: $\strat_1\bothdomstrat{\star} \strat_2$ and
  $\strat_1\bothdomstrat{\star} \strat_3$. Moreover,
  $\strat_2\bothdomstrat{\asure} \strat_3$ because $\strat_3$ is
  $\asure$-winning from any history that ends in $s_2$ while
  $\strat_2$ is not because it does not $\asure$-win against a
  player~$2$ strategy that would always play $g'$ in $s_2$ (and both
  strategies behave the same way in $s_0$ and $s_1$).
%\remg{The initial
%    example said that $\sigma_3$ is `winning' but this is not correct,
%    because player $1$ can never force to leave $s_1$ right?}.  
  On the other hand, $\strat_3\bothdomstrat{\sure} \strat_2$. In fact,
  $\outcome{\strat_2}\subsetneq\outcome{\strat_3}$, so, for all
  $\tau \in \rectset_2$, whenever $(\strat_3,\tau) \smodels \Phi(1)$,
  all runs of $\outcome{\strat_3}$ reach $\target$, and we also have
  $(\strat_2, \tau) \smodels \Phi(1)$.  Moreover, for the strategy
  $\tau\in\rectset_{2}$ that plays $f'$ at~$s_2$, profile
  $(\strat_2,\tau)$ is
%  Moreover, strategy $\tau\in\rectset_{2}$ that plays $g'$ 
%  the first time $s_2$ is reached and $f'$ afterwards makes
  $\sure$-winning but not
  $(\strat_3,\tau)$. We will see later that $\strat_3$ is
  $\asure$-admissible and $\strat_2$ is $\sure$-admissible.
\end{example}

\subparagraph*{Values of histories} Before we discuss strongly
cooperative optimal and locally admissible strategies, we associate
\emph{values} to histories. Let $\hist$ be a history, and $\strat$ be
a strategy of player~$p$. Then, the \emph{value of $\hist$ w.r.t.
  $\strat$ for semantics} $\star\in\{\sure,\asure\}$ is defined as follows.
$\valhstrat{\star}{\hist}{\strat}= 1$ if $\strat$ is $\star$-winning from $\hist$;
$\valhstrat{\star}{\hist}{\strat}= 0$ if there are $\tau\in\rectset_{-p}$ and
$\tau'\in\rectset_{-p}$ s.t. 
$(\strat,\tau)\starmodelsh{\hist}{\Phi(p)}$, and $(\strat,\tau')\not\starmodelsh{\hist}{\Phi(p)}$;
and~$-1$ otherwise.

% \[
%   \valhstrat{\star}{\hist}{\strat}=
%   \begin{cases}
%     1&\textrm{if $\strat$ is $\star$-winning from $\hist$}\\
%     0&\textrm{if there are }\tau\in\rectset_{-p}\textrm{ and
%     }\tau'\in\rectset_{-p}\textrm{ s.t. }
%     (\strat,\tau)\starmodelsh{\hist}{\Phi(p)}\textrm{ and
%     }(\strat,\tau')\not\starmodelsh{\hist}{\Phi(p)}\\
%     -1&\textrm{otherwise.}
%   \end{cases}
% \]
% \[
%   \valhstrat{\star}{\hist}{\strat}=
%   \begin{cases}
%     1&\textrm{if $\strat$ is $\star$-winning from $\hist$}\\
%     0&\textrm{if there are }\tau\in\rectset_{-p}\textrm{ and
%     }\tau'\in\rectset_{-p}\textrm{ s.t. }
%     (\strat,\tau)\starmodelsh{\hist}{\Phi(p)}\textrm{ and
%     }(\strat,\tau')\not\starmodelsh{\hist}{\Phi(p)}\\
%     -1&\textrm{otherwise.}
%   \end{cases}
% \]

Value $\valhstrat{\star}{\hist}{\strat}=1$ corresponds to the case
where $\strat$ is $\star$-winning for player $p$ from $\hist$ (thus,
against all possible strategies in $\rectset_{-p}$). When
$\valhstrat{\star}{\hist}{\strat}=0$, $\strat$ is \emph{not}
$\star$-winning from $\hist$ (because of $\tau'$ in the definition),
but the other players can still help $p$ to reach his objective (by
playing some $\tau$ s.t. $(\strat,\tau)\starmodelsh{\hist}{\Phi(p)}$,
which exists by definition). Last,
$\valhstrat{\star}{\hist}{\strat}=-1$ when there is no hope for $p$ to
$\star$-win, even with the collaboration of the other players. In this
case, there is no $\tau$
s.t. $(\strat,\tau)\starmodelsh{\hist}{\Phi(p)}$. Hence, having
$\valhstrat{\star}{\hist}{\strat}=-1$ is stronger than saying that
$\strat$ is not winning---when $\strat$ is not winning, we could have
$\valhstrat{\star}{\hist}{\strat}=0$ as well.

We define the value of a history $\hist$
for player $p$ as the best value he can achieve with his different
strategies:
%\begin{align*}
  $\valh{\star}{\hist}{p} =\max_{\strat\in
                       \rectset_{p}}\valhstrat{\star}{\hist}{\strat}$.
%\end{align*}
Last, for
$v\in\{-1, 0,1\}$, let $\bothvalpred{\star}{p}{v}$ be the set of
histories $\hist$ s.t. $\valh{\star}{\hist}{p}=v$.

\subparagraph*{Strongly cooperative optimal strategies} We are now
ready to define \emph{strongly cooperative optimal} (SCO) strategies.
Recall that, in the classical setting of turn-based games, admissible
strategies are exactly the SCO strategies \cite{BrenguierRS17}. We
will see that this condition is still necessary but not sufficient in
the concurrent setting.

A strategy $\strat$ of Player~$p$ is $\star$-SCO at $\hist$ iff
$\valhstrat{\star}{\hist}{\strat}=\valh{\star}{\hist}{p}$; and
$\strat$ is $\star$-SCO iff it is $\star$-SCO at all
$\hist\in\prefixoutc(\strat)$. Intuitively, when $\strat$ is a
$\star$-SCO strategy of Player~$p$, the following should hold:
\begin{inparaenum}[(i)]
\item if $p$ has a $\star$-winning strategy from $\hist$ (i.e.
  $\valh{\star}{\hist}{p}=1$), then, $\sigma$ should be
  $\star$-winning (i.e. $\valhstrat{\star}{\hist}{\strat}=1$);
  and
\item otherwise if $p$ has no $\star$-winning strategy from $h$ but
  still has the opportunity to $\star$-win with the help of other
  players (hence $\valh{\star}{\hist}{p}=0$), then, $\sigma$
  should enable the other players to help $p$
  fulfil his objective (i.e. $\valhstrat{\star}{\hist}{\strat}=0$).
\end{inparaenum}
Observe that when $\valh{\star}{\hist}{p}=-1$, no continuation of
$\hist$ is $\star$-winning for $p$, so
$\valhstrat{\star}{\hist}{\strat}=-1$ for all strategies $\sigma$.

\begin{example}\label{example:sco}
  Consider again the example in \figurename~\ref{ex:runex}. For the
  almost-sure semantics, we have
  $\bothvalpred{\asure}{p}{1}=\big\{\hist\mid
  \last{\hist}\in\{s_2,\target\}\big\}$, and
  $\bothvalpred{\asure}{p}{0}=\big\{\hist\mid
  \last{\hist}\in\{s_0,s_1\}\big\}$. For the sure semantics, we have:
  $\bothvalpred{\sure}{1}{1}=\{\hist\mid\last{\hist}=\target\}$, and
  $\bothvalpred{\sure}{1}{0}=\{\hist\mid\last{\hist}\neq
  \target\}$. Let us Consider again the three strategies $\strat_1$,
  $\strat_2$ and $\strat_3$ from Example~\ref{example:running}. We see
  that $\strat_2$ is $\sure$-SCO but it is not $\asure$-SCO because,
  for all profiles $\hist$ ending in $s_2$:
  $\valh{\asure}{\hist}{\strat_2}=0$ while
  $\hist\in \bothvalpred{\asure}{1}{1}$. On the other hand, $\strat_3$
  is $\asure$-SCO; but it is not $\sure$-SCO. Indeed, one can check
  that, for all strategies $\tau\in\rectset_2$: if
  $\outcome{\strat_3,\tau}$ contains a run reaching $\target$, then it
  also contains a run that cycles in $s_2$. So, for all such
  strategies $\tau$, $\outcome{\strat_3,\tau}\not\smodels \win{1}$,
  hence $\valh{\sure}{\hist}{\strat_3}=-1$ for all histories that end
  in $s_2$; while~$\valh{\sure}{\hist}{p}=0$ since
  $\valh{\sure}{\hist}{\sigma'}=0$ for all Dirac strategies~$\sigma'$.

  Next, let us build a strategy $\strat_3'$ that is $\asure$-dominated
  by $\strat_3$ (hence, not $\asure$-admissible), but $\asure$-SCO. We let
  $\strat_3'$ play as $\strat_3$ except that $\strat_3'$ plays $c$ the
  first time $s_1$ is visited (hence ensuring that the self-loop on
  $s_1$ will be taken after the first visit to $s_1$). Now, $\strat_3$
  is $\asure$-dominated by $\strat_3'$, because
  \begin{inparaenum}[(i)]
  \item $\strat_3$ $\asure$-wins every time $\strat_3'$ does; but
  \item $\strat_3'$ does not $\asure$-win against the player $2$
    strategy $\tau$ that plays $d'$ only when $s_1$ is visited for the
    first time, while $\strat_3$ $\asure$-wins against $\tau$.
  \end{inparaenum}
  However, $\strat_3'$ is SCO because playing $c$ keeps the value of
  the history equal to $0=\valh{\asure}{\hist}{1}$ (intuitively, playing
  $c$ once does not prevent the other players from helping in the
  future).  As similar example can be built in the $\sure$
  semantics. Thus, \textbf{there are $\star$-SCO strategies which are
    not admissible}, so, being $\star$-SCO is not a sufficient
  criterion for admissibility.

  % Next, consider strategy $\strat_3'$ which plays as $\strat_3$ except
  % that it plays $c$ the first time $s_1$ is visited (hence ensuring
  % that the self-loop on $s_1$ will be taken after the first visit to
  % $s_1$). This strategy is not $\asure$-admissible.  In fact, against
  % all player $2$ strategies $\tau$ that never play~$d'$ in~$s_1$, the
  % play is~$\sure$-losing against both~$\strat_3$
  % and~$\strat_3'$. Whenever~$\tau$ plays~$d'$, $(\strat_3',\tau)$ is
  % $\asure$-winning for player~$1$, which shows weak dominance.
  % Moreover, for the particular strategy~$\tau$ that picks the sequence
  % $d'd^\omega$ of moves from~$s_1$, $\strat_3$ is $\asure$-winning
  % and~$\strat_3'$ is $\asure$-losing; $\strat_3'$ is thus dominated.
  % However, $\strat_3'$ is SCO
  % because, intuitively, player $1$ playing $c$ only once in $s_1$ does
  % not prevent player $2$ from helping him reach
  % $\target$. Thus, \textbf{there are SCO strategies
  %   which are not admissible}, so, being SCO is not a sufficient
  % criterion for admissibility.
\end{example}

\subparagraph*{Locally admissible moves and strategies}
Let us now discuss another criterion for admissibility, which is more
\emph{local} in the sense that it is based on a domination between
\emph{moves} available to each player after a given history. Let
$\hist$ be a history, and let $\alpha$ and $\alpha'$ be two randomised
moves in $\distr{\choicep{p}}$. We say that $\alpha$ is
$\star$-\emph{weakly dominated} at $\hist$ by $\alpha'$ (denoted
$\alpha\bothdommoveeq{\star}{\hist} \alpha'$) iff for
all~$\strat \in \rectset_p$ such that $h \in \prefixoutcome{\strat}$
and $\strat(h) = \alpha$, there exists~$\strat' \in \rectset_p$ s.t.
$\strat'(h) = \alpha'$ and $\strat \bothdomstrateq{\star} \strat'$.
Observe that the relation $\bothdommoveeq{\star}{\hist}$ is not
anti-symmetric.  We let $\botheqadm{\star}{\hist}$ be the equivalence
relation s.t. $\alpha\botheqadm{\star}{\hist}\beta$ iff
$\alpha\bothdommoveeq{\star}{\hist}\beta$ and
$\beta\bothdommoveeq{\star}{\hist}\alpha$.  When
$\alpha \bothdommoveeq{\star}{\hist} \alpha'$ but
$\alpha' \not \bothdommoveeq{\star}{\hist} \alpha$ we say that
$\alpha$ is $\star$-\emph{dominated} at $\hist$ by $\alpha'$ and
denote this by $\alpha \bothdommove{\star}{\hist} \alpha'$.  When a
randomised move $\alpha$ is not $\star$-dominated at $\hist$, we say
that $\alpha$ is \emph{$\star$-admissible} at $\hist$.  This allows us
to define a more local notion of dominated strategy: \emph{a strategy
  $\strat$ of player $p$ is \textbf{$\star$-locally-admissible} (LA)
  if $\strat(\hist)$ is a $\star$-admissible move at $h$, for all
  histories $\hist$}.

\begin{example}\label{example:locally-admissible}
   Consider the Dirac move $f$ and the non-Dirac move $0.5f+0.5g$
  played from $s_2$ in the example in \figurename~\ref{ex:runex}. One
  can check that $0.5 f +0.5 g \bothdommove{\sure}{s_2}f$. Indeed,
  consider a strategy $\strat$ s.t. $\strat(\hist)=0.5 f + 0.5 g$ for
  some $\hist$ with $\last{\hist}=s_2$. Then, playing $\strat(\hist)$
  from $\hist$ will never allow Player 1 to reach $\target$ \emph{surely}
  at the next step, whatever Player 2 plays; while playing, for instance, $f$ (Dirac
  move) ensures player 1 to reach $\target$ surely at the
  next step, against a Player-2 strategy that plays $f'$.
  Thus, $\sigma_2$ is $\sure$-LA but $\sigma_3$ is not.
%  Accordingly, $\sigma_2$ is $\sure$-LA but not~$\asure$-LA,
%  while~$\sigma_3$ is $\asure$-LA but not $\sure$-LA.
  
  On the other hand, after every randomised move played in state
  $s_2$, the updated state is $s_2$ or $s_3$ from which
  $\asure$-winning strategies exist, thus
  $f\botheqadm{\asure}{\hist}g \botheqadm{\asure}{\hist} \lambda f
  +(1-\lambda) g$ for all $\lambda \in [0,1]$ and all histories
  $\hist$ s.t. $\last{\hist}=s_2$ (so, in particular,
  $\lambda f +(1-\lambda) g\bothdommoveeq{\asure}{\hist} f$ and
  $\lambda f +(1-\lambda) g\bothdommoveeq{\asure}{\hist} g$).  It
  follows that both~$\sigma_2$ and~$\sigma_3$ are~$\asure$-LA.
  However, in the long run, player 1 needs to play
  $\lambda f +(1-\lambda) g$, with $\lambda\in(0,1)$, infinitely often
  in order to $\asure$-win. In fact, $\sigma_3$ is $\asure$-winning
  from~$s_2$ while $\strat_2$ is not. Thus, \textbf{there are
    $\star$-LA strategies which are not admissible}, so being
  $\star$-LA is not a sufficient criterion for $\star$-admissibility.

  % Finally, observe that $a$ and $b$ are
  % $\bothdommoveeq{\star}{\hist}$-incomparable (in both semantics) for
  % all $\hist$ with $\last{\hist}=s_0$. This might seem
  % counter-intuitive, as playing $b$ at all times from $s_0$ seems to
  % be a better choice fro player $1$ as it allows to progress towards
  % the goal $\target$; while $a$ does not. However, always playing $b$ from
  % $s_0$ is clearly a bad choice against a player 2 strategy that
  % always plays $b'$ from $s_0$, but plays $d'$ from $s_1$ only when
  % $s_0$ has been visited twice in the past. Hence, $\strat_1$ (defined
  % in Example~\ref{example:running}) is $\star$-LA (but not
  % $\star$-SCO) and also not admissible. \emph{This shows that
  %   $\star$-LA is not a sufficient criterion for
  %   $\star$-admissibility}.

  % \remg{Say that $\strat_2$ is $\sure$-LA and $\strat_3$ is
  %   $\asure$-LA.}
  % $c\bothdommove{\asure}{s_1} d$ but $c$ and $d$ are
  % both $\sure$-admissible.
\end{example}

We close this section by several lemmata that allow us to better
characterise the notion of LA strategies. First, we observe that,
while randomisation might be necessary for winning in certain
concurrent games
(for example, in
  \figurename~\ref{ex:runex}, no Dirac move allows player 1 to reach
  $\target$ \emph{surely} from $s_2$, while playing repeatedly $f$ and $g$ with equal
  probability ensures to reach $\target$ with probability 1)
% \footnote{For example, in the game in
%   \figurename~\ref{ex:runex}, no Dirac move allows player 1 to reach
%   $\target$ \emph{surely} from $s_2$, while playing repeatedly $f$ and $g$ with equal
%   probability ensures to reach $\target$ with probability 1.},
randomisation is useless when a player wants to play only locally
admissible moves. This is shown by the next Lemma
(point~(\ref{item:4})), saying that, \emph{if a randomised move
  $\alpha$ plays some action $a$ with some positive probability, then
  $\alpha$ is dominated by the Dirac move $a$}. However, this does
not immediately allow us to characterise admissible moves: some Dirac
moves could be dominated (hence non-admissible), and some non-Dirac
moves could be admissible too. Points~(\ref{item:3})
and~(\ref{it3:characadmmoves}) elucidate this: \emph{among Dirac
  moves, the non-dominated ones are admissible}, and \emph{a non-Dirac
  move is admissible iff all the Dirac moves that occur in its support
  are admissible and equivalent to each other}.

\begin{lemma}\label{lem:Diracdominates}\label{lem:characadmmoves}
  For all histories $\hist$ and all randomised moves $\alpha$:
  \begin{enumerate}[(i)]
  \item \label{item:4} For all $a\in \supp(\alpha)$:
    $\alpha\bothdommoveeq{\star}{\hist}a$;
  \item \label{item:3} Dirac moves that are not $\star$-dominated at
    $\hist$ by another Dirac move are admissible;
  \item\label{it3:characadmmoves} A move $\alpha$ is $\star$-LA at
    $\hist$ iff, for all $a\in\supp(\alpha)$:
    \begin{inparaenum}[(1)]
    \item 
      $a$ is $\star$-LA at $\hist$; and
    \item $a\botheqadm{\star}{\hist}b$ for all $b \in\supp(\alpha)$.
    \end{inparaenum}
  \end{enumerate}
\end{lemma}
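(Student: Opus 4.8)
The plan is to prove point~(\ref{item:4}) first, since it is the engine that drives the other two items, and then to derive (\ref{item:3}) and (\ref{it3:characadmmoves}) from it together with elementary properties of the preorders $\bothdomstrateq{\star}$ and $\bothdommoveeq{\star}{\hist}$. For point~(\ref{item:4}), fix $a\in\supp(\alpha)$ and a strategy $\strat$ with $\strat(\hist)=\alpha$ and $\hist\in\prefixoutcome{\strat}$. I would take as witness the \emph{local modification} $\strat'$ that agrees with $\strat$ everywhere except at $\hist$, where it plays the Dirac move $a$; since $\strat$ and $\strat'$ coincide on all proper prefixes of $\hist$, we still have $\hist\in\prefixoutcome{\strat'}$, and it remains to show $\strat\bothdomstrateq{\star}\strat'$. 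In the sure semantics this is immediate: committing to $a$ only prunes branches, so every prefix with positive probability under $(\strat',\tau)$ has positive probability under $(\strat,\tau)$ (because $a\in\supp(\alpha)$), whence $\outcome{\strat',\tau}\subseteq\outcome{\strat,\tau}\subseteq\Phi(p)$ for every $\tau$ with $(\strat,\tau)\starmodels{\Phi(p)}$. In the almost-sure semantics the technical heart is a decomposition along $\cyl{\hist}$: outside $\cyl{\hist}$ the two profiles induce the same measure (no decision is taken at $\hist$ on runs avoiding $\hist$), while the losing mass $\Prob_{(\strat,\tau)}(\cyl{\hist}\setminus\Phi(p))$, which is $0$, decomposes as $\Prob_{(\strat,\tau)}(\cyl{\hist})$ times a sum over the actions of $\supp(\alpha)$ of non-negative conditional terms; the term attached to $a$ therefore vanishes, and since $\alpha(a)>0$ this forces the conditional losing probability under $a$ to be $0$, which is exactly the conditional losing mass of $(\strat',\tau)$ given $\cyl{\hist}$. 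Hence $\Prob_{(\strat',\tau)}(\Phi(p))=1$. (If $\Prob_{(\strat,\tau)}(\cyl{\hist})=0$ the claim is trivial, as $\hist$ is then also unreached under $(\strat',\tau)$.)

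From (\ref{item:4}) I would derive (\ref{item:3}) by contraposition: assuming a Dirac move $a$ is \emph{not} admissible, i.e.\ $a\bothdommove{\star}{\hist}\beta$ for some $\beta$, I exhibit a Dirac move that dominates it. Let $\strat_*$ witness $\beta\not\bothdommoveeq{\star}{\hist}a$, so $\strat_*(\hist)=\beta$, $\hist\in\prefixoutcome{\strat_*}$, and no $a$-playing strategy dominates $\strat_*$. Applying (\ref{item:4}) to $\strat_*$, for each $b\in\supp(\beta)$ the local modification $\strat_*^{b}$ playing $b$ at $\hist$ satisfies $\strat_*\bothdomstrateq{\star}\strat_*^{b}$ and reaches $\hist$; since $\bothdomstrateq{\star}$ is transitive, no $a$-playing strategy dominates $\strat_*^{b}$ either (else it would dominate $\strat_*$), which gives $b\not\bothdommoveeq{\star}{\hist}a$. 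Combined with $a\bothdommoveeq{\star}{\hist}b$ (obtained by composing $a\bothdommoveeq{\star}{\hist}\beta$ with $\beta\bothdommoveeq{\star}{\hist}b$, the latter from (\ref{item:4})), this yields $a\bothdommove{\star}{\hist}b$, so $a$ is dominated by a Dirac move, as wanted.

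For (\ref{it3:characadmmoves}), the forward direction uses (\ref{item:4}) and the definition of admissibility: if $\alpha$ is $\star$-LA then $\alpha\bothdommoveeq{\star}{\hist}a$ for every $a\in\supp(\alpha)$, and were $a\not\bothdommoveeq{\star}{\hist}\alpha$ we would get $\alpha\bothdommove{\star}{\hist}a$, contradicting admissibility of $\alpha$; hence $a\botheqadm{\star}{\hist}\alpha$ for all $a\in\supp(\alpha)$. Transitivity of the induced equivalence then gives condition~(2), $a\botheqadm{\star}{\hist}b$ for all $a,b\in\supp(\alpha)$, and each $a$, being equivalent to the admissible move $\alpha$, is itself admissible, giving condition~(1). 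For the converse, assume every $a\in\supp(\alpha)$ is admissible and all mutually equivalent, fix a reference $a_0\in\supp(\alpha)$, and show $a_0\bothdommoveeq{\star}{\hist}\alpha$: from any $\strat$ playing $a_0$ at $\hist$ and reaching $\hist$, I build $\strat'$ playing $\alpha$ at $\hist$ that follows $\strat$ on the branches consistent with $a_0$ and, on each branch created by another support action $a$, follows a strategy equivalent to $\strat$ furnished by $a\botheqadm{\star}{\hist}a_0$. With (\ref{item:4}) this yields $a_0\botheqadm{\star}{\hist}\alpha$, and since $a_0$ is admissible so is $\alpha$.

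The step I expect to be the main obstacle is the \emph{composition} of move-domination used repeatedly above (to chain $a\bothdommoveeq{\star}{\hist}\beta\bothdommoveeq{\star}{\hist}b$, to obtain transitivity of $\botheqadm{\star}{\hist}$, and to transfer admissibility across equivalent moves): the relation $\bothdommoveeq{\star}{\hist}$ is defined through witnessing strategies that \emph{need not reach $\hist$}, so transitivity is not automatic, because when chaining two relations the intermediate witness may fail to satisfy the hypothesis $\hist\in\prefixoutcome{\cdot}$ required to apply the second one. The resolution I would pursue is a normal-form lemma: every witness of $\alpha\bothdommoveeq{\star}{\hist}\beta$ can be taken to agree with the given strategy on all proper prefixes of $\hist$, so that $\hist$ stays reached and only the behaviour inside $\cyl{\hist}$ changes — precisely the shape produced by the surgery of (\ref{item:4}). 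Once such witnesses are available, reachability of $\hist$ propagates along compositions and transitivity follows. This normal-form/reachability argument, together with the almost-sure bookkeeping in (\ref{item:4}) and the branch-combining construction for the converse of (\ref{it3:characadmmoves}), is where essentially all the work lies; the remaining manipulations are routine once they are in place.
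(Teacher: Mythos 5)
Your overall route is the paper's: you prove point~(\ref{item:4}) by the same local surgery at $\hist$ (your almost-sure bookkeeping is in fact more explicit than the paper's), point~(\ref{item:3}) by contraposition through (\ref{item:4}) (again more detailed than the paper's two-line argument, since you handle the strictness part explicitly), and both directions of (\ref{it3:characadmmoves}) as in the paper, including the branch-combining graft for the converse. You are also right that there is a composition subtlety that the paper glosses over: witnesses in the definition of $\bothdommoveeq{\star}{\hist}$ are not required to reach $\hist$, so the relation cannot simply be chained. The genuine gap is that your proposed resolution, the normal-form lemma, is \emph{false}. Counterexample: let player $p$ be the only player with real choices (all other players have singleton action sets, which are omitted below), with states $s_0,s_1,W,L$, where $W,L$ are absorbing and $\Phi(p)$ is the set of runs reaching $W$; let $\choicep{p}(s_0)=\{x,y\}$ with $\transfun(s_0,x)=s_1$, $\transfun(s_0,y)=W$, and $\choicep{p}(s_1)=\{u,v\}$ with $\transfun(s_1,u)=W$, $\transfun(s_1,v)=L$. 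Take $\hist=s_0s_1$. Then $u\bothdommoveeq{\star}{\hist}v$ holds: for any $\strat$ with $\strat(\hist)=u$ and $\hist\in\prefixoutcome{\strat}$, the strategy $\strat'$ that plays $y$ at $s_0$ and $v$ at $\hist$ wins surely against every opponent, so $\strat\bothdomstrateq{\star}\strat'$. Yet for the Dirac strategy $\strat$ that plays $x$ at $s_0$ and $u$ at $\hist$ (which wins surely), \emph{no} witness can agree with $\strat$ on the proper prefix $s_0$: such a strategy would reach $s_1$ surely, play $v$, reach $L$ surely, and hence could not weakly dominate the surely-winning $\strat$. So witnesses cannot in general be normalised to keep $\hist$ reached; the dominating move's advantage may be realised precisely by avoiding $\hist$.

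The repair is weaker than a normal-form lemma and is what the paper implicitly relies on: the surgery proving (\ref{item:4}) never uses the hypothesis $\hist\in\prefixoutcome{\strat}$, since for each $\tau$ one argues separately according to whether $\Prob_{(\strat,\tau)}(\cyl{\hist})$ is zero (the two measures then coincide) or positive (your conditional decomposition). Hence for \emph{any} strategy $\strat$ with $\strat(\hist)=\alpha$, reaching $\hist$ or not, and any $a\in\supp(\alpha)$, the local modification $\strat^{a}$ satisfies $\strat\bothdomstrateq{\star}\strat^{a}$. This strengthened form of (\ref{item:4}) is exactly what makes your chains sound: whenever the second factor of a chain is an instance of (\ref{item:4}) --- e.g.\ $a\bothdommoveeq{\star}{\hist}\beta$ followed by $\beta\bothdommoveeq{\star}{\hist}b$ with $b\in\supp(\beta)$ in your proof of (\ref{item:3}), or the transfer of admissibility between equivalent moves in (\ref{it3:characadmmoves}) --- you apply the surgery directly to the possibly $\hist$-avoiding intermediate witness and conclude by transitivity of $\bothdomstrateq{\star}$; chains whose \emph{first} factor is an instance of (\ref{item:4}) are unproblematic, since that witness is a local modification of a strategy reaching $\hist$ and hence itself reaches $\hist$. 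With this substitution in place of the normal-form lemma, your arguments for (\ref{item:4}), (\ref{item:3}) and the forward direction of (\ref{it3:characadmmoves}) are correct, and your converse of (\ref{it3:characadmmoves}) coincides with the paper's graft construction.
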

\begin{proof}
  \noindent\textbf{Proof of (\ref{item:4}):}
  Take a strategy $\strat \in \rectset_p$ such that
  $h \in \prefixoutcome{\strat}$ and $\strat(h) = \alpha$.  Define
  $\strat'$ the strategy that plays as $\strat$ except in $\hist$
  where it plays $a$ instead of $\alpha$.  One show that
  $\strat \bothdomstrateq{\star} \strat'$.  Consider
  any~$\tau \in \rectset_{-p}$ such
  that~$(\sigma,\tau) \rmodelsh{\hist}{\Phi(p)}$.  This means that
  $\Prob_{(\sigma,\tau)}(\Phi(p) \mid \cyl{\hist})=1$, so, in
  particular
  $\Prob_{(\sigma,\tau)}( \Phi(p) \mid
  \cyl{\hist\delta(s,a,\tau(h))})=1$, for all~$a \in
  \supp(\alpha)$. Since~$\sigma$ and~$\sigma'$ are identical in all
  other histories, including those extending~$h\delta(s,a,\tau(h))$,
  we deduce that~$(\sigma',\tau) \rmodelsh{\hist}{\Phi(p)}$.  The
  proof for the sure semantics is similar.

  \noindent\textbf{Proof of (\ref{item:3}):} 
  If a Dirac move $a$ is $\star$-dominated at $\hist$ by a move
  $\alpha'$ then by (\ref{item:4}) it is
  $\star$-dominated at $\hist$ by a Dirac move $a'\in \supp(\alpha')$.
  This shows (\ref{item:3}) by contraposition.

  \noindent\textbf{Proof of (\ref{it3:characadmmoves}):}
  Assume $\alpha$ is a $\star$-LA move at $\hist$. By 
  (\ref{item:4}), for every $a\in \supp(\alpha)$,
  $\alpha\bothdommoveeq{\star}{\hist} a$ and
  $\alpha\not \bothdommove{\star}{\hist} a$ (because $\alpha$ is a
  $\star$-LA move at $\hist$) so $\alpha \botheqadm{\star}{\hist}
  a$. Hence all the elements of $\supp(\alpha)$ are equivalent and
  $\star$-LA at $\hist$.

  Assume now that $\forall a \in\supp(\alpha)$, $a$ is $\star$-LA at
  $\hist$ and $\forall b \in\supp(\alpha)$,
  $a\botheqadm{\star}{\hist}b$.  We take $a \in\supp(\alpha)$. By
  Lemma \ref{lem:Diracdominates} it holds that
  $\alpha \bothdommoveeq{\star}{\strat} a$; so it remains to show that
  $a \bothdommoveeq{\star}{\hist} \alpha$. Let $\strat$ be such that
  $\strat(\hist)=a$. We construct a strategy $\strat'$ such that
  $\strat'(\hist)=\alpha$ and $\strat\bothdomstrateq{\star}\strat'$.  For
  every $b \botheqadm{\star}{\hist} a$, there exists a strategy
  $\strat_b$ such that $\strat_b(\hist)=b$ and
  $\strat\bothdomstrateq{\star}\strat_b$.  We construct $\strat'$ as
  follows. We start with $\strat'=\strat$, we set
  $\strat'(\hist)=\alpha$, for every $b\in\supp(\alpha)$ and for every
  $s'\in\succmove{s}{b}$, we do
  $\strat'\leftarrow \switchstrat{\strat'}{\hist s'}{\strat_b}$.  This
  shows that $a \bothdommoveeq{\star}{\hist} \alpha$.  We conclude
  that $\alpha$ is $\star$-LA at $\hist$.
\end{proof}

\begin{example}
  As we have seen in Example~\ref{example:locally-admissible},
  $0.5 f +0.5 g \bothdommove{\sure}{s_2}f$. Note that a strategy
  $\strat'$ s.t. $\strat'(\hist)= 0.5 f +0.5 g$ for all $\hist$ with
  $\last{\hist}=s_2$ has value $\valhstrat{\sure}{\hist}{\strat'}=-1$,
  while $\valh{\sure}{\hist}{1}=0$.
\end{example}
This example seems to suggest that the local dominance of two moves
coincide with the natural order on the values of histories that are
obtained when playing those moves (in other words
$x \bothdommove{\star}{\hist} y$ would hold iff the value of the
history obtained by playing $x$ is smaller than or equal to the value
obtained by playing $y$). This is not true for histories of value $0$:
we have seen that $a$ and $b$ are
$\bothdommoveeq{\star}{\hist}$-incomparable, yet playing $a$ or $b$
from $s_0$ yields a history with value $0$ in all cases (even when
$s_1$ is reached). The next Lemma gives a precise characterisation of
the dominance relation between Dirac moves in terms of values:

\begin{lemma}\label{lem:dominemove}
  For all players $p$, histories $\hist$ with $\last{\hist}=s$ and
  Dirac moves $a,b\in\choicep{p}(s)$:
  $a\bothdommoveeq{\star}{\hist} b$ if, and only if the following
  conditions hold for every $c\in \choicep{-p}(s)$ where we write
  $s_{(a,c)}=\transfun(s,(a,c))$ and $s_{(b,c)}=\transfun(s,(b,c))$:
  \begin{enumerate}[(i)]
  \item\label{item:1}
    $\valh{\star}{\hist s_{(a,c)}}{p}\leq \valh{\star}{\hist
      s_{(b,c)}}{p}$;
  \item\label{item:2} if
    $\valh{\star}{\hist s_{(a,c)}}{p}= \valh{\star}{\hist
      s_{(b,c)}}{p}=0$ then
    $s_{(a,c)}=s_{(b,c)}$. 
\end{enumerate}
\end{lemma}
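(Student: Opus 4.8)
The plan is to prove both implications by relating the definition of $\bothdommoveeq{\star}{\hist}$ to the behaviour of witnessing strategies and adversaries on the two successor histories $\hist s_{(a,c)}$ and $\hist s_{(b,c)}$. Throughout I will use the value characterisation that follows directly from the definitions: $\valh{\star}{\hist'}{p}=1$ iff $p$ has a $\star$-winning strategy from $\hist'$; $\valh{\star}{\hist'}{p}\geq 0$ iff some pair (strategy of $p$, adversary) is $\star$-winning from $\hist'$; and $\valh{\star}{\hist'}{p}=-1$ iff no such pair exists. I will also invoke Lemma~\ref{lem:winsforeveryhist} to reduce global winning to winning from every reachable history, and the elementary observation that $s_{(a,c)}=s_{(b,c)}$ forces $\hist s_{(a,c)}=\hist s_{(b,c)}$, so two successors with distinct values must be distinct states.

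For the \emph{only if} direction I argue by contraposition on each condition. Suppose some $c$ violates (\ref{item:1}), i.e.\ $\valh{\star}{\hist s_{(a,c)}}{p}>\valh{\star}{\hist s_{(b,c)}}{p}$; then the two successor states differ. I fix a strategy $\strat$ that reaches $\hist$ along a fixed path, plays $a$ at $\hist$, and from $\hist s_{(a,c)}$ follows a witness of its value (a $\star$-winning strategy when the value is $1$, a strategy that wins together with some helpful adversary when the value is $0$). Given any candidate dominator $\strat'$ with $\strat'(\hist)=b$, I build $\tau$ that reaches $\hist$ along that path, plays $c$ at $\hist$, cooperates on the $\hist s_{(a,c)}$ branch so that $(\strat,\tau)\starmodels{\Phi(p)}$, and---because $\hist s_{(b,c)}\neq\hist s_{(a,c)}$ lets $\tau$ react independently on the $b$-branch---plays adversarially from $\hist s_{(b,c)}$ so that $(\strat',\tau)$ is not $\star$-winning (possible since $\strat'$ is not winning from $\hist s_{(b,c)}$ when its value is $<1$, and cannot win at all from $\hist s_{(b,c)}$ when its value is $-1$). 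This refutes $\strat\bothdomstrateq{\star}\strat'$ for every $\strat'$, hence $a\not\bothdommoveeq{\star}{\hist} b$. A violation of (\ref{item:2}) (both values $0$ but $s_{(a,c)}\neq s_{(b,c)}$) is handled by the same template: $\strat$ wins on the $a$-branch with a helpful adversary, while $\tau$, reacting on the distinct $b$-branch, foils the value-$0$ strategy $\strat'$.

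For the \emph{if} direction, assuming (\ref{item:1}) and (\ref{item:2}), I construct from any $\strat$ with $\strat(\hist)=a$ a single dominator $\strat'$ with $\strat'(\hist)=b$: let $\strat'$ agree with $\strat$ off the extensions of $\hist$, play $b$ at $\hist$, and at each $s'\in\succmove{s}{b}$ set $\strat'\leftarrow\switchstrat{\strat'}{\hist s'}{w_{s'}}$, where $w_{s'}$ is a fixed $\star$-winning strategy from $\hist s'$ when $\valh{\star}{\hist s'}{p}=1$, and $w_{s'}=\strat$ when $\valh{\star}{\hist s'}{p}\leq 0$ (the choice is irrelevant when the value is $-1$). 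Here (\ref{item:2}) is what makes the copying sound: whenever $\strat$ wins on an $a$-branch $\hist s_{(a,c)}$ with $s_{(b,c)}=s'$, (\ref{item:1}) forces $\valh{\star}{\hist s_{(a,c)}}{p}=0$ and (\ref{item:2}) forces $s_{(a,c)}=s'$, so $\hist s_{(a,c)}=\hist s'$ and copying $\strat$ from $\hist s'$ reproduces the winning continuation, uniformly over all such $c$ even when several adversary moves collapse to the same $s'$. To verify domination, fix $\tau$ with $(\strat,\tau)\starmodels{\Phi(p)}$. Off $\cyl{\hist}$ the profiles $(\strat,\tau)$ and $(\strat',\tau)$ induce identical outcomes and sub-measures, so winning transfers verbatim. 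Conditioned on $\cyl{\hist}$, for every $c$ in the support of $\tau(\hist)$ the branch $\hist s_{(a,c)}$ is winning under $(\strat,\tau)$, so $\valh{\star}{\hist s_{(a,c)}}{p}\geq 0$, whence by (\ref{item:1}) $\valh{\star}{\hist s_{(b,c)}}{p}\geq 0$; if this value is $1$ then $w_{s'}$ secures the $b$-branch against every adversary, and if it is $0$ then the copying argument shows $(\strat',\tau)$ wins from $\hist s_{(b,c)}$ exactly as $(\strat,\tau)$ wins from $\hist s_{(a,c)}$. Combining the branch contributions (summing probabilities in the almost-sure case, taking the union of branch outcomes in the sure case) with the off-$\cyl{\hist}$ part and Lemma~\ref{lem:winsforeveryhist} yields $(\strat',\tau)\starmodels{\Phi(p)}$.

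The main obstacle I anticipate is the \emph{if} direction: the dominator $\strat'$ must be one strategy that works against \emph{every} $\tau$ simultaneously, so it cannot be tailored to $\tau$, and distinct adversary moves $c$ may lead to the same $b$-successor $s'$. Condition (\ref{item:2}) is precisely the hypothesis that resolves this, guaranteeing that all value-$0$ $a$-branches relevant to a given $s'$ coincide with $\hist s'$, which makes the blind rule ``copy $\strat$ from $\hist s'$'' simultaneously correct for all of them; the residual difficulty is the measure-theoretic bookkeeping of the almost-sure semantics, handled by decomposing the winning probability along $\cyl{\hist}$ and its complement.
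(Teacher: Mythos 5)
Your proposal is correct and follows essentially the same approach as the paper's proof: the only-if direction rests on the adversary's ability to cooperate on the $\hist s_{(a,c)}$ branch while spoiling the distinct $\hist s_{(b,c)}$ branch (the paper proves (i) directly by a winning-transfer argument and (ii) via the hybrid adversary $\switchstrat{\tau'}{\hist s_{(a,c)}}{\tau}$, which is your argument in direct rather than contrapositive form), and your if-direction construction of the dominator---play $b$ at $\hist$, splice winning strategies at value-$1$ successors, copy $\strat$ where (ii) forces $s_{(a,c)}=s_{(b,c)}$---is the paper's construction up to the cosmetic difference that the paper installs value-optimal continuations at successors in $\succmove{s}{b}\setminus\succmove{s}{a}$ and where $\strat$ under-performs. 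Note that both your proof and the paper's rely on the same implicit reading of the definition, namely that the candidate dominator $\strat'$ follows play into the subtree below $\hist$, so this is not a gap relative to the paper.
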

\begin{proof}
  \noindent\textbf{Direction $a\bothdommoveeq{\star}{\hist}b\Rightarrow$(\ref{item:1})$\wedge$(\ref{item:2}):}
  Assume that $a\bothdommoveeq{\star}{\hist}b$.  Let
  $\strat\in\rectset_{p}$ such that $\hist\in\prefixoutc(\strat)$ and
  $\strat(\hist)=a$ and strategy $\strat'\in\rectset_{p}$ such that
  $\strat'(\hist)=b$ and $\strat\bothdomstrateq{\star}\strat'$.

  \noindent\textbf{Proof of (\ref{item:1}):}
  We take
  $c\in\choicep{p}(s)$, %define $s_a=\transfun(s,a,c)$, $s_{(b,c)}=\transfun(s,b,c)$,
  and show that
  $\valh{\star}{\hist s_{(a,c)}}{p}\leq \valh{\star}{\hist
    s_{(b,c)}}{p}$.  For every $\tau\in\rectset_{-p}$, if
  $(\strat,\tau)\starmodels{\Phi(p)}$ then
  $(\strat',\tau)\starmodels{\Phi(p)}$.  In particular, if
  $(\strat,\tau)\starmodelsh{\hist s_{(a,c)}}{\Phi(p)}$ then
  $(\strat',\tau)\starmodelsh{\hist s_{(b,c)}}{\Phi(p)}$.  We deduce
  that $\strat'$ is winning from $\hist s_{(b,c)}$ if $\strat$ is
  winning from $\hist s_{(a,c)}$.  As $\strat$ is arbitrary this shows
  that
  $\valh{\star}{\hist s_{(a,c)}}{p}=1\Rightarrow \valh{\star}{\hist
    s_{(b,c)}}{p}=1$.  Dually if $\valh{\star}{\hist s_{(b,c)}}{p}=-1$
  then $\strat'$ is losing from $\hist s_{(b,c)}$ and so is $\strat$
  from $\hist s_{(a,c)}$.  This shows the implication
  $\valh{\star}{\hist s_{(b,c)}}{p}=-1\Rightarrow \valh{\star}{\hist
    s_{(a,c)}}{p}=-1$.  These two implications yield
  $\valh{\star}{\hist s_{(a,c)}}{p}\leq \valh{\star}{\hist
    s_{(b,c)}}{p}$.

  \noindent\textbf{Proof of (\ref{item:2}):}
  We show the contrapositive, consider $c$ such that
  $s_{(a,c)}\neq s_{(b,c)}$ and show that
  $\valh{\star}{\hist s_{(a,c)}}{p}$ and
  $\valh{\star}{\hist s_{(b,c)}}{p}$ cannot be both equal to $0$.  For
  this purpose we assume that $\valh{\star}{\hist s_{(a,c)}}{p}=0$ and
  show that $\valh{\star}{\hist s_{(b,c)}}{p}=1$.  Since $\strat$ is
  chosen arbitrary we take it such that
  $\valhstrat{\star}{\hist s_{(a,c)}}{\strat}=\valh{\star}{\hist
    s_{(a,c)}}{p}=0$.  In particular there exists
  $\tau\in\rectset_{-p}$ such that
  $(\strat,\tau)\starmodelsh{\hist s_{(a,c)}}{\Phi(p)}$.  Let $\tau'$
  be an arbitrary player $-p$ strategy.
  By~$\strat \bothdomstrateq{\star} \strat'$, we have
  $(\strat,\tau') \models^\star \Phi(p)$ implies
  $(\strat', \tau') \models^\star \Phi(p)$.  Since
  $\switchstrat{\tau'}{\hist s_{(a,c)}}{\tau}$ and $\tau'$ are equal
  on histories incomparable to $\hist s_{(a,c)}$, it then holds that
  $(\strat',\tau')\starmodelsh{\hist s_{(b,c)}}{\Phi(p)}$ for every
  arbitrary profile $\tau'$. We have shown that $\strat'$ is winning
  from $\hist s_{(b,c)}$ and hence that
  $\valh{\star}{\hist s_{(b,c)}}{p}=1$ as claimed.

  \noindent\textbf{Direction (\ref{item:1})$\wedge$(\ref{item:2})
    $\Rightarrow a\bothdommoveeq{\star}{\hist}b$:}
  We assume that (\ref{item:1}) and (\ref{item:2}) hold and show that
  $a\bothdommoveeq{\star}{\hist} b$.  Let $\strat$ be such that
  $\hist\in\prefixoutc(\strat)$ and $\strat(\hist)=a$.  We define
  $\strat'$ such that $\strat'(\hist)=b$ and
  $\strat\bothdomstrateq{\star}\strat'$ in several steps.  We start
  with $\strat'=\strat$.  We set $\strat'(\hist)=b$.  For every
  $s'\in \succmove{s}{b}\setminus \succmove{s}{a}$ we take a strategy
  $\strat_{s'}$ such that
  $\valhstrat{\star}{\hist s'}{\strat_{s'}}=\valh{\star}{\hist
    s'}{p}{}$ and update $\strat'$ to
  $\switchstrat{\strat'}{\hist s'}{\strat_{s'}}$.  For every
  $s'\in \succmove{s}{b}\cap\succmove{s}{a}$ such that
  $\valhstrat{\star}{\hist s'}{\strat}<\valh{\star}{\hist s'}{p}$ we
  do the same operation.  Note that for all states
  $s'\in \succmove{s}{b}\cap\succmove{s}{a}$ such that
  $\valhstrat{\star}{\hist s'}{\strat}=\valh{\star}{\hist s'}{p}$ the
  strategies $\strat'$ and $\strat$ coincide from $\hist s'$.
  Now we show that $\strat\bothdomstrateq{\star}\strat'$.  We take an
  arbitrary $\tau\in\rectset_{-p}$ such that
  $(\strat,\tau)\starmodels{\Phi(p)}$ and show that
  $(\strat',\tau)\starmodels{\Phi(p)}$.  Assume that
  $\hist\in \prefixoutc(\strat,\tau)$, otherwise $(\strat,\tau)$ and
  $(\strat',\tau)$ behave exactly the same.  Let
  $\gamma=\tau(\hist)$. We show that
  $(\strat',\tau)\starmodelsh{\hist}{\Phi(p)}$.  This is equivalent to
  show that $(\strat',\tau)\starmodelsh{\hist s_{(b,c)}}{\Phi(p)}$ for
  every $c\in\supp(\gamma)$.  Let
  $c\in\supp(\gamma)$. %, $s_{(a,c)}=\transfun(s,a,c)$ and $s_{(b,c)}=\transfun(s,b,c)$.
  If $s_{(b,c)}=s_{(a,c)}$ and
  $\valhstrat{\star}{\hist s_{(b,c)}}{\strat}=\valh{\star}{\hist
    s_{(b,c)}}{p}$ then $\strat$ and $\strat'$ play the same from
  $\hist s_{(a,c)}=\hist s_{(b,c)}$ and so
  $(\strat',\tau)\starmodelsh{\hist s_{(b,c)}}{\Phi(p)}$.  If
  $s_{(b,c)}\neq s_{(a,c)}$ then by (\ref{item:2}) and (\ref{item:1}),
  either
  $\valh{\star}{\hist s_{(b,c)}}{p}=\valh{\star}{\hist
    s_{(a,c)}}{p}=-1$, or
  $\valh{\star}{\hist s_{(b,c)}}{p}=\valh{\star}{\hist
    s_{(a,c)}}{p}=1$, or
  $\valh{\star}{\hist s_{(b,c)}}{p}>\valh{\star}{\hist s_{(a,c)}}{p}$.
  The first case is not possible since~$\tau$ witnesses that
  $\valh{\star}{\hist s_{(a,c)}}{p}>0$. In the second case, by
  construction, we
  have~$\valhstrat{\star}{\hist s_{(b,c)}}{\strat'} = 1$,
  thus~$(\strat',\tau) \starmodelsh{\hist s_{(b,c)}}{\Phi(p)}$.
  Assume the third case, which means
  $\valh{\star}{\hist s_{(b,c)}}{p}>\valh{\star}{\hist s_{(a,c)}}{p}$.
  Then
  $\valhstrat{\star}{\hist s_{(b,c)}}{\strat'}=\valh{\star}{\hist
    s_{(b,c)}}{p}>\valh{\star}{\hist s_{(a,c)}}{p}\geq 0$ (the first
  equality is due to the construction of $\strat'$, the last
  inequality is due to the fact that
  $(\strat,\tau)\starmodelsh{\hist s_{(a,c)}}{\Phi(p)}$).  We deduce
  that $\strat'$ is winning from $\hist s_{(b,c)}$ and hence that
  $(\strat',\tau)\starmodelsh{\hist s_{(b,c)}}{\Phi(p)}$.  This
  concludes the proof of $(\strat',\tau)\starmodelsh{\hist}{\Phi(p)}$.
  As $\strat'$ agrees with $\strat$ on histories for which $\hist$ is
  not a prefix, this implies that
  $(\strat',\tau)\starmodels{\Phi(p)}$.
\end{proof}

Now, we turn our attention to randomised moves in general, and show,
again, how the notion of $\star$-LA equivalence between
\emph{$\star$-LA moves} relates to values of histories. Let $\hist$ be
a history ending in $s$, and assume that player 1 has the choice
between two randomised moves $\alpha$ and $\beta$ which are $\star$-LA
\emph{and} equivalent ($\alpha\botheqadm{\star}{\hist}\beta$). Then,
the probability of ending up in some state $s'$ will be the same by
playing $\alpha$ or $\beta$, provided $\hist s'$ has value $0$ (for
player $p$). A similar property exists for histories of value $1$: the
probability of producing a history of value $1$ is the same under
$\alpha$ and $\beta$:

\begin{lemma}\label{lem:samedistronestep}
  Let $\hist$ be a history s.t. $\last{\hist}=s$, and let $\alpha$ and
  $\beta$ be two randomised moves in $\distr{\choicep{p}(s)}$
  s.t. $\alpha\botheqadm{\star}{\hist}\beta$. Then, for all $\gamma\in \distr{\choicep{-p}(s)}$:
\begin{enumerate}[(i)]
\item \label{item:5}
  $\transfunrand(s,(\alpha, \gamma))(s')=\transfunrand(s,(\beta,
  \gamma))(s')$ for all $s'$ s.t.
  $\hist s'\in \bothvalpred{\star}{p}{0}$; and
\item \label{item:6}
  $\transfunrand(s,(\alpha,
  \gamma))(S_1)=\transfunrand(s,(\beta,
  \gamma))(S_1)$, with
  $S_1=\{s'\mid \hist s'\in \bothvalpred{\star}{p}{1}\}$.
\end{enumerate} 
\end{lemma}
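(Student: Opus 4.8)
The plan is to reduce the statement to the case of Dirac moves and then read off the one-step behaviour of equivalent Dirac moves from the value characterisation of Lemma~\ref{lem:dominemove}. The key observation is that $\transfunrand$ is linear in the move of player~$p$: writing $s_{(a,c)}=\transfun(s,(a,c))$ for $a\in\choicep{p}(s)$ and $c\in\choicep{-p}(s)$, one has, for every $s'$,
\[
  \transfunrand(s,(\alpha,\gamma))(s')=\sum_{a\in\supp(\alpha)}\alpha(a)\,\transfunrand(s,(a,\gamma))(s'),
\]
and symmetrically for $\beta$. So if both parts hold for Dirac moves and the corresponding Dirac quantities are constant over the supports, the convex combinations will coincide.

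First I would establish the structural fact that every Dirac move in $\supp(\alpha)\cup\supp(\beta)$ is $\botheqadm{\star}{\hist}$-equivalent to every other. Since $\alpha$ and $\beta$ are $\star$-LA (as assumed in the discussion preceding the statement), Lemma~\ref{lem:characadmmoves}(\ref{it3:characadmmoves}) gives $a\botheqadm{\star}{\hist}\alpha$ for each $a\in\supp(\alpha)$ and $b\botheqadm{\star}{\hist}\beta$ for each $b\in\supp(\beta)$; chaining these with the hypothesis $\alpha\botheqadm{\star}{\hist}\beta$ and using that $\botheqadm{\star}{\hist}$ is an equivalence relation yields $a\botheqadm{\star}{\hist}b$ for all such $a,b$. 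Next, for a fixed equivalent pair of Dirac moves $a,b$, I would invoke Lemma~\ref{lem:dominemove} in both directions $a\bothdommoveeq{\star}{\hist}b$ and $b\bothdommoveeq{\star}{\hist}a$: condition~(\ref{item:1}) then forces $\valh{\star}{\hist s_{(a,c)}}{p}=\valh{\star}{\hist s_{(b,c)}}{p}$ for every $c\in\choicep{-p}(s)$, while condition~(\ref{item:2}) forces $s_{(a,c)}=s_{(b,c)}$ whenever this common value is $0$.

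Given these two facts, the Dirac versions of both parts are index-set arguments, uniformly in $\gamma$. For part~(\ref{item:5}), fix $s'$ with $\hist s'\in\bothvalpred{\star}{p}{0}$; then $\transfunrand(s,(a,\gamma))(s')=\sum_{c\,:\,s_{(a,c)}=s'}\gamma(c)$, and the sets $\{c\mid s_{(a,c)}=s'\}$ and $\{c\mid s_{(b,c)}=s'\}$ are equal, because $s_{(a,c)}=s'$ makes $\hist s_{(a,c)}$ have value $0$, hence $\hist s_{(b,c)}$ has value $0$ too, and condition~(\ref{item:2}) then gives $s_{(b,c)}=s_{(a,c)}=s'$ (and symmetrically). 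For part~(\ref{item:6}), $\transfunrand(s,(a,\gamma))(S_1)=\sum_{c\,:\,\valh{\star}{\hist s_{(a,c)}}{p}=1}\gamma(c)$, and since the values along $a$ and $b$ agree for every $c$, the two defining index sets coincide again. Thus, for $E$ equal to a fixed value-$0$ state $s'$ or to $S_1$, the quantity $\transfunrand(s,(a,\gamma))(E)=:P$ is the same for every $a\in\supp(\alpha)\cup\supp(\beta)$, so by linearity $\transfunrand(s,(\alpha,\gamma))(E)=\sum_a\alpha(a)\,P=P=\sum_b\beta(b)\,P=\transfunrand(s,(\beta,\gamma))(E)$, as required.

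The hard part is the structural step, and in particular the passage from equality of \emph{values} of the successors to equality of the successor \emph{states} on value-$0$ outcomes: it is precisely condition~(\ref{item:2}) of Lemma~\ref{lem:dominemove} that upgrades part~(\ref{item:5}) to an equality of probabilities for each individual state $s'$, rather than merely for the aggregate mass on value-$0$ states. Once the reduction to Dirac moves and these two value-level facts are secured, the remaining computations are routine.
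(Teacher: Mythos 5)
Your proof is correct and takes essentially the same route as the paper's: reduce to the Dirac case by showing (via Lemma~\ref{lem:characadmmoves} together with the $\star$-LA hypothesis) that all actions in $\supp(\alpha)\cup\supp(\beta)$ are pairwise $\botheqadm{\star}{\hist}$-equivalent, apply Lemma~\ref{lem:dominemove} in both directions to each equivalent Dirac pair to get equality of successor values for every $c$ and equality of successor states when that value is $0$, and conclude by linearity of $\transfunrand$. You also correctly identify that the $\star$-LA assumption from the surrounding discussion must be invoked (the equivalence hypothesis alone does not suffice for the reduction to supports), which is exactly the implicit step in the paper's own argument.
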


\begin{proof}
  We first prove the properties for Dirac moves $\alpha=a$ and
  $\beta=b$ such that $a\botheqadm{\star}{\hist}b$.  By
  Lemma~\ref{lem:dominemove},
  $\valhist{\hist \transfun(s,a,c)}{p}=\valhist{\hist
    \transfun(s,b,c)}{p}$ for every $c\in\choicep{-p}(s)$ and if this
  value is $0$ it further holds that:
  $\transfun(s,a,c)=\transfun(s,b,c)$.  Then for every $s'$ such that
  $\valhist{\hist s'}{p}=0$, for every $c\in \supp(\gamma)$, it holds
  that $\transfunrand(s,(a, c))(s')=\transfunrand(s,(b, c))(s')$.
  Thus (\ref{item:5}) holds when $\alpha=a$ and $\beta=b$ are Dirac
  moves:
  \begin{align*}
    \transfunrand(s,(a, \gamma))(s')&=\sum_{c\in
                                      \supp(\gamma)}\gamma(c) 
                                      \transfunrand(s,(a, c))(s')\\
                                    &=\sum_{c\in \supp(\gamma)} \gamma(c)
                                      \transfunrand(s,(b, c))(s')\\
                                    &=\transfunrand(s,(b, \gamma))(s').
  \end{align*}

  Next, we know that, for all Dirac moves $c\in\choicep{-p}(s)$,
  $\transfunrand(s,(a, c))(\transfun(s,a,c))=\transfunrand(s,(b,
  c))(\transfun(s,b,c)) = 1$, by definition of Dirac moves. We use this to
  establish point~(\ref{item:6}), again when $\alpha=a$ and $\beta=b$
  are Dirac moves:
  \begin{align*}
    \transfunrand(s,(a,  \gamma))(S_1) 
    &= \sum_{s' \mid \hist s'\in \bothvalpred{\star}{p}{1}} \transfunrand(s,(a,\gamma))(s')\\
    &=\sum_{c\mid \hist \transfun(s,a,c) \in \bothvalpred{\star}{p}{1}} 
      \gamma(c) \transfunrand(s,(a,c))(\transfun(s,a,c))\\
    &=\sum_{c\mid \hist \transfun(s,b,c) \in \bothvalpred{\star}{p}{1}}
      \gamma(c) \transfunrand(s,(b,c))(\transfun(s,b,c))\\
    &=\sum_{s' \mid \hist s'\in \bothvalpred{\star}{p}{1}}
      \transfunrand(s,(b,\gamma))(s')\\
    &=\transfunrand(s,(b,  \gamma))(S_1) .
  \end{align*}
  Notice that the third inequality follows from $a \botheqadm{\star}{\hist} b$ and Lemma~\ref{lem:dominemove}, since
  we have $\{c \mid h\delta(s,a,c)  \in \bothvalpred{\star}{p}{1}\}=
   \{c \mid h\delta(s,b,c)  \in \bothvalpred{\star}{p}{1}\}$.
  \medskip
 
  To finish the proof, we establish both properties for general
  randomised moves $\alpha$ and $\beta$.  We know that $\alpha$'s and
  $\beta$'s supports are equivalent Dirac moves by
  Lemma~\ref{lem:characadmmoves}. We start by selecting some
  $b\in\supp(\beta)$, in order to reuse the results above. We have:
  \begin{align*}
    \transfunrand(s,(\alpha, \gamma))(s') 
    &= \sum_{a\in \supp(\alpha)}\transfunrand(s,(a, \gamma))(s')\alpha(a)\\
    &= \sum_{a\in \supp(\alpha)}\transfunrand(s,(b,
      \gamma))(s')\alpha(a)
      &\text{as proved above}\\
    &=\transfunrand(s,(b, \gamma))(s') \sum_{a\in \supp(\alpha)}\alpha(a)\\
    &= \transfunrand(s,(b, \gamma))(s').
  \end{align*}
  So, summing up:
  \begin{align}
    \transfunrand(s,(\alpha, \gamma))(s')  &= \transfunrand(s,(b, \gamma))(s').\label{eq:1}
  \end{align}
  However:
  \begin{align*}
    \transfunrand(s,(\beta, \gamma))(s') 
    &= \sum_{b\in \supp(\beta)}\transfunrand(s,(b,
      \gamma))(s')\beta(b)\\
    &= \sum_{b\in \supp(\beta)} \transfunrand(s,(\alpha, \gamma))(s')
      \beta(b)
    &\text{by~\eqref{eq:1}}\\
    &= \transfunrand(s,(\alpha, \gamma))(s') \sum_{b\in
      \supp(\beta)}\beta(b)\\
    &= \transfunrand(s,(\alpha, \gamma))(s').
  \end{align*}
  Therefore establishing (\ref{item:5}).
  
  Finally, we proceed similarly for $S_1$. Again, we start by fixing
  $b\in\supp(\beta)$. We have:
  \begin{align*}
    \transfunrand(s,(\alpha,  \gamma))(S_1) 
    &= \sum_{s' \mid \hist s'\in \bothvalpred{\star}{p}{1}}
      \transfunrand(s,(\alpha,\gamma))(s')\\
   &= \sum_{a\in\supp(\alpha)} \left(\alpha(a)
    \sum_{s' \mid \hist s'\in \bothvalpred{\star}{p}{1}}
      \transfunrand(s,(a,\gamma))(s')\right)\\
    &= \sum_{a\in\supp(\alpha)} \left(\alpha(a)
    \sum_{s' \mid \hist s'\in \bothvalpred{\star}{p}{1}}
      \transfunrand(s,(b,\gamma))(s')\right)
   &\text{as proved above}\\
    &=\left(\sum_{s' \mid \hist s'\in \bothvalpred{\star}{p}{1}}
      \transfunrand(s,(b,\gamma))(s')\right)
      \left(\sum_{a\in\supp(\alpha)} \alpha(a)\right)\\
    &= \sum_{s' \mid \hist s'\in \bothvalpred{\star}{p}{1}}
      \transfunrand(s,(b,\gamma))(s').
  \end{align*}
  So, summing up:
  \begin{align}
    \transfunrand(s,(\alpha,  \gamma))(S_1) &= \sum_{s' \mid \hist s'\in \bothvalpred{\star}{p}{1}}
      \transfunrand(s,(b,\gamma))(s').\label{eq:2}
  \end{align}
  However,
  \begin{align*}
    \transfunrand(s,(\beta,  \gamma))(S_1) 
    &= \sum_{s' \mid \hist s'\in \bothvalpred{\star}{p}{1}}
      \transfunrand(s,(\beta,\gamma))(s')\\
    &= \sum_{b\in\supp(\beta)} \left(\beta(b)
      \sum_{s' \mid \hist s'\in \bothvalpred{\star}{p}{1}}
      \transfunrand(s,(b,\gamma))(s')\right)\\
    &=\sum_{b\in\supp(\beta)} \beta(b) \transfunrand(s,(\alpha,
      \gamma))(S_1) 
    &\text{by~\eqref{eq:2}}\\
    &= \transfunrand(s,(\alpha,
      \gamma))(S_1) \sum_{b\in\supp(\beta)}\beta(b)\\
    &=  \transfunrand(s,(\alpha, \gamma))(S_1) 
  \end{align*}
  Therefore establishing (\ref{item:6}).
\end{proof}

\subparagraph*{Characterisation and existence of admissible
  strategies} Equipped with our previous results, we can now establish
the main results of this section. First, we show that
\emph{$\star$-admissible strategies are exactly those that are both
  $\star$-LA and $\star$-SCO} (Theorem~\ref{theo:classifconc}$(i)$). Then,
we show that \emph{admissible strategies always exist} in concurrent
games (Theorem~\ref{theo:exist-adm}$(ii)$).

To establish Theorem~\ref{theo:exist-adm}, we need an ancillary lemma,
which relates the notions of $\star$-admissibility and the notions of
$\star$-LA. To this end, we draw a link between (global) equivalence
of strategies (in terms of $\botheqadmstrat{\star}$) and the local
equivalence of moves (in terms of $\botheqadm{\star}{\hist}$). For all
player $p$ strategies $\strat$ and all $v\in\{-1,0,1\}$, let us define
$\prefixoutc_v(\strat)$ as
$\prefixoutc(\strat)\cap \bothvalpred{\star}{p}{v}$, i.e., the set of
histories of $\strat$ that have value $v$. Then:

\begin{lemma}\label{lem:equivstrat}
  Let $\strat$ and $\strat'$ be two player $p$ strategies s.t.:
  \begin{inparaenum}[(i)]
  \item $\strat$ is $\star$-LA; and
  \item for all histories $\hist$: $\valh{\star}{\hist}{p}=1$ implies
    that $\strat$ and $\strat'$ are $\star$-winning from $\hist$.
  \end{inparaenum}
  Then, the following conditions are equivalent:
  \begin{enumerate}[(1)]
  \item\label{it1:equivstrat}
    $\strat(\hist)\botheqadm{\star}{\hist}\strat'(\hist)$ for all
    $\hist\in\prefixoutc(\strat)\cap\prefixoutc(\strat')$;
  \item\label{it2:equivstrat}
    $\prefixoutc_0(\strat)=\prefixoutc_0(\strat')$ and
    $\strat(\hist)\botheqadm{\star}{\hist}\strat'(\hist)$ for all
    $\hist\in\prefixoutc_0(\strat)$;
  \item\label{it3:equivstrat} $\strat\botheqadmstrat{\star}\strat'$
  \end{enumerate}
\end{lemma}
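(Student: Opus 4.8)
The plan is to prove the three conditions equivalent via the cycle $(\ref{it2:equivstrat})\Rightarrow(\ref{it3:equivstrat})\Rightarrow(\ref{it1:equivstrat})\Rightarrow(\ref{it2:equivstrat})$, leaning on Lemma~\ref{lem:samedistronestep} as the engine that converts local move-equivalence into equality of one-step distributions. The intuition throughout is that whenever $\strat$ and $\strat'$ play $\star$-LA-equivalent moves along their common support, they induce identical probabilities on the value-$0$ successor states and on the value-$1$ ``escape'' set; consequently the two strategies generate the same reachable histories of value $0$, agree (by hypothesis~(ii)) on being winning from value-$1$ histories, and have no continuation from value-$-1$ histories. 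Thus every relevant adversarial response $\tau$ sees the same winning/losing behaviour under $\strat$ and $\strat'$.

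Let me sketch the three implications. For $(\ref{it1:equivstrat})\Rightarrow(\ref{it2:equivstrat})$ I would argue by induction on history length that $\prefixoutc_0(\strat)=\prefixoutc_0(\strat')$: if $\hist$ is a common history of value $0$, then by~(\ref{it1:equivstrat}) we have $\strat(\hist)\botheqadm{\star}{\hist}\strat'(\hist)$, and Lemma~\ref{lem:samedistronestep}(\ref{item:5}) gives $\transfunrand(s,(\strat(\hist),\gamma))(s')=\transfunrand(s,(\strat'(\hist),\gamma))(s')$ for every value-$0$ successor $\hist s'$ and every adversary move $\gamma$; hence a value-$0$ successor lies in $\prefixoutc(\strat)$ iff it lies in $\prefixoutc(\strat')$, propagating the equality one step. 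The move-equivalence clause of~(\ref{it2:equivstrat}) is then just the restriction of~(\ref{it1:equivstrat}) to $\prefixoutc_0(\strat)$. For $(\ref{it2:equivstrat})\Rightarrow(\ref{it3:equivstrat})$, fix $\tau\in\rectset_{-p}$; I must show $(\strat,\tau)\starmodels{\Phi(p)}$ iff $(\strat',\tau)\starmodels{\Phi(p)}$. The key is that along the shared value-$0$ region the two profiles put identical probability on each cylinder (by iterating Lemma~\ref{lem:samedistronestep}(\ref{item:5})), and identical probability on entering the value-$1$ set $S_1$ at each step (by part~(\ref{item:6})); once a value-$1$ history is reached both $\strat$ and $\strat'$ are winning from it by hypothesis~(ii), and value-$-1$ histories are never reached from a value-$0$ prefix under a winning $\tau$. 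Decomposing the probability of $\Phi(p)$ according to the (common) first entry into $S_1$ shows $\Prob_{(\strat,\tau)}(\Phi(p))=\Prob_{(\strat',\tau)}(\Phi(p))$ in the $\asure$ case, and an analogous outcome-set argument settles the $\sure$ case. Finally $(\ref{it3:equivstrat})\Rightarrow(\ref{it1:equivstrat})$ should follow because global equivalence $\strat\botheqadmstrat{\star}\strat'$ forces, at each common history $\hist$, the moves $\strat(\hist)$ and $\strat'(\hist)$ to be mutually weakly-dominating at $\hist$ in the sense of $\botheqadm{\star}{\hist}$ — witnessed by the tails of $\strat$ and $\strat'$ themselves, using that $\strat$ is $\star$-LA so its move is admissible and the definition of $\bothdommoveeq{\star}{\hist}$ unwinds directly.

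\textbf{The main obstacle} will be the probabilistic bookkeeping in $(\ref{it2:equivstrat})\Rightarrow(\ref{it3:equivstrat})$ for the almost-sure semantics: I must carefully stratify runs by the first history at which they leave the value-$0$ region (entering either $S_1$ or, under a non-winning $\tau$, a value-$-1$ state), and show that the two strategies assign equal mass to each such ``exit'' event and thereafter both win from value-$1$ exits. The delicate point is handling runs that stay forever in value-$0$ histories — these contribute equal measure under both profiles by the stepwise equality of~(\ref{item:5}), but whether such runs satisfy the prefix-independent-or-safety objective $\Phi(p)$ must be argued identical for $\strat$ and $\strat'$, which is exactly where the equality $\prefixoutc_0(\strat)=\prefixoutc_0(\strat')$ and the coupled one-step distributions do the work. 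I would therefore isolate this stratification as the technical heart and treat the $\sure$ case, where one compares outcome \emph{sets} rather than measures, as a direct corollary of the same reachable-history analysis.
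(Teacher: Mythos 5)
Your proposal is correct and follows the paper's own proof essentially step for step: the same cycle of implications, with (\ref{it1:equivstrat})$\Rightarrow$(\ref{it2:equivstrat}) proved by induction via Lemma~\ref{lem:samedistronestep}, and (\ref{it2:equivstrat})$\Rightarrow$(\ref{it3:equivstrat}) proved by first establishing equality of cylinder probabilities over value-$0$ histories and then decomposing runs into those staying forever in $\bothvalpred{\star}{p}{0}$, those entering $\bothvalpred{\star}{p}{1}$ (where hypothesis~(ii) makes both strategies win), and those entering $\bothvalpred{\star}{p}{-1}$ (which contribute nothing to $\Phi(p)$) --- exactly the paper's stratification by first exit from the value-$0$ region. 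The paper dismisses (\ref{it3:equivstrat})$\Rightarrow$(\ref{it1:equivstrat}) as straightforward, so your brief (and admittedly loose) sketch of that direction matches the paper's level of detail there.
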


\begin{proof}
  \noindent\textbf{Proof of $(\ref{it1:equivstrat})\Rightarrow(\ref{it2:equivstrat})$ :}
  By induction and application of Lemma \ref{lem:samedistronestep}.

  \noindent\textbf{Proof of  $(\ref{it2:equivstrat})\Rightarrow(\ref{it3:equivstrat})$}
  We assume that
  $\forall \hist\in\prefixoutc_0(\strat)=\prefixoutc_0(\strat')$,
  $\strat(\hist)\botheqadm{\asure}{\hist}\strat'(\hist)$ and show that
  $\strat\botheqadmstrat{\asure}{}\strat'$.

  It suffices to take an arbitrary $\tau\in \rectset_{-p}$ and show
  that $\Prob_{(\strat,\tau)}(\Phi(p))=1$ iff
  $\Prob_{(\strat',\tau)}(\Phi(p))=1$.  We first show that:
  \begin{equation}\label{eq:equalcylinders}
    \forall \hist\in\prefixoutc_0(\strat)=\prefixoutc_0(\strat'),\quad \Prob_{(\strat,\tau)}(\cyl{\hist})=\Prob_{(\strat',\tau)}(\cyl{\hist}).
  \end{equation}
  This can be shown by induction using Lemma
  \ref{lem:samedistronestep} where the base case is
  $\Prob_{(\strat,\tau)}(\cyl{\vinit})=1=\Prob_{(\strat',\tau)}(\cyl{\vinit})$
  and the induction step is that
  $\Prob_{(\strat,\tau)}(\cyl{\hist})=\Prob_{(\strat',\tau)}(\cyl{\hist})$
  implies:
  \begin{align*}
  \Prob_{(\strat,\tau)}(\cyl{\hist'
    s'}) &=\Prob_{(\strat,\tau)}(\cyl{\hist'})
  \transfunrand(s,(\strat(\hist'),
  \tau(\hist')))(s')\\
    &=\Prob_{(\strat',\tau)}(\cyl{\hist'})
  \transfunrand(s,(\strat'(\hist'),
  \tau(\hist')))(s')\\
    &=\Prob_{(\strat,\tau)}(\cyl{\hist'
    s'})
  \end{align*}

  We decompose the set of runs as
  $\Box \bothvalpred{\star}{p}{0} \biguplus \Diamond
  \bothvalpred{\star}{p}{1}\biguplus \Diamond
  \bothvalpred{\star}{p}{-1}$.  When intersecting with $\Phi(p)$, the
  last set in the union becomes empty and we have:
  \[
    \Phi(p)=\left(\Box \bothvalpred{\star}{p}{0}\cap \Phi(p)\right)
    \biguplus \left(\Diamond \bothvalpred{\star}{p}{1}\cap
      \Phi(p)\right).
  \]

  Hence for $\strat''\in\{\strat,\strat'\}$,
  $\Prob_{(\strat'',\tau)}(\Phi(p))=\Prob_{(\strat'',\tau)}(\Box
  \bothvalpred{\star}{p}{0}\cap
  \Phi(p))+\Prob_{(\strat'',\tau)}(\Diamond
  \bothvalpred{\star}{p}{1})$ where we used that
  $\Diamond \bothvalpred{\star}{p}{1}$ implies $\Phi(p)$ almost surely
  because $\strat''$ is winning from histories of value $1$.

  Hence it suffices to show that
  $\Prob_{(\strat,\tau)}(\Box \bothvalpred{\star}{p}{0}\cap
  \Phi(p))=\Prob_{(\strat',\tau)}(\Box \bothvalpred{\star}{p}{0}\cap
  \Phi(p))$ and
  $\Prob_{(\strat,\tau)}(\Diamond
  \bothvalpred{\star}{p}{1})=\Prob_{(\strat',\tau)}(\Diamond
  \bothvalpred{\star}{p}{1})$.  The former equality is due to
  \eqref{eq:equalcylinders}.
  To prove the latter equality, we write for
  $\strat''\in\{\strat,\strat'\}$,
  % The latter equality is proved as follows using Lemma \ref{lem:samedistr}:
  \begin{equation}\label{eq:decompoone}
    \Prob_{(\strat'',\tau)}(\Diamond \bothvalpred{\star}{p}{1})=\sum_{\hist\in \prefixoutc_0(\strat'')}\Prob_{(\strat'',\tau)}(\cyl{\hist})\transfunrand(s,(\strat''(\hist),\tau(\hist)))(\bothvalpred{\star}{p}{1}).
  \end{equation}
  This equality traduces the decomposition of the even
  $\bothvalpred{\star}{p}{1}$ into the disjoint events parametrised by
  the history $\hist$ of value $0$ just before entering
  $\bothvalpred{\star}{p}{1}$.  Using \eqref{eq:decompoone}, Lemma
  \ref{lem:samedistronestep} and \eqref{eq:equalcylinders} we deduce
  that
  $\Prob_{(\strat,\tau)}(\Diamond
  \bothvalpred{\star}{p}{1})=\Prob_{(\strat',\tau)}(\Diamond
  \bothvalpred{\star}{p}{1})$, ending the proof.

  \noindent\textbf{Proof of
    $(\ref{it3:equivstrat})\Rightarrow(\ref{it1:equivstrat})$:
  }Straightforward.
\end{proof}

\begin{theorem}[Characterisation and existence of admissible
  strategies]\label{theo:classifconc}\label{theo:exist-adm}
  The following holds for all strategies $\strat$ in a concurrent game with semantics
  $\star\in\{\sure,\asure\}$: 
  \begin{enumerate}[(i)]
  \item \label{item:7} $\strat$ is $\star$-admissible iff $\strat$ is $\star$-LA and
    $\star$-SCO; in the special case of simple safety objectives, if $\strat$ is $\star$-LA then $\strat$ is $\star$-admissible.
  \item \label{item:8} there is a $\star$-admissible strategy $\strat'$ such that
    $\strat\bothdomstrateq{\star}\strat'$.
  \end{enumerate}
  In particular, point~(\ref{item:8}) implies that admissible strategies
  always exist in concurrent games.
\end{theorem}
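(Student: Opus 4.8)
The plan is to prove the characterisation~(\ref{item:7}) by contraposition in one direction and through Lemma~\ref{lem:equivstrat} in the other, and then to bootstrap existence~(\ref{item:8}) by explicitly building an LA and SCO dominator and invoking~(\ref{item:7}). For the ``only if'' direction of~(\ref{item:7}) I would argue that a strategy failing either property is strictly dominated. If $\strat$ is not $\star$-SCO, pick $\hist\in\prefixoutc(\strat)$ with $\valhstrat{\star}{\hist}{\strat}<\valh{\star}{\hist}{p}$ together with a strategy $\strat''$ realising $\valh{\star}{\hist}{p}$ from $\hist$; the shifted strategy $\switchstrat{\strat}{\hist}{\strat''}$ weakly dominates $\strat$ (the switch only affects continuations of $\hist$) and is strictly better against an adversary reaching $\hist$, so $\strat\bothdomstrat{\star}\switchstrat{\strat}{\hist}{\strat''}$. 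If $\strat$ is not $\star$-LA, there is $\hist$ with $\strat(\hist)$ strictly move-dominated by some $\alpha'$; unfolding $\bothdommove{\star}{\hist}$ yields $\strat'$ with $\strat'(\hist)=\alpha'$ and $\strat\bothdomstrateq{\star}\strat'$, and the asymmetry of $\bothdommove{\star}{\hist}$ together with Lemma~\ref{lem:dominemove} provides the adversary witnessing strictness.

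For the ``if'' direction, assume $\strat$ is $\star$-LA and $\star$-SCO and suppose towards a contradiction that $\strat\bothdomstrat{\star}\strat'$; the goal is to derive $\strat\botheqadmstrat{\star}\strat'$ via Lemma~\ref{lem:equivstrat}, contradicting strictness. First I would observe that modifying a strategy on histories unreachable under it against \emph{every} adversary leaves its $\botheqadmstrat{\star}$-class untouched, so I may assume $\strat$ is $\star$-winning from every value-$1$ history: at reachable ones this is exactly SCO, at the others I graft an LA winning strategy (which keeps $\strat$ LA). The delicate point is to secure the second premise of Lemma~\ref{lem:equivstrat}, namely that $\strat'$ too is $\star$-winning from every value-$1$ history: here I would use that $\strat$ wins unconditionally from such a history to transfer any local failure of $\strat'$ there into a global failure against a suitably shifted adversary that first steers the play to that history, contradicting $\strat\bothdomstrateq{\star}\strat'$. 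With both premises established, Lemma~\ref{lem:equivstrat} gives $\strat\botheqadmstrat{\star}\strat'$.

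For the simple-safety refinement I would show that $\star$-LA already implies $\star$-SCO, so that $\star$-LA alone suffices for admissibility. At value-$0$ histories this holds in full generality: by Lemma~\ref{lem:dominemove} an admissible move never strictly lowers the value of successors, so an LA strategy keeps value $0$ and hence $\valhstrat{\star}{\hist}{\strat}=0$. At value-$1$ histories I invoke simplicity: an admissible move must send \emph{every} successor to value $1$ (otherwise the move guaranteeing value-$1$ successors would strictly dominate it), so an LA strategy keeps the play inside $\bothvalpred{\star}{p}{1}$, i.e.\ outside $\mathsf{Bad}_p$, forever; for a safety objective this is exactly winning, giving $\valhstrat{\star}{\hist}{\strat}=1$ and thus SCO.

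Finally, for~(\ref{item:8}) I would build an LA and SCO dominator of a given $\strat$ in two stages. First, replace each move $\strat(\hist)$ by a $\bothdommoveeq{\star}{\hist}$-maximal, hence admissible, move above it — one exists by finiteness of $\choicep{p}(\last{\hist})$ — obtaining an LA strategy weakly dominating $\strat$; the lift from move-wise domination to $\bothdomstrateq{\star}$ is an induction along histories using Lemma~\ref{lem:dominemove} and the one-step measure identities of Lemma~\ref{lem:samedistronestep}. Second, since an LA strategy is automatically SCO at value-$0$ histories (by the argument above), the only SCO defects lie at value-$1$ histories, where I graft an LA winning strategy — an improvement that preserves LA. The result is LA and SCO, hence admissible by~(\ref{item:7}), and weakly dominates $\strat$. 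I expect the main obstacle to be the ``if'' direction of~(\ref{item:7}), specifically deriving that a weak dominator of an LA and SCO strategy must itself be $\star$-winning from all value-$1$ histories (the premise that makes Lemma~\ref{lem:equivstrat} applicable); the accompanying measure-theoretic lifting of local move-domination to global strategy domination in the almost-sure semantics is the other technically loaded step.
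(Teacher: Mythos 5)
Your part~(i) is essentially the paper's proof: both non-admissibility directions go by contraposition (the shift $\switchstrat{\strat}{\hist}{\strat''}$ for the SCO half, unfolding move-domination plus Lemma~\ref{lem:dominemove} for the LA half), sufficiency goes through Lemma~\ref{lem:equivstrat}, and your safety argument (LA forces value preservation, and for a safety objective staying at value $\geq 0$ forever \emph{is} winning) is the paper's. The one divergence is how the second premise of Lemma~\ref{lem:equivstrat} is secured for a weak dominator $\strat'$: you propose to \emph{prove} that $\strat\bothdomstrateq{\star}\strat'$ forces $\strat'$ to win from value-$1$ histories, via an adversary that steers the play to such a history; the paper instead replaces $\strat'$, without loss of generality, by a strategy that switches to a winning strategy whenever a value-$1$ history is reached, and applies the lemma to that modification. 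The paper's route is safer: your steering adversary must also make $\strat$ win on every branch that deviates from the chosen history (deviations caused by $\strat$'s own randomisation), and controlling those off-path branches is exactly the difficulty you flag but do not resolve.

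The genuine gap, however, is in part~(ii), and it is fatal as written. Your second step rests on the claim that ``an LA strategy is automatically SCO at value-$0$ histories'', imported from your safety paragraph where you assert it ``holds in full generality''. It does not: the inference from ``the play can be kept at value $\geq 0$'' to $\valh{\star}{\hist}{\strat}=0$ is safety-specific, because for a general objective the cooperating adversary must make $\strat$ actually \emph{win} from $\hist$, not merely keep values non-negative. The paper's running example refutes the claim: the strategy $\strat_1$ that always plays $a$ in $s_0$ (and, say, $d$ in $s_1$) is $\star$-LA, since $a$ and $b$ are $\bothdommoveeq{\star}{\hist}$-incomparable (as noted just before Lemma~\ref{lem:dominemove}), yet $\valh{\star}{s_0}{\strat_1}=-1$ while $\valh{\star}{s_0}{1}=0$, because under $\strat_1$ the play never leaves $s_0$; so $\strat_1$ is not SCO and is dominated by $\strat_2$. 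Worse, $\strat_1$ is a fixed point of your construction: each of its moves is already maximal (any move above $a$ is equivalent to $a$), and no value-$1$ history is reachable under it, so your two steps return a strategy that behaves exactly like $\strat_1$, which is not admissible --- the output is neither SCO nor usable for part~(ii). This is precisely why the paper's existence proof is not a local surgery on moves: it builds $\strat'$ dynamically, maintaining along every value-$0$ history a wished profile $(\strat_\hist,\tau_\hist)$ with $(\strat_\hist,\tau_\hist)\starmodelsh{\hist}{\Phi(p)}$, re-chosen at every deviation of the opponents; keeping such a cooperative witness alive is what secures SCO (and weak domination of $\strat$), and no pointwise choice of locally admissible moves can replace it. Your step-one ``lift'' has a secondary problem of the same flavour --- $\alpha\bothdommoveeq{\star}{\hist}\alpha'$ only promises \emph{some} dominating strategy playing $\alpha'$ at $\hist$, not the pointwise substitution --- but the counterexample above breaks the construction even where that lift is vacuous.
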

\begin{proof}[Proof of Theorem~\ref{theo:exist-adm}]
  We start with the proof of point~(\ref{item:7}).
  
  \noindent\textbf{"$\strat$ is $\star$-admissible" implies "$\strat$ is
    $\star$-SCO":}  
  We show the contrapositive. Let $\strat$ be a strategy
  that is not $\star$-SCO, and let us show that $\strat$ is not
  $\star$-admissible. Since $\strat$ is not $\star$-SCO, there is, by
  definition of $\star$-SCO, a history $\hist$
  s.t. $\valhstrat{\star}{\hist}{\strat}\neq\valh{\star}{\hist}{p}$. Recall
  that
  $\valh{\star}{\hist}{p}=\max_{\strat'}\valhstrat{\star}{\hist}{\strat'}$. Hence,
  $\valhstrat{\star}{\hist}{\strat}\neq\valh{\star}{\hist}{p}$ implies
  that
  $\valhstrat{\star}{\hist}{p}{\game}{\strat}<\valh{\star}{\hist}{p}$
  and that there exists a strategy $\strat'$
  s.t. $\valhstrat{\star}{\hist}{\strat}=\valh{\star}{\hist}{p}$. Consider
  the strategy
  $\overline{\strat}=\switchstrat{\strat}{\hist}{\strat'}$ that plays
  like $\strat$ and switches to $\strat'$ after history $\hist$. We
  claim that $\strat \bothdomstrat{\star} \overline{\strat}$.  Observe
  that, by construction:
  $\valhstrat{\star}{\hist}{\strat} <
  \valhstrat{\star}{\hist}{\overline{\strat}}$.  Note also that
  $\valhstrat{\hist}{\strat}=-1$ is not possible, because this means
  that all continuations of $\hist$ are losing for $p$, so it is not
  possible to have $\overline{\strat}$ with
  $\valhstrat{\star}{\hist}{\strat} <
  \valhstrat{\star}{\hist}{\overline{\strat}}$. So we have necessarily
  $\valhstrat{\hist}{\strat}=0$ and
  $\valhstrat{\hist}{\overline{\strat}}=1$. This means that
  $\overline{\strat}$ is winning from $h$ (against all possible
  strategies of $-p$), while, by definition of
  $\valhstrat{\star}{\hist}{\strat}=0$, $\strat$ is, from $\hist$,
  losing for $p$ against at least one strategy of $-p$. Hence,
  $\strat \bothdomstrat{\star} \overline{\strat}$ and $\strat$ is not
  $\star$-admissible.

  \noindent\textbf{"$\strat$ is $\star$-admissible" implies "$\strat$
    is $\star$-LA":}
  We show the contrapositive: we assume that $\strat$ is not
  $\star$-LA and show that $\strat$ is not $\star$-admissible.  There
  exists a history $\hist\in\prefixoutc(\strat)$ such that
  $\strat(\hist)$ is $\star$-dominated at $\hist$ by an $\star$-LA
  move $b$ that can be chosen Dirac by virtue of Lemma
  \ref{lem:characadmmoves} (\ref{item:4}).  There exists
  $a\in\supp(\strat(h))$ such that $a\bothdommove{\star}{\hist} b$
  (otherwise all the moves of the support of $\supp(\strat(h))$ would
  be equivalent to the $\star$-LA move $b$ and so would be $\alpha$ by
  Lemma \ref{lem:characadmmoves} (\ref{item:3})).  We already saw in
  the proof of Lemma \ref{lem:Diracdominates} (\ref{item:4}), that
  $\strat\bothdomstrateq{\star}\strat_a$ where $\strat_a$ is the
  strategy that plays like $\strat$ everywhere except in $\hist$ where
  it plays $a$ instead of $\alpha$.  Then it suffices to show that
  $\strat_a$ is $\star$-dominated.  Using Lemma \ref{lem:dominemove}
  and notation therein with $a\bothdommoveeq{\star}{\hist} b$, we know
  that for every $c\in \choicep{-p}(s)$,
  $\valh{\star}{\hist s_{(a,c)}}{p}\leq \valh{\star}{\hist
    s_{(b,c)}}{p}$ and if
  $\valh{\star}{\hist s_{(a,c)}}{p}= \valh{\star}{\hist
    s_{(b,c)}}{p}=0$ then $s_{(a,c)}=s_{(b,c)}$.  Here we can reuse
  the end of the proof of Lemma \ref{lem:dominemove} to find a
  strategy $\strat'$ such that $\strat_a\bothdomstrateq{\star}\strat'$
  and that satisfies the further properties that it is $\star$-SCO at
  $\hist s_{(b,c)}$ for every $c\in \choicep{-p}(s)$, and it plays as
  $\hist$ in every proper prefix of $\hist$.  It remains to show that
  there exists a profile $\tau\in\rectset_{-p}$ such that
  $(\strat_a,\tau)\not \starmodels{\Phi(p)}$ and
  $(\strat',\tau)\starmodels{\Phi(p)}$.  Using Lemma
  \ref{lem:dominemove}, this time with
  $\strat_a(\hist)\not \bothdommoveeq{\star}{\hist} a$ we get the
  existence of a Dirac move $c\in \choicep{-p}(s)$ such that
  $\valh{\star}{\hist s_{(a,c)}}{p}< \valh{\star}{\hist
    s_{(b,c)}}{p}$.  We define $\tau\in\rectset_{-p}$ such that
  $\hist\in\prefixoutc(\strat_a,\tau)$, $\tau(\hist)=c$ and a further
  property depending on cases described as follows.  If
  $\valhist{\hist s_a}{p}=-1$ then $\valhist{\hist s_b}{p}\geq 0$ and
  we further require that
  $(\strat',\tau)\starmodelsh{\hist s_b}{\Phi(p)}$.  If
  $\valhist{\hist s_a}{p}=0$ then $\valhist{\hist s_b}{p}=1$ and we
  further require that
  $(\strat_a,\tau)\not\starmodelsh{\hist s_a}{\Phi(p)}$.  In both
  cases it holds that $(\strat',\tau)\starmodelsh{\hist s_b}{\Phi(p)}$
  and $(\strat_a,\tau)\not\starmodelsh{\hist s_a}{\Phi(p)}$.  So
  $(\strat_a,\tau)\not\starmodels{\Phi(p)}$ and
  $(\strat',\tau)\starmodels{\Phi(p)}$.  We conclude that $\strat$ is
  not $\star$-LA because
  $\strat\bothdomstrateq{\star}\strat_a\bothdomstrat{\star}\strat'$.

  \noindent\textbf{"$\strat$ is $\star$-SCO and $\star$-LA" implies
    "$\strat$ is $\star$-admissible":}
  Let $\strat$ be a strategy that is both $\star$-LA and $\star$-SCO.
  We show that for every $\strat'$ such that
  $\strat\bothdomstrateq{\star}\strat'$ implies
  $\strat\botheqadmstrat{\star}\strat'$; which is a way of proving
  that $\strat$ is $\star$-Admissible.  Let $\strat'\in\rectset_p$
  such that $\strat\bothdomstrateq{\star}\strat'$.  We can assume
  without loss of generality that $\strat'$ is winning from histories
  of value $1$ (otherwise it suffices to change $\strat'$ so that when
  a history of value $1$ it switches to a winning strategy from that
  history).  $\strat\bothdomstrateq{\star}\strat'$ we deduce that
  $\strat(\hist) \bothdommoveeq{\star}{\hist} \strat'(\hist)$ for
  every $\hist\in \prefixoutc(\strat)\cap \prefixoutc(\strat')$, then
  $\strat(\hist) \eqadm{\hist} \strat'(\hist)$ because $\strat(\hist)$
  is $\star$-LA at $\hist$.  We can apply Lemma \ref{lem:equivstrat},
  and deduce that $\strat\bothdomstrateq{\star}\strat'$ as required.

  \noindent\textbf{In simple safety games: "$\strat$ is $\star$-LA" implies
    "$\strat$ is $\star$-admissible":} To establish this case, we
  prove that, in simple safety games, "$\strat$ is $\star$-LA"
  implies "$\strat$ is $\star$-SCO". This implies that all $\star$-LA
  strategies are also $\star$-SCO, hence admissible. First, observe
  that, in the case of simple safety games, the two semantics $\asure$
  and $\sure$ coincide. Indeed, a winning strategy for the $\sure$
  semantics is also winning for the almost sure semantics. On the
  other hand, a winning strategy for the almost sure semantics ensures
  that no prefix generated by the strategy reaches the bad states
  (otherwise, this prefix would have non-negative measure, and the
  probability of winning would be $<1$). Hence, a winning strategy for
  the almost sure semantics is also winning for the sure
  semantics. So, we can restrict ourselves to the sure semantics in
  the arguments that follow.

  From $\strat$, we build a new Dirac $\star$-LA strategy $\strat'$,
  s.t., for all histories $\hist$, $\strat'(\hist)$ is an arbitrarily
  chosen action from $\supp(\strat(\hist))$. By
  Lemma~\ref{lem:Diracdominates}, point~(\ref{it3:characadmmoves}),
  $\strat'$ is also $\star$-LA. We claim that $\strat'$ is
  \emph{value-preserving} i.e., for all histories $\hist$:
  $\valh{\sure}{\hist}{p}\leq\valh{\sure}{\hist s}{p}$, for all
  $s\in \succmove{\last{\hist}}{\strat'(\hist)}$ (assuming $\strat$,
  and hence $\strat'$, are player $p$ strategies). In other words, by
  playing $\sigma'$, player $p$ never decreases the value of the
  histories he visits. This value preserving property holds by
  Lemma~\ref{lem:dominemove}.

  Moreover, since the objective is a simple safety one, it is easy to
  see that when $\strat'$ is $\sure$-winning, then $\strat$ is
  $\sure$-winning too. So $\strat'$ is dominated by $\strat$ and it is
  sufficient to prove that $\strat'$ is admissible to show that
  $\strat$ is admissible too.

  So, when states of value $1$ are entered, $\strat'$ ensures that
  they are never left, and $\strat'$ is thus winning from states of
  value $1$. On the other hand, when a history of value $0$ is
  entered, $\strat'$ will also ensure that only histories of value $0$
  or $1$ are visited. In both cases, the $\strat'$ is winning because
  we are considering a \emph{safety objective}. In particular, if we
  stay in histories of value $0$ forever, the strategy is winning
  because the bad states are never visited (otherwise, the value of
  the history would drop to $-1$).

  We conclude that $\strat'$ is $\star$-SCO. Since it is also
  $\star$-LA, $\strat'$ is admissible. Hence, $\strat$ is admissible too.

  \medskip

  Now, we move to the proof of point~(\ref{item:8}).
  
  Let $\strat$ be a strategy for player $p$, we build an admissible
  strategy that weakly dominates $\strat$.  This proof is similar to
  the proof of a similar result for SCO strategies in turn based games
  (\cite[Lemma 7]{FSTTCS2016}).  The strategy $\strat'$ is
  built dynamically along runs. It plays a winning strategy as soon as
  the current history begins to be of value $1$.  When the history
  begins to be of value $-1$, there is nothing more to do, the
  strategy can play
  arbitrarily. %strategy begins to play a winning strategy.
  Consider now, runs in which the value is always $0$. The strategy
  $\strat'$ keeps in memory a wished strategy $\strat_{\hist}$ for
  himself and a wished strategy $\tau_{\hist}$ of player $-p$ such
  that $(\strat_{\hist},\tau_{\hist})\starmodelsh{\hist}{\Phi(p)}$.
  The strategies $\strat_{\hist}$ and $\tau_{\hist}$ are inductively
  defined as follows.  At the beginning, $\hist=\vinit$,
  $\strat_{\hist}$ and $\tau_{\hist}$ are chosen such that
  $(\strat_{\hist},\tau_{\hist})\starmodelsh{\hist}{\Phi(p)}$ and such
  that $\strat_{\hist}$ is $\hist$-admissible. Here and below if
  $\strat$ satisfies these properties then the choice
  $\strat_{\hist}=\strat$ is made by default, otherwise another
  strategy $\strat_{\hist}$ is chosen (and it dominates $\strat$).
  After history $\hist$, $\strat'$ plays $\strat_\hist(\hist)$.  If
  after some history $\hist$ the continuation $\hist s'$ chosen is not
  in $\prefixoutcome{\strat_{\hist},\tau_{\hist}}$ (this is the case
  only if player $-p$ plays something else than
  $\tau_{\hist}(\hist)$), then $\strat_{\hist s'}$ and
  $\tau_{\hist s'}$ are chosen such that $\strat_{\hist s'}$ is
  $\hist s'$-admissible and
  $(\strat_{\hist s'},\tau_{\hist s'})\starmodelsh{\hist
    s'}{\Phi(p)}$.  Otherwise, if the wished profile is followed then
  the wished strategy for player $-p$ is left unchanged
  ($\tau_{\hist s'}=\tau_{\hist}$); the wished strategy for player $p$
  is left unchanged ($\strat_{\hist s'}=\strat_{\hist}$) if
  $\strat_{\hist}$ is $\hist s'$-admissible and otherwise
  $\strat_{\hist s'}$ is defined as a strategy that plays an
  admissible moves in $\hist s'$ and that dominates $\strat_{\hist}$.
  The constructed strategy is $\star$-LA.  It is also $\star$-SCO
  because from every history $\hist$ of value $0$,
  $(\strat',\tau_{\hist})\starmodelsh{\hist}{\Phi(p)}$ and from every
  history of value $1$, $\strat'$ is winning.  $\strat'$ plays as
  $\strat$ by default or a strategy that dominates $\strat$. At the
  end $\strat'$ is an admissible strategy that weakly dominates
  $\strat$.
\end{proof}

\begin{example}
  We consider again the example in \figurename~\ref{ex:runex}, and
  consider strategies $\strat_2$ and $\strat_3$ as defined in
  Example~\ref{example:running}. Remember that these two strategies do
  their best to reach $s_2$, and that, from $s_2$, $\strat_2$ plays
  deterministically $f$, while $\strat_3$ plays $f$ and $g$ with equal
  probabilities. From Example~\ref{example:sco}, we know that
  $\strat_2$ is $\sure$-SCO but not $\asure$-SCO; while $\strat_3$ is
  $\asure$-SCO but not $\sure$-SCO. Indeed, we have already argued in
  Example~\ref{example:admissible} that $\strat_2$ is not
  $\asure$-admissible, and that $\strat_3$ is not
  $\sure$-admissible. However, from
  Example~\ref{example:locally-admissible}, we know that $\strat_2$ is
  $\sure$-LA and that $\strat_3$ is $\asure$-LA. So, by
  Theorem~\ref{theo:classifconc}, $\strat_2$ is $\sure$-admissible and
  $\strat_3$ is $\asure$-admissible as expected.
\end{example}

Finally, we close the section by a finer characterisation of
$\star$-admissible strategies. We show that:
\begin{inparaenum}[(i)]
\item in the sure semantics, there is always an $\sure$-admissible
  strategy that plays Dirac moves only; and
\item in the almost-sure semantics, there is always an $\asure$-admissible
  strategy that plays Dirac moves only in histories of values $0$ or $-1$.  
\end{inparaenum}
The difference between the two semantics should not be surprising, as
we know already that randomisation is sometimes needed to win (i.e.,
from histories of value $1$) in the almost sure semantics:
\begin{proposition}\label{theo:Diracsuffices}
  For all player $p$ strategies $\strat$ in a concurrent game with
  $\star\in\{\sure,\asure\}$:
  \begin{enumerate}[(i)]
  \item If $\strat$ is $\asure$-admissible then there exists a
    strategy $\strat'$ that plays only Dirac moves in histories of
    value $\leq 0$ such that $\strat\botheqadm{\asure}{}\strat'$.
  \item If $\strat$ is $\sure$-admissible then there exists a Dirac
    strategy $\strat'$ such that $\strat\botheqadm{\sure}{}\strat'$.
  \end{enumerate}
\end{proposition}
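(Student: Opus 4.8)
The plan is to reduce both statements to Lemma~\ref{lem:equivstrat}, by replacing the (possibly randomised) moves of $\strat$ with Dirac moves drawn from their supports. The common starting point is Theorem~\ref{theo:classifconc}$(\ref{item:7})$: since $\strat$ is $\star$-admissible it is both $\star$-LA and $\star$-SCO. In particular, for every history $\hist$ the move $\strat(\hist)$ is $\star$-LA, so by Lemma~\ref{lem:characadmmoves}$(\ref{it3:characadmmoves})$ every $a\in\supp(\strat(\hist))$ is itself a $\star$-LA move and $\strat(\hist)\botheqadm{\star}{\hist}a$. Hence at any history we may replace $\strat(\hist)$ by a fixed Dirac action $a_\hist\in\supp(\strat(\hist))$ without changing the move up to the local equivalence $\botheqadm{\star}{\hist}$; the whole difficulty is to check that this purely \emph{local} operation preserves the global class $\botheqadmstrat{\star}$, which is exactly what Lemma~\ref{lem:equivstrat} delivers, \emph{provided} its hypothesis that both strategies are $\star$-winning from every history of value~$1$ holds.

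For part $(ii)$ (the sure semantics) I would make $\strat'$ Dirac \emph{everywhere}: set $\strat'(\hist)=a_\hist$ for a fixed $a_\hist\in\supp(\strat(\hist))$. Hypothesis~(i) of Lemma~\ref{lem:equivstrat} holds as $\strat$ is $\sure$-LA, and condition~(\ref{it1:equivstrat}) of that lemma holds at \emph{every} common history through $\strat(\hist)\botheqadm{\sure}{\hist}\strat'(\hist)$. The point that makes the sure semantics easier is hypothesis~(ii): I claim $\outcome{\strat'}\subseteq\outcome{\strat}$, since every run consistent with $\strat'$ plays at each step an action $a_\hist$ that has positive probability under $\strat$, hence is also a run of $\strat$ against the same player-$(-p)$ strategy. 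Consequently, if $\strat$ $\sure$-wins from a value-$1$ history $\hist$ (which holds by $\star$-SCO on $\prefixoutc(\strat)$), then $\{\rho\in\outcome{\strat'}\mid\hist\prefix\rho\}\subseteq\{\rho\in\outcome{\strat}\mid\hist\prefix\rho\}\subseteq\Phi(p)$, so $\strat'$ $\sure$-wins from $\hist$ as well. Lemma~\ref{lem:equivstrat} then yields $\strat\botheqadmstrat{\sure}\strat'$ with $\strat'$ Dirac.

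For part $(i)$ (the almost-sure semantics) the full-Dirac construction fails, because randomisation can be indispensable to $\asure$-win from value-$1$ histories (the matching-pennies gadget at $s_2$ in \figurename~\ref{ex:runex}). I would therefore only derandomise on the value-$\le 0$ region: set $\strat'(\hist)=a_\hist\in\supp(\strat(\hist))$ when $\valh{\asure}{\hist}{p}\le 0$, and $\strat'(\hist)=\strat(\hist)$ when $\valh{\asure}{\hist}{p}=1$. Condition~(\ref{it1:equivstrat}) of Lemma~\ref{lem:equivstrat} still holds at every common history: it is the equivalence above on the value-$\le0$ histories, and it is trivial on value-$1$ histories where $\strat'=\strat$. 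For hypothesis~(ii), $\strat$ is $\asure$-winning from every value-$1$ history $\hist\in\prefixoutc(\strat)$ by $\star$-SCO, and along its outcome the value stays equal to $1$: by the sub-game form of Lemma~\ref{lem:winsforeveryhist}, $\strat$ is $\asure$-winning from every $\hist'\in\prefixoutc(\strat)$ with $\hist\prefix\hist'$, so $\valh{\asure}{\hist'}{p}=1$ and thus $\strat'=\strat$ there. Hence from any value-$1$ history $\strat'$ reproduces $\strat$ and inherits its almost-sure winning, giving $\strat\botheqadmstrat{\asure}\strat'$ with $\strat'$ Dirac on all histories of value $\le 0$.

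The main obstacle is precisely verifying hypothesis~(ii) of Lemma~\ref{lem:equivstrat}, and it is where the two semantics part ways: in the sure case a Dirac sub-strategy of a winning strategy is still winning via the outcome inclusion $\outcome{\strat'}\subseteq\outcome{\strat}$, whereas in the almost-sure case the randomised moves must be retained on the whole value-$1$ region. A secondary technicality, shared with the proof of Theorem~\ref{theo:classifconc}, is that hypothesis~(ii) is stated for \emph{all} value-$1$ histories, while $\star$-SCO only controls $\strat$ on $\prefixoutc(\strat)$; as there, this is handled without loss of generality by letting the strategy commit to a fixed winning strategy upon \emph{first} entering the value-$1$ region, a modification localised off $\prefixoutc(\strat)$ that leaves the $\botheqadmstrat{\star}$-class unchanged.
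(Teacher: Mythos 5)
Your proof is correct and follows essentially the same route as the paper's: obtain $\star$-LA and $\star$-SCO from Theorem~\ref{theo:classifconc}, derandomise by picking support actions of the $\star$-LA moves via Lemma~\ref{lem:characadmmoves}~(\ref{it3:characadmmoves}), and lift the local equivalences to $\strat\botheqadmstrat{\star}\strat'$ with Lemma~\ref{lem:equivstrat}. The only difference is the treatment of the value-$1$ region---the paper lets $\strat'$ switch there to winning strategies (required to be Dirac in the $\sure$ case), whereas you retain $\strat$'s own moves (almost-sure case) or their Dirac realisation (sure case) and verify that winning is preserved; your outcome-inclusion argument in the sure case in fact supplies a proof of the paper's unjustified assertion that Dirac $\sure$-winning strategies can be chosen from value-$1$ histories.
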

\begin{proof}
  Take $\strat$ is $\asure$-admissible.  We define $\strat'$ as
  follows.  From histories of value $-1$, plays arbitrary Dirac
  strategies.  From histories of value $1$, plays winning
  strategies. If $\star=\sure$ we can further requires that such
  winning strategies are Dirac.  If there exists histories of value
  $0$ then necessarily the initial state $\vinit$ is such a history
  and we build $\strat'$ inductively as follows such that
  $\prefixoutc_0(\strat')=\prefixoutc_0(\strat)$.  Consider
  $\hist \in \prefixoutc_0(\strat)$, then define $\strat'(\hist)$ to
  be an arbitrary Dirac move $a\in\supp(\strat(\hist))$.  Then by
  Lemma \ref{lem:characadmmoves} (\ref{it3:characadmmoves}) and the
  fact that $\strat(\hist)$ is $\star$-LA we deduce that
  $\strat'(\hist)\botheqadm{\star}{\hist}\strat(\hist)$; The histories
  of the form $\hist s'$ that are in $\prefixoutc_0(\strat')$ are
  exactly those that belongs to $\prefixoutc_0(\strat)$ (see Lemma
  \ref{lem:samedistronestep}).  By induction,
  $\prefixoutc_0(\strat')=\prefixoutc_0(\strat)$ and for every
  $\hist\in\prefixoutc_0(\strat)$,
  $\strat'(\hist)\botheqadm{\star}{\hist}\strat(\hist)$.  We can use
  Lemma \ref{lem:equivstrat} and deduce that
  $\strat\botheqadm{\star}{\hist}\strat'$ with $\strat'$ Dirac in
  histories of value $\leq 0$, and $\leq 1$ when $\star=\asure$.
\end{proof}

%%% Local Variables:
%%% mode: latex
%%% TeX-master: "main"
%%% End:

\section{Assume admissible synthesis}\label{sec:assume-admiss-synth}
In this section we discuss an \emph{assume-admissible synthesis}
framework for concurrent games. With classical synthesis, one tries
and compute \emph{winning} strategies for all players, i.e.,
strategies that \emph{always win} against \emph{all possible
  strategies} of the other players.  Unfortunately, it might be the
case that such \emph{unconditionally} winning strategies do not exist,
as in our example.  As explained in the introduction, the
assume-admissible synthesis rule relaxes the classical synthesis rule:
instead of searching from strategies that win unconditionally, the new
rule requires winning against the {\em admissible strategies} of the
other players. So, a strategy may satisfy the new rule while not
winning unconditionally.  Nevertheless, we claim that winning against
admissible strategies is good enough assuming that the players are
\emph{rational}; if we assume that players only play strategies that
are good for achieving their objectives, they will be playing
admissible ones.

The general idea of the assume-admissible synthesis algorithm is to
reduce the problem (in a concurrent $n$-player game) to the synthesis
of a winning strategy \emph{in a $2$-player zero-sum concurrent game
  of imperfect information, in the $\sure$-semantics} (even when the
original assume-admissible problem is in the $\asure$-semantics),
where the objective of player $1$ is given by an LTL formula.  Such
games are solvable using techniques presented
in~\cite{DBLP:journals/tocl/ChatterjeeAH11}.

More precisely, from a concurrent game $\game$ in the semantics
$\star\in\{\sure,\asure\}$ and player $p$, we build a game
$\bothgameadmp{\star}{p}$ with the above characteristics, which is used to 
decide the assume-admissible synthesis rule.
%we can decide the existence of a player-$p$ admissible strategy that wins against
%the admissible strategies of the other players. 
If such a solution exists, our algorithm  constructs a witness strategy. 
For example, the game
$\bothgameadmp{\star}{1}$ corresponding to the game in
\figurename~\ref{ex:runex} is given in
\figurename~\ref{fig:imperfectinfo}. The main ingredients for this
construction are the following.
\begin{inparaenum}[(i)]
\item In $\bothgameadmp{\star}{p}$, the protagonist is player $p$, and
  the second player is $-p$.
\item Although randomisation is needed to win in such games in
  general, we interpret $\bothgameadmp{\star}{p}$ in the
  $\sure$-semantics only.  In fact, we have seen that for the
  protagonist, Dirac moves suffice in states of value $0$; so the only
  states where he might need randomisation are those of value~$1$
  (randomisation does not matter if the value is~$-1$ since the
  objective is lost anyway).
  % We thus rely on the facts that, for the protagonist, Dirac
  % moves suffice in states of value $0$; and that he only needs to
  % play a winning strategy from states of value $1$ (from states of
  % values $-1$, everything is lost whatever the other players play).
  Hence we define winning condition to be
  $\Phi(p)\vee \Diamond \bothvalpredstates{\star}{p}{1}$ enabling us
  to consider only histories of values $0$ in
  $\bothgameadmp{\star}{p}$; and thus hiding the parts of the game
  where randomisation might be needed.
  We also prove that we can restrict to Dirac strategies for $-p$ when
  it comes to admissible strategies.
%
%  Dealing with randomisation of the other
%  players is also challenging but we show that every (winning) run of
%  the new game involving only Dirac moves can be simulated in the
%  original game by admissible strategies that play Dirac moves along
%  that particular run while possibly randomising elsewhere.
%  {\color{red}This explains why we have  merged $s_2$ and $s_3$ into a single $\mathsf{Win}$ state in
%%  \figurename~\ref{fig:imperfectinfo}.}
%  \remo{A ce stade, on ne comprend pas du tout cette explication}
\item In order to restrict the strategies to admissible ones, we only
  allow~$\star$-LA moves in $\bothgameadmp{\star}{p}$.  These moves
  can be computed by solving classical $2$-player games
  (\cite{DBLP:journals/jacm/AlurHK02}) using
  Lemma~\ref{lem:dominemove}.
  % These moves can be computed
  % on the basis of the values (see Lemma~\ref{lem:dominemove}), which
  % can, in turn, be computed by solving classical $2$-player
  % games~\cite{DBLP:journals/jacm/AlurHK02}. 
  For example, in \figurename~\ref{fig:imperfectinfo}, moves $c$ and
  $c'$ are removed since they are not $\asure$-LA.
\item Last, since $\star$-admissible strategies are those that are
  both $\star$-LA and $\star$-SCO (see
  Theorem~\ref{theo:classifconc}), we also need to ensure that the
  players play $\star$-SCO. This is more involved than $\star$-LA, as
  the $\star$-SCO criterion is not \emph{local}, and requires
  information about the \emph{sequence of actual moves} that have been
  played, which cannot be deduced, in a concurrent game, from the
  sequence of visited states. So, we store, in the states of
  $\bothgameadmp{\star}{p}$, the moves that have been played by all
  the players to reach the state. For example, in
  \figurename~\ref{fig:imperfectinfo}, the state labelled by
  $s_1, (b,b')$ means that $\game$ has reached $s_1$, and that the
  last actions played by the players were $b$ and $b'$ respectively.
  However, players' strategies must not depend on this extra
  information since they do not have access to this information
  in~$\game$ either.  We thus interpret $\bothgameadmp{\star}{p}$ as a
  game of \emph{imperfect information} where all the states labelled by
  the same state of $\game$ are in the same observation class. Thanks
  to these constructions, we can finally encode the fact that the
  players must play $\star$-SCO strategies in the new objective of the
  games, which will be given as an LTL formula, as we describe below.
\end{inparaenum}

To simplify the presentation, we restrict ourselves to prefix
independent winning conditions, although the results can be
generalised.  Also, to ensure we can effectively solve subproblems
mentioned above, we consider $\omega$-regular objectives.
%To achieve our framework we make several technical
%hypothesis. Throughout this section, we will consider only
%$\omega$-regular objectives which are
%\emph{prefix-independent}\remg{Is it defined?}. 
The values of the histories
depend thus only on their last states, i.e. for all pairs of histories
$\hist_1$ and $\hist_2$: $\last{\hist_1}=\last{\hist_2}$ implies that
$\valh{\star}{\hist_1}{p}=\valh{\star}{\hist_2}{p}$. We thus denote by
$\valh{\star}{s}{p}$ the value $\valh{\star}{\hist}{p}$ of all
histories $\hist$ s.t. $\last{\hist}=s$.  
%Then, Lemma~\ref{lem:dominemove} is simplified since local dominance
%of a move only depends on the last state.
% Then, since, by
% Lemma~\ref{lem:dominemove}, the notion of $\star$-\emph{local
%   dominance} in a history $h$ can be characterised by
% $\valh{\star}{\hist}{p}$, we have, for all states $s$, that a move $a$
% is either $\star$-LA in \emph{all} histories ending $s$ or in
% \emph{none} of them. 
Last, we assume that a player cannot play the
same action from two different states, i.e. $\forall s_1\neq s_2$,
$\choicep{s_1}(p)\cap\choicep{s_2}(p)=\emptyset$. Thus, we say
that \emph{a given move $a$ is $\star$-LA}, meaning that $a$ is
$\star$-LA from all histories ending in the unique state where $a$ is available.

\subparagraph*{The game $\bothgameadmp{\star}{p}$} 
%\noindent\textbf{The game $\bothgameadmp{\star}{p}$} 
Let us now describe precisely the construction of
$\bothgameadmp{\star}{p}$.  Given an $n$-player concurrent game
$\game=(\states, \Sigma,\vinit, (\choicep{p})_{p\in\players},
\transfun)$ with winning condition $\Phi$ considered under semantics
$\star\in\{\sure,\asure\}$, and given a player $p$, we define the
two-player zero-sum concurrent game
$\bothgameadmp{\star}{p}=(\statesbis, \Sigmabis,\vinitbis,
(\choicepbis{p},\choicepbis{-p}),\transfunbis)$ where:
\begin{inparaenum}[(i)]
\item $\statesbis=\states\times\Sigmabis^n\cup\{\vinitbis\}$;
\item $\Sigmabis$ is the set of Dirac $\star$-LA moves in
  $\Sigma$;
\item $\vinitbis=\vinit$ is the initial state;
\item $\choicepbis{p}$ is such that $\choicepbis{p}(s)$ is the set of
  Dirac $\star$-LA moves of $p$ in $s$, for all $s\in S$;
\item $\choicepbis{-p}$ is s.t. for all $s\in S$: $\choicepbis{-p}(s)$
  is the set of moves $\vec a$ of $-p$ in $s$ s.t. for all $q\neq p$,
  $a_q$ is a Dirac $\star$-LA move;
\item $\transfunbis$ updates the state according to $\transfun$,
  remembering the last actions played:
  $\transfunbis(\vinitbis,\vec b)=(\transfun(\vinit,\vec b),\vec b)$
  and $\transfunbis((s,\vec a),\vec b)=(\transfun(s,\vec b),\vec b)$
  for all $s\in S$.
\end{inparaenum}
Note that the game $\bothgameadmp{\star}{p}$ depends on whether
$\star=\asure$ or $\star=\sure$ because the two semantics yield
different sets of LA-moves.  However, we interpret
$\bothgameadmp{\star}{p}$ in the sure semantics, so both players can
play Dirac strategies only in $\bothgameadmp{\star}{p}$.

Let us now explain how we obtain an imperfect information game by
defining an \emph{observation function} $\obs$.  Note that histories
in $\bothgameadmp{\star}{p}$ are of the form:
$\histbis=\vinitbis (s_1,\vec a_1)(s_2,\vec a_2)\cdots(s_n,\vec
a_n)$. Then, let $\obs:\statesbis\to \states$ be the mapping that,
intuitively, projects moves away from states. For example, in
\figurename~\ref{fig:imperfectinfo}, states with observation $s_0$ are
in the dashed rectangle. That is: $\obs(s,\vec a)=s$ for all states
$s$, and $\obs(\vinitbis)=\vinit$.  We extend $\obs$ to histories
recursively: $\obs(\vinitbis)=\vinit$ and
$\obs(\hist(s_n,\vec a_n))=\obs(\hist) s_n$.  To make
$\bothgameadmp{\star}{p}$ a game of imperfect information, we request
that, in $\bothgameadmp{\star}{p}$, players play only strategies $\strat$
s.t. $\strat(\hist_1)=\strat(\hist_2)$ whenever
$\obs(\hist_1)=\obs(\hist_2)$.

We relate the strategies in the original game $\game$
with the strategies in $\bothgameadmp{\star}{p}$, which we need to
extract admissible strategies in $\game$ from the winning strategies
in $\bothgameadmp{\star}{p}$ and thus perform assume-admissible
synthesis.
% In the following, let $i$ be a player.  A run $\overline{\rho}$ is a
% realisation of $\strat\in\rectset_{i}$ if for every $k$,
% $a^k_i =\strat(\obs(\overline{\rho}^{\leq k}))$ with
% $\overline{\rho}^k=(s^k,\vec a^k)$.
First, given a player $p$ strategy $\strat$ in $\game$ (i.e.,
$\strat\in\rectset_p(\game)$), we say that a strategy
$\overline{\strat}\in\rectsetdet_p(\bothgameadmp{\star}{p})$ is a
\emph{realisation} of $\strat$ iff:
\begin{inparaenum}[(i)]
\item $\overline{\strat}$ is Dirac; and
\item $\overline{\strat}(\hist)\in\supp(\strat(\hist)))$ for all $\hist$.
\end{inparaenum}
Note that every $\star$-LA strategy $\strat\in\rectset_i(\game)$
admits realisations $\strat$ in $\rectset_i(\bothgameadmp{\star}{p})$.
Second, given a player $p$ Dirac strategy $\strat$ in
$\bothgameadmp{\star}{p}$ (i.e.,
$\strat\in\rectsetdet_p(\bothgameadmp{\star}{p})$) we say that
$\hat{\strat}\in\rectset_p(\game)$ is an \emph{extension} of $\strat$
iff, for all $\hist\in\prefixoutc(\bothgameadmp{\star}{p},\strat)$:
$\hat{\strat}(\obs(\hist))=\strat(\hist)$.

\subparagraph*{The assume-admissible synthesis technique} 
%\noindent\textbf{The assume-admissible synthesis technique} 
As explained above, the assume-admissible rule boils down to computing
a winning strategy $\stratbis$ for player $p$ in
$\bothgameadmp{\star}{p}$ w.r.t. to winning condition
$\bothPhiAA{\star}{p}$, and extracting, from $\stratbis$, the required
admissible strategy in $\game$.

We will now formally define $\bothPhiAA{\star}{p}$.
Let $p$ be a player (in $\game$); and let us
denote by $\stateofaction(a)$ the (unique) state from which $a$ is available, for all actions~$a$.
We define $\bothAfterHelpMove{\star}{p}$ as
\begin{align*}
  \bothAfterHelpMove{\star}{p} &=\{(s,\vec a)\in\statesbis \mid \exists
                                 s'\in\succmovenew{\stateofaction(a_p),a_p}:
                                 \valh{\star}{s'}{p}\geq 0 
                                 \wedge s'\neq
                                 s \wedge \valh{\star}{s}{p}=0\}.
\end{align*}
That is, when $ (s,\vec a)\in \bothAfterHelpMove{\star}{p}$, in
$\game$, player $p$ has played $a_p$ from $\stateofaction(a_p)$ and,
due to player $-p$'s choice, $\game$ has reached $s$. However, with
another choice of player $-p$, the game could have moved to a
different state $s'$ from which $-p$ can help $p$ to win as
$\valh{\star}{s'}{p}\geq 0$. Intuitively, in runs that visit states of
value $0$ infinitely often, states from $\bothAfterHelpMove{\star}{p}$
should be visited infinitely often for player $i$ to play SCO,
i.e. such runs might not be winning, but this cannot be blamed on
player $p$ who has sought repeatedly the collaboration of the other
players to enforce his objective.  Observe further that the definition
of this predicate requires the labelling of the states (by actions) we
have introduced in $\bothgameadmp{\star}{p}$. For example, in
\figurename~\ref{fig:imperfectinfo},
$\bothAfterHelpMove{\asure}{2}=\big\{\big(s_0,(a,b')\big)
,\big(s_1,(b,b')\big)\big\}$.
% $\big(s_0,(b,a')\big)$ only is in
% $\bothAfterHelpMove{\asure}{1}$. 
%Second, for $v\in\{0,1\}$ 
%we let
%$\bothvalpredstates{\star}{p}{v}=\{(s,\vec a)\mid
%\valh{\star}{s}{p}=v\}$.
We let
$\bothphivalzero{\star}{p}=\Diamond
\neg\bothvalpredstates{\star}{p}{0}\vee \win{p}\vee \Box \Diamond
\bothAfterHelpMove{\star}{p}$ and
$\bothphivalone{\star}{p}=\left(\Diamond
  \bothvalpredstates{\star}{p}{1}\right)\rightarrow \win{p}$. 
Let us define 
% \begin{align}
% \bothPhiAA{\star}{p} &=\bothphivalzero{\star}{p}\wedge
% \left(\left(\bigwedge_{q\neq p}
%     \bothphivalzero{\star}{q}\wedge\bothphivalone{\star}{q}\right)\rightarrow
%   (\win{p}\vee \Diamond\bothvalpredstates{\star}{p}{1})\right).\label{eq:3}
% \end{align}
\(
\bothPhiAA{\star}{p} =
\left(\bigwedge_{q\neq p}
    \bothphivalzero{\star}{q}\wedge\bothphivalone{\star}{q}\right)\rightarrow
  (\win{p}\vee \Diamond\bothvalpredstates{\star}{p}{1}).
\)

To establish the correctness of this approach, we need several
auxiliary lemmas.
\begin{lemma}\label{lem:admtoAAAS}
  If a strategy $\strat\in\rectset_p(\game)$ is $\star$-admissible
  then for all of its realisations
  $\overline{\strat}\in\rectsetdet_p(\bothgameadmp{\star}{p})$,
  $\bothgameadmp{\star}{p},\overline{\strat}\smodels{\bothphivalzero{\star}{p}}$.
\end{lemma}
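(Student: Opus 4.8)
The plan is to unfold the statement $\bothgameadmp{\star}{p},\overline{\strat}\smodels{\bothphivalzero{\star}{p}}$: since $\bothgameadmp{\star}{p}$ is interpreted in the sure semantics and both players use Dirac strategies there, this amounts to showing that \emph{every} play $\run=\vinitbis(s_1,\vec a_1)(s_2,\vec a_2)\cdots$ consistent with $\overline{\strat}$ satisfies the disjunction $\bothphivalzero{\star}{p}=\Diamond\neg\bothvalpredstates{\star}{p}{0}\vee\win{p}\vee\Box\Diamond\bothAfterHelpMove{\star}{p}$. I fix such a play and assume it falsifies the first two disjuncts, i.e.\ it stays in states of value $0$ forever and its projection $\obs(\run)=s_1s_2\cdots$ is not in $\Phi(p)$; the goal then reduces to proving $\Box\Diamond\bothAfterHelpMove{\star}{p}$. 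Since $\strat$ is $\star$-admissible, by Theorem~\ref{theo:classifconc}$(i)$ it is both $\star$-LA and $\star$-SCO, and these are the two facts I would exploit. Writing $\hist_i=\obs(\run_{\leq i+1})$ for the projected history ending in $s_i$, note that each $\hist_i\in\prefixoutc(\strat)$, because the action $(a_{i+1})_p$ that $\overline{\strat}$ plays to move from $s_i$ to $s_{i+1}$ lies in $\supp(\strat(\hist_i))$ (realisation property) while the opponent is free; hence $\star$-SCO gives $\valhstrat{\star}{\hist_i}{\strat}=\valh{\star}{\hist_i}{p}=\valh{\star}{s_i}{p}=0$ for all $i$.

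I would then argue by contradiction, assuming $\run$ visits $\bothAfterHelpMove{\star}{p}$ only finitely often, say never after step $N$. For $i\geq N$, the fact that $(s_{i+1},\vec a_{i+1})\notin\bothAfterHelpMove{\star}{p}$ together with $\valh{\star}{s_{i+1}}{p}=0$ means, by the very definition of $\bothAfterHelpMove{\star}{p}$, that the move $(a_{i+1})_p$ played from $s_i$ has $s_{i+1}$ as its \emph{only} value-$(\geq 0)$ successor: every $s'\in\succmove{s_i}{(a_{i+1})_p}$ with $s'\neq s_{i+1}$ satisfies $\valh{\star}{s'}{p}=-1$. The crucial step is to lift this property from the realised Dirac action to the whole randomised move $\strat(\hist_i)$. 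Since $\strat$ is $\star$-LA, $\strat(\hist_i)$ is a $\star$-LA move, so by Lemma~\ref{lem:characadmmoves}$(iii)$ every $b\in\supp(\strat(\hist_i))$ is $\star$-LA and $b\botheqadm{\star}{\hist_i}(a_{i+1})_p$; applying Lemma~\ref{lem:dominemove} to these equivalent Dirac moves shows that $b$ has exactly the same value-signature as $(a_{i+1})_p$, namely its value-$0$ successors coincide (all equal to $s_{i+1}$) and all its remaining successors have value $-1$. Consequently, every successor reachable under $\strat(\hist_i)$, against \emph{any} opponent move, is either $s_{i+1}$ or a state of value $-1$.

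Finally I would derive the contradiction with $\star$-SCO. From the previous paragraph, from $\hist_i$ every continuation consistent with $\strat$ either moves to a value-$(-1)$ state (from which no help can ever produce a win, since the value is $-1$) or moves to $s_{i+1}$; iterating, the only way to avoid value $-1$ is to follow the fixed value-$0$ sequence $s_is_{i+1}s_{i+2}\cdots$, which is precisely our play and hence is not in $\Phi(p)$. Thus no $\tau\in\rectset_{-p}$ can yield $(\strat,\tau)\starmodelsh{\hist_i}{\Phi(p)}$: in the sure semantics every outcome that avoids value $-1$ equals the non-winning sequence, while any outcome reaching value $-1$ has a non-winning extension; in the almost-sure semantics an $\asure$-winning $\tau$ would have to assign probability $0$ to value-$(-1)$ states, forcing the play to be that same sequence almost surely, whence $\Prob_{(\strat,\tau)}(\Phi(p)\mid\cyl{\hist_i})=0$. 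Either way $\valhstrat{\star}{\hist_i}{\strat}=-1$, contradicting $\valhstrat{\star}{\hist_i}{\strat}=0$ established above. Hence $\run$ must visit $\bothAfterHelpMove{\star}{p}$ infinitely often, which gives $\bothphivalzero{\star}{p}$ and the lemma. I expect the lifting step in the second paragraph, turning a property of the single realised action into a property of the full randomised move through the $\star$-LA support-equivalence of Lemmas~\ref{lem:characadmmoves} and~\ref{lem:dominemove}, to be the main obstacle, as it is exactly the point where local admissibility is indispensable: a merely SCO but non-LA strategy could keep value $0$ winnable through a \emph{different} support action and so never be forced into $\bothAfterHelpMove{\star}{p}$.
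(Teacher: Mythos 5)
Your proof is correct, and its skeleton coincides with the paper's: both arguments fix a play violating the first two disjuncts of $\bothphivalzero{\star}{p}$, locate the point $N$ (the paper's $k_0$) after which $\bothAfterHelpMove{\star}{p}$ is never visited, show that from $\hist_N$ every continuation consistent with $\strat$ either follows the losing projected play or deviates into a history of value $-1$, and conclude that $\valh{\star}{\hist_N}{\strat}=-1$, contradicting the value $0$ guaranteed by $\star$-SCO. The genuine difference lies in how the gap between the randomised strategy $\strat$ and its Dirac realisation $\overline{\strat}$ is bridged. The paper normalises: it assumes without loss of generality that $\strat$ plays Dirac moves at value-$0$ histories (a step resting on the machinery behind Proposition~\ref{theo:Diracsuffices}), so that $\strat(\hist_i)$ literally equals the realised action, and then it only needs the SCO half of Theorem~\ref{theo:classifconc}. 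You instead keep $\strat$ randomised and invoke the LA half of Theorem~\ref{theo:classifconc} together with Lemma~\ref{lem:characadmmoves}(\ref{it3:characadmmoves}) and Lemma~\ref{lem:dominemove} to transfer the property ``every opponent deviation leads to value $-1$'' from the realised action to every action in $\supp(\strat(\hist_i))$, using that all support actions are $\botheqadm{\star}{\hist_i}$-equivalent and hence have identical value signatures. Your route is locally a bit longer but more self-contained: the paper's normalisation silently requires checking that $\overline{\strat}$ remains a realisation of the normalised strategy and that normalisation preserves admissibility, both of which your argument avoids; it also makes explicit, as you note, exactly where local admissibility (and not SCO alone) is indispensable.
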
 
\begin{proof}
  Assume toward a contradiction that there is a $\star$-admissible
  strategy $\strat\in\rectset_p(\game)$ for which there exists a
  realisation
  $\overline{\strat}\in\rectsetdet_p(\bothgameadmp{\star}{p})$ that
  does not satisfy $\bothphivalzero{\star}{p}$.  We can further assume
  wlog.~that $\strat$ plays Dirac moves in states of value $0$.  There
  is a run
  $\rho\in \outcome{\bothgameadmp{\star}{p},\overline{\strat}}$ that
  does not satisfy $\bothphivalzero{\star}{p}$. More precisely, $\rho$
  satisfies
  \[
    \Box\bothvalpredstates{\star}{i}{0}\wedge \neg \win{i}\wedge
    \Diamond \Box \neg\bothAfterHelpMove{\star}{p}.
  \]
  The run $\rho$ does not satisfy $\Phi(p)$ but all values of its
  prefixes are equal to $0$ and there is $k_0$ such that for every
  $k\geq k_0$, $\rho_k \not\in \bothAfterHelpMove{\star}{p}$. We now
  show that for every $\rho'\in \outcome{\game,\strat}$, such that
  $\obs(\rho_{\leq k_0})\prefix \rho'$, it holds that
  $\rho'\not \in \Phi(p)$. If $\rho'=\obs(\rho)$ then
  $\rho'\not \in \Phi(p)$. If $\rho'\neq\obs(\rho)$ then there exists
  $k\geq k_0$ for which, $\rho'_{\leq k} =\obs(\rho)_{\leq k}$ and
  $\rho'_{\leq k+1} \neq\obs(\rho)_{\leq k+1}$. Let
  $\hist'= \rho'_{\leq k} =\obs(\rho)_{\leq k}$, $s=\last{\hist'}$,
  $a=\overline{\strat}(\rho_{\leq k})$. $\hist'$ as a prefix of
  $\obs(\rho)$ has value $0$ so $\strat(\hist')$ is Dirac and thus
  equal to $a$.  %\rho'_{k} =\obs(\rho)_{k}$
  Let
  $c,c'\in\choicep{-p}(s)$ such that
  $\transfun(s,(a,c))=\obs(\rho)_{k+1}$,
  $\transfun(s,(a,c'))=\rho'_{k+1}$.  From $\obs(\rho)_{k+1} \not \in
  \bothAfterHelpMove{\star}{p}$ we know that $\valh{\star}{\rho'_{\leq
      k+1}}{p}=\valh{\star}{\hist'\transfun(s,(a,c'))}{p}=-1$.  We
  have found a prefix of $\rho'$ that has value $-1$, thus $\rho'\not
  \in\Phi(p)$. We have proved that for all run $\rho'\in
  \outcome{\game,\strat}$ such that $\obs(\hist)\prefix
  \rho'$, it holds that $\rho'\not \in
  \Phi(p)$. This implies that
  $\valhstrat{\star}{\obs(\hist)}{\strat}=-1< 0=
  \valh{\star}{\obs(\hist)}{p}$ proving that
  $\strat$ is not $\star$-SCO and hence not
  $\star$-admissible. This is a contradiction.
\end{proof}

\begin{lemma}\label{lem:AAAStoadm} Given a $\star$-admissible strategy
  $\strat\in
  \rectset_p(\game)$ %deterministic in histories of values $\leq 0$,
  and $\overline{\strat}\in\rectsetdet_p(\bothgameadmp{\star}{p})$ one
  of its realisation. If a run
  $\rho\in\outcome{\bothgameadmp{\star}{p},\overline{\strat}}$
  satisfies
  $\Box (\neg \bothvalpredstates{\star}{p}{1})\wedge\bigwedge_{q\neq
    p} \bothphivalzero{\star}{q}\wedge\bothphivalone{\star}{q}$ in
  $\bothgameadmp{\star}{p}$ then there exists a profile
  $\tau\in\rectset_{-p}(\game)$ of $\star$-admissible strategies such
  that $\outcome{\game,(\strat,\tau)}=\{\obs(\rho)\}$. \end{lemma}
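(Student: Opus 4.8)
The plan is to promote the single run $\rho$ to a profile of admissible adversary strategies that reproduces $r=\obs(\rho)$ deterministically. Since $\rho$ lives in $\bothgameadmp{\star}{p}$ it has the shape $\vinitbis(s_1,\vec a_1)(s_2,\vec a_2)\cdots$, every recorded $\vec a_k$ being a tuple of Dirac $\star$-LA moves. For each $q\neq p$ I let $\tau_q$ play, at every history that is still a prefix of $r$, the $q$-component of the move that $\rho$ records at that step; once the play leaves $r$, $\tau_q$ switches to a fixed $\star$-winning strategy $W$ if the current state has value $1$ for $q$, and to a fixed $\star$-admissible strategy otherwise (the value-$-1$ case being irrelevant). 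By Proposition~\ref{theo:Diracsuffices} I may assume $\strat$ is Dirac on histories of value $\leq 0$, and since $\rho$ satisfies $\Box\neg\bothvalpredstates{\star}{p}{1}$ the value of $p$ along $r$ never reaches $1$; hence $\strat$ is Dirac all along $r$. As $p$ and every $\tau_q$ then play Dirac moves that keep the play on $r$, the profile $(\strat,\tau)$ admits $r$ as its only outcome, so $\outcome{\game,(\strat,\tau)}=\{r\}$ already holds; it remains only to prove each $\tau_q$ is $\star$-admissible.

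By Theorem~\ref{theo:classifconc} this means checking that $\tau_q$ is $\star$-LA and $\star$-SCO. Local admissibility is immediate: the recorded moves played on $r$ are $\star$-LA by construction of $\bothgameadmp{\star}{p}$, and off $r$ the strategy $\tau_q$ follows an admissible, hence $\star$-LA, strategy. I record an auxiliary fact used below: \emph{at a history of value $1$ for $q$, every Dirac $\star$-LA move is value preserving}, i.e.\ all its successors again have value $1$. Indeed, let $a$ be such a move and $b$ any Dirac move all of whose successors have value $1$ (take a move in the support of an $\star$-winning move, whose reachable successors must all have value $1$). Then $a\bothdommoveeq{\star}{\hist}b$ by condition~(\ref{item:1}) of Lemma~\ref{lem:dominemove}; since $a$ is not dominated, also $b\bothdommoveeq{\star}{\hist}a$, which by condition~(\ref{item:1}) forces every successor of $a$ to have value $1$. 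Consequently, once value $1$ for $q$ is reached, on or off $r$, it persists.

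For $\star$-SCO, off $r$ the strategy $\tau_q$ is admissible and thus SCO, so I only examine histories along $r$. At value $-1$ there is nothing to prove. At a history $\hist\prefix r$ of value $1$ I must show $\valhstrat{\star}{\hist}{\tau_q}=1$, i.e.\ that $\tau_q$ is $\star$-winning from $\hist$; this is the delicate point in the almost-sure semantics, where winning genuinely needs randomisation while we still need $\tau_q$ Dirac on $r$. The resolution is an on-/off-path decomposition: against any adversary, the runs compatible with $\tau_q$ from $\hist$ split into those that stay on $r$ forever and those that leave $r$. The former set is the single run $r$, which lies in $\Phi(q)$ because $\rho$ satisfies $\bothphivalone{\star}{q}$ and $q$'s value has reached $1$; the latter leave $r$ at a history which, by the value-preservation fact, still has value $1$, and there $\tau_q$ plays $W$ and wins almost surely. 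Summing these two contributions gives probability $1$ of satisfying $\Phi(q)$, so $\tau_q$ is $\star$-winning from $\hist$ while remaining Dirac on $r$.

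Finally, at a history $\hist\prefix r$ of value $0$ I must show $\valhstrat{\star}{\hist}{\tau_q}=0$; as this is always $\leq\valh{\star}{\hist}{q}=0$, it suffices to produce one adversary $\tau'$ of $q$ with $(\tau_q,\tau')$ $\star$-winning from $\hist$. Because $\tau_q$ plays a $\star$-LA move at $\hist$ and the value is $0$, Lemma~\ref{lem:dominemove} guarantees a successor of value $\geq 0$ (an LA move cannot send all successors to $-1$ when some move offers a value-$\geq 0$ successor, else it would be strictly dominated), so a cooperative adversary can always keep the value $\geq 0$; the hypothesis $\bothphivalzero{\star}{q}$ on $\rho$—in particular its disjunct $\Box\Diamond\bothAfterHelpMove{\star}{q}$, which marks exactly the positions where such help is still on offer—lets this adversary steer the play to a history of value $1$ (from which the value-$1$ argument applies) or directly into $\Phi(q)$. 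Combining local admissibility with this analysis shows each $\tau_q$ is $\star$-admissible, which together with $\outcome{\game,(\strat,\tau)}=\{r\}$ proves the lemma. The main obstacle is precisely this last, value-$0$ step: ensuring the cooperative helper eventually \emph{wins} rather than merely cycling among value-$0$ states, for which the predicate $\bothAfterHelpMove{\star}{q}$ and the fact that $\rho$ visits it infinitely often are the essential ingredients.
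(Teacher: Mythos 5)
Your construction of $\tau$ and your treatment of the value-$1$ case do follow the paper's proof (the same on-/off-path decomposition into the single run $\obs(\rho)$ and the deviation cylinders), and your auxiliary fact that Dirac $\star$-LA moves preserve value $1$ correctly fills in a step the paper leaves terse. However, your proof of $\outcome{\game,(\strat,\tau)}=\{\obs(\rho)\}$ has a genuine gap: you may not ``assume by Proposition~\ref{theo:Diracsuffices} that $\strat$ is Dirac on histories of value $\leq 0$''. That proposition only produces \emph{another} strategy $\strat'$ with $\strat\botheqadmstrat{\star}\strat'$, and the relation $\botheqadmstrat{\star}$ preserves which adversaries a strategy wins against, not the outcome sets it generates; the lemma's conclusion is a statement about the outcome set of the \emph{given} $\strat$, so it is not invariant under this replacement. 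Concretely, proving the statement for $\strat'$ yields $\outcome{\game,(\strat',\tau)}=\{\obs(\rho)\}$, which (since $\strat'$ plays support actions of $\strat$) gives only the inclusion $\{\obs(\rho)\}\subseteq\outcome{\game,(\strat,\tau)}$, not the equality you need --- and the equality for $\strat$ itself is exactly what the completeness proof of Theorem~\ref{theo:aasynth} consumes. The paper avoids this by keeping $\strat$ randomised: since $\strat(\obs(\rho_{\leq k}))$ is $\star$-LA with $a^k_p$ in its support, Lemma~\ref{lem:characadmmoves}~(\ref{it3:characadmmoves}) makes every support action equivalent to $a^k_p$, and Lemma~\ref{lem:samedistronestep} (via Lemma~\ref{lem:dominemove}~(\ref{item:2})) then shows that, against the Dirac move $a^k_{-p}$, the whole support leads to $\obs(\rho_{k+1})$ with probability $1$. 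This argument is missing from your proof.

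The second gap is the value-$0$ case of SCO for $\tau_q$, which you yourself identify as the crux but then assert rather than prove, with a mechanism that is not the right one. First, $\bothphivalzero{\star}{q}$ is a disjunction: you cannot invoke ``its disjunct $\Box\Diamond\bothAfterHelpMove{\star}{q}$'' as if it held, since the run might satisfy $\bothphivalzero{\star}{q}$ through $\win{q}$ or through $\Diamond\neg\bothvalpredstates{\star}{q}{0}$ instead. The paper accordingly splits: if $\obs(\rho)\in\Phi(q)$, the profile $(\strat,\tau)$ itself witnesses $\valhstrat{\star}{\hist}{\tau_q}\geq 0$; otherwise it proves \emph{by contradiction} --- combining the disjunction with $\bothphivalone{\star}{q}$, the fact that values $\pm 1$ are absorbing along the run, and $\star$-LA-ness of the moves $a^k_q$ --- that some position $\rho_{k+1}$ with $k\geq|\hist|$ lies in $\bothAfterHelpMove{\star}{q}$. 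Second, the help at such a position is a \emph{single off-path deviation}: the coalition $-q$ plays the alternative response leading to a state $s''\neq\obs(\rho_{k+1})$ with $\valh{\star}{s''}{q}\geq 0$, and from the off-path history $\obs(\rho_{\leq k})s''$ one invokes the admissibility (hence SCO) of $\tau_q$'s off-path behaviour to obtain a cooperative winning continuation. Your description --- a helper that stays on the run, ``keeps the value $\geq 0$'' step by step and ``steers the play to a history of value $1$ or directly into $\Phi(q)$'' --- does not work: keeping the value $\geq 0$ forever does not imply $\Phi(q)$; the deviation state $s''$ may have value $0$, so your value-$1$ argument need not apply there; and on the run $\tau_q$ is pinned to the recorded moves, so no adversary choice can make it ``steer'' anywhere. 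Without the existence proof of the $\bothAfterHelpMove{\star}{q}$ visit and the hand-over to the off-path admissible behaviour, this case remains open.
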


\begin{proof}
  For every $k\geq 1$ there exists moves
  $a^k_1,\ldots,a^k_n\in \Sigmabis$ such that
  $\transfun(\rho_k, a^k_1,\ldots,a^k_n)=\rho_{k+1}$ and
  $a^k_p=\overline{\strat}(\rho_{\leq
    k})$. %(indeed, $\overline{\strat}$ is deterministic on prefixes of $\rho$ because their value is $\leq 0$).
  For every $q\neq p$, let $\tau_q\in\rectset_{q}(\game)$ be a
  strategy such that for every $k\geq 1$,
  $\tau_q(\obs(\rho_{\leq k}))=a^k_q$ and that plays an admissible
  strategy as soon as the current history is not a prefix of
  $\obs(\rho)$. We now show that
  $\outcome{\game,(\strat,\tau)}=\{\obs(\rho)\}$. Denote by
  $a^k_{-p}=(a^k_q)_{q\in\players\setminus\{p\}}$ the Dirac move
  played by $-p$ in $\obs(\rho_{\leq k})$. Since $\strat$ is
  $\star$-admissible; for every $k\ge 1$, $\strat(\rho_{\leq k})$ is
  $\star$-LA and by Lemma \ref{lem:characadmmoves}
  (\ref{it3:characadmmoves}) equivalent to the move of its support
  $a^k_p$, then by Lemma \ref{lem:samedistronestep}, we get
  $\transfunrand(\obs(\rho_{k}),(\strat(\obs(\rho_{\leq
    k})),a^k_{-p}))(\obs(\rho_{k+1}))=\transfunrand(\obs(\rho_{k}),(a^k_p,a^k_{-p}))(\obs(\rho_{k+1}))=1$.
  This implies that $\outcome{\game,(\strat,\tau)}=\{\obs(\rho)\}$.

  We now show that $\tau_q$ is admissible in $\game$ for every
  $q\neq p$.
  It is clearly $\star$-LA as the moves $a^k_q$ are $\star$-LAs and
  $\tau_q$ plays $\star$-LA moves at histories not prefix of
  $\obs(\rho)$.
  It is also $\star$-SCO at histories not prefix of $\obs(\rho)$. We
  now show that $\tau_q$ is $\star$-SCO at every
  $\hist\prefix\obs(\rho)$. %, then we consider two cases.

  If $\valh{\star}{\hist}{q}=1$ then by $\bothphivalone{\star}{q}$,
  $\rho$ is winning for $\Phi(q)$, and hence so is $\obs(\rho)$. As
  $a^k_q$ is a $\star$-LA move for every $k$, all histories of the
  form $\obs(\rho_{\leq k})$ for $k\geq |\hist|$ and their successors
  through $a^k_q$ have value $1$. 
  Consider a profile $\profile$ such that $\profile_q=\tau_q$,
  $\hist\in \prefixoutc(\profile)$, we show that 
  $\game,\profile \starmodelsh{\hist}{\Phi(q)}$. We do this part of proof 
  for $\star=\asure$ (the proof for $\star=\sure$ is a bit easier and omitted).
  We show that $\Prob_{\profile}(\neg \Phi(q) \cap \cyl{\hist})=0$. 
  For this, we decompose the set $\neg \Phi(q) \cap \cyl{\hist}$ into disjoint sets of runs: 
  \begin{equation}\label{eq:bigcup}
  \neg \Phi(q)\cap \cyl{\hist}= (\neg \Phi(q)\cap \{\obs(\rho)\})\cup  \bigcup_{k\geq |\hist|}\bigcup_{s'}\left(\neg \Phi(q)\cap \cyl{\obs(\rho_{\leq k})s'}\right)
  \end{equation}
  where the last union range over the state
  $s'\in\supp(\transfunrand(\rho_{k},\profile(\obs(\rho_{\leq
    k})))\setminus\{\obs(\rho_{k})\}$.
  %Since $\tau_q$ plays a $\star$-admissible strategy from every 
  The histories $\obs(\rho_{\leq k}) s'$ quantified above have value
  $1$, and $\tau_q$ is $\asure$-admissible at these histories by
  construction since the histories are not prefixes
  of~$\obs(\rho)$. Hence, $\asure$-winning from these histories, so
  $\Prob_{\profile}(\neg \Phi(q) \cap \cyl{\obs(\rho_{\leq k})
    s'})=0$.  $\Prob_{\profile}(\neg \Phi(q) \cap \{\obs(\rho)\})=0$
  because $\obs(\rho)\in\Phi(q)$.  The set
  $\neg \Phi(q)\cap \cyl{\hist}$ is a countable union of sets of
  probability $0$, it is thus also of probability $0$.  The proof for
  the sure semantics is similar: it suffices to replace the fact that
  a set has probability zero by emptiness of this set.  We have thus
  proved that
  $\valhstrat{\star}{\hist}{\strat}=\valh{\star}{\hist}{q}=1$.

  If $\valh{\star}{\hist}{q}=0$. We show that there exists a profile
  $\profile$ such that $\profile_q=\tau_q$,
  $\hist\in\prefixoutcome{\game,\profile}$ and
  $\profile \starmodels{\Phi(q)}$. Let $k_0=|\hist|$. If
  $\obs(\rho)\in\Phi(q)$ then it suffices to take
  $\profile=(\strat,\tau)$. Otherwise we begin to show that there is
  $k\geq k_0$ such that $\rho_{k+1}\in\bothAfterHelpMove{\star}{q}$.
  Assume toward a contradiction that such a $k$ does not exists, then
  $\bothphivalzero{\star}{q}$ tells us that there exists
  $k_2$ %$\rho_{\leq k_0}\subseteq\rho_{\leq k_2}\prefix \rho$,
  such that $\rho_{k_2} \in \bothvalpredstates{\star}{q}{-1}$ or
  $\rho_{k_2} \in \bothvalpredstates{\star}{q}{1}$. Necessarily
  $k_2>k_0$ because as $\star$-LA moves only are played the states of
  value $1$ cannot be escaped once entered. In the case
  $\rho_{k_2} \in \bothvalpredstates{\star}{q}{1}$, $\rho$ satisfies
  $\Diamond \bothvalpredstates{\star}{q}{1}$ and hence $\Phi(q)$ by
  virtue of $\bothphivalone{\star}{q}$. This contradict the fact that
  we are considering the case $\obs(\rho)\not\in\Phi(q)$. Consider now
  that $\rho_{k_2} \in \bothvalpredstates{\star}{q}{-1}$. Let $k_1$ be
  the unique number in $\{k_0,\ldots,k_2-1\}$ such
  that %$\rho_{\leq k_0}\prefix \hist''\prefix\hist'$ and
  $\valh{\star}{\rho_{k_1}}{q}=0$ and
  $\valh{\star}{\rho_{k_1+1}}{q}=-1$. Since $k_1\geq k_0$,
  $\rho_{k_1+1}\not\in\bothAfterHelpMove{\star}{q}$, from which we
  deduce that
  $\succmove{\rho_{k_1}}{a_q^k}\subseteq
  \bothvalpredstates{\star}{q}{-1}$. This contradicts the fact that
  $a_q^k$ is a $\star$-LA move in a history of value $0$. We have
  showed the existence of a prefix length $k\geq |\hist|$ such that
  $\rho_{k+1}\in\bothAfterHelp{\star}{\rho_k}{q}$ and we resume the
  proof of
  $\valhstrat{\star}{\hist}{\strat}=0=\valh{\star}{\hist}{q}$. There
  exists a state
  $s''\in\succmovenew{a^{k+1}_q}\setminus\{\obs(\rho_{k+1})\}$ such
  that $\valh{\star}{s''}{q}\geq 0$. Moreover $\tau_q$ is
  $\star$-admissible from $\obs(\rho_{\leq k}) s''$, so there is a
  profile $\profile$ such that
  $\obs(\rho_{\leq k}) s''\in\prefixoutcome{\profile}$,
  $\profile_q=\tau_q$ and $\game,\profile\starmodels{\Phi(q)}$. Since
  $\hist\prefix \obs(\rho_{\leq k}) s''$, this implies that
  $\valhstrat{\star}{\hist}{\strat}\geq 0$ as required.

  It remains the case of histories $\hist$ prefix of $\obs(\rho)$ such
  that $\valh{\star}{\hist}{q}=-1$. This case is
  straightforward %$\valhiststrat{\hist}{}{}{\strat}=\valhist{\hist}{p}{}=-1$
  because the following sequence of inequalities holds for every
  history
  $-1\leq\valhstrat{\star}{\hist}{\strat}\leq\valh{\star}{\hist}{q}$.
  At the end, for every $q\neq p$, $\tau_q$ is $\star$-admissible as
  both $\star$-LA and $\star$-SCO, and
  $\obs(\rho)=\outcome{\game,(\strat,\tau)}$. 
\end{proof}

\begin{lemma}[Admissible and dominant synthesis]\label{lem:admsynth}
  In concurrent games with prefix independent $\omega$-regular winning
  conditions with semantics $\star\in\{\sure,\asure\}$, given a finite
  memory strategy $\strat$ for player $p$ that is $\star$-LA, one can
  construct effectively a finite memory strategy $\strat'$ such that
  $\strat\bothdomstrateq{\star}\strat'$.
\end{lemma}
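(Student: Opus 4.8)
The plan is to make the dynamic construction from the proof of Theorem~\ref{theo:exist-adm}(\ref{item:8}) effective and finite-memory, exploiting that the winning condition is prefix-independent and $\omega$-regular. First I would precompute the data on which the construction branches. Since $\Phi(p)$ is prefix-independent, the value $\valh{\star}{s}{p}$ depends only on the state $s$, and the three value classes $\bothvalpredstates{\star}{p}{1}$, $\bothvalpredstates{\star}{p}{0}$, $\bothvalpredstates{\star}{p}{-1}$ are computable: $\bothvalpredstates{\star}{p}{1}$ is the region from which $p$ has a $\star$-winning strategy, obtained by solving the two-player zero-sum game of $p$ against the coalition $-p$ (\cite{DBLP:journals/tocl/ChatterjeeAH11} for $\star=\asure$, \cite{DBLP:journals/jacm/AlurHK02} for $\star=\sure$); $\bothvalpredstates{\star}{p}{1}\cup\bothvalpredstates{\star}{p}{0}$ is the cooperative region, obtained by letting all players collaborate to enforce $\Phi(p)$, which is a one-coalition $\omega$-regular problem; and $\bothvalpredstates{\star}{p}{-1}$ is the complement. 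The set of Dirac $\star$-LA moves is computable via Lemma~\ref{lem:dominemove} by solving classical games \cite{DBLP:journals/jacm/AlurHK02}.

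The strategy $\strat'$ is then a finite-memory Mealy machine operating in three modes selected by the value of the current state. In the \emph{dead} mode ($\valh{\star}{\last{\hist}}{p}=-1$) it plays any fixed Dirac $\star$-LA move; this is memoryless and $\star$-SCO is vacuous. In the \emph{winning} mode ($\valh{\star}{\last{\hist}}{p}=1$) it switches, at the first entry into the region, to a fixed finite-memory $\star$-winning strategy, which exists and is computable by the references above (deterministic for $\star=\sure$, finite-memory randomised for $\star=\asure$); such a strategy plays only $\star$-LA moves, since a winning successor has maximal value (Lemma~\ref{lem:dominemove}). Because any winning strategy weakly dominates every strategy from a value-$1$ history, this switch is compatible with dominating $\strat$. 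The delicate mode is the \emph{cooperative} mode on $\bothvalpredstates{\star}{p}{0}$, described next.

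For the cooperative mode I would fix, for each state $t$ of value $0$, a finite \emph{help plan}: a lasso-shaped cooperative witness in the product of $\game$ (restricted to $\star$-LA moves) with a deterministic automaton for $\Phi(p)$, along which the full coalition either reaches $\bothvalpredstates{\star}{p}{1}$ or fulfils $\Phi(p)$ while staying in value-$0$ states. As this is a fully cooperative $\omega$-regular objective, such a witness can be chosen positional on the product, hence finite-memory; projecting it to $p$ and, where necessary, replacing a dominated move by an equivalent $\star$-LA Dirac move (Lemma~\ref{lem:characadmmoves}, Lemma~\ref{lem:dominemove}) yields a finite-memory $p$-plan that, from every value-$0$ history, \emph{offers} a cooperative continuation winning for $p$. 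Player $p$ follows the current plan; whenever player $-p$ deviates and the play lands in a fresh value-$0$ state $s'$ off the planned one, $\strat'$ restarts the help plan attached to $s'$, and when the play leaves $\bothvalpredstates{\star}{p}{0}$ it passes to the winning or dead mode. Since the plan is restarted in a state-indexed way, only finitely much memory is used; and since at \emph{every} value-$0$ history the current plan exhibits a $\tau\in\rectset_{-p}$ with $(\strat',\tau)\starmodelsh{\hist}{\win{p}}$, we get $\valhstrat{\star}{\hist}{\strat'}=0=\valh{\star}{\hist}{p}$, so $\strat'$ is $\star$-SCO there. Together with the other two modes, $\strat'$ is $\star$-SCO and $\star$-LA, hence $\star$-admissible by Theorem~\ref{theo:classifconc}(\ref{item:7}).

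It remains to secure $\strat\bothdomstrateq{\star}\strat'$ and finiteness. As in the proof of Theorem~\ref{theo:exist-adm}(\ref{item:8}) I would run the construction so that it plays $\strat$'s own move \emph{by default}, deviating to the gadgets above only at histories where $\strat$ is not $\star$-SCO (it fails to win a value-$1$ history, or its help plan would be abandoned in a value-$0$ history); each deviation is to a strategy that dominates $\strat$, and composing these dominations exactly as in Theorem~\ref{theo:exist-adm}(\ref{item:8}) yields $\strat\bothdomstrateq{\star}\strat'$. The total memory is the product of the Mealy machine of $\strat$, the state-based value tracker, the mode, the current help-plan pointer, and the memory of the fixed winning strategy, all finite; and every ingredient was computed effectively. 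The main obstacle is exactly the cooperative mode: one must convert the potentially infinite, dynamically re-chosen bookkeeping $(\strat_\hist,\tau_\hist)$ of Theorem~\ref{theo:exist-adm}(\ref{item:8}) into state-indexed finite help plans while \emph{simultaneously} keeping every played move $\star$-LA and preserving, under arbitrary adversarial deviations, the invariant that a cooperative win is still available from each visited value-$0$ history --- i.e.\ reconciling $\star$-LA-ness (Lemma~\ref{lem:dominemove}) with the cooperative reachability of $\bothvalpredstates{\star}{p}{1}$ that underlies $\star$-SCO.
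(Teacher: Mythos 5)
Your proposal is correct and follows essentially the same route as the paper's proof: both exploit prefix-independence to make values state-based, take the product of the game with the finite-memory machine of $\strat$, patch the machine exactly at the states where it fails to be $\star$-SCO (plugging in a computable finite-memory $\star$-winning strategy where the value is $1$, and a state-indexed, $\star$-LA-compatible lasso witness of $\Phi(p)$ where the value is $0$, restarted upon deviations), and conclude admissibility and weak domination via Theorem~\ref{theo:classifconc}. The only cosmetic difference is that you build the lassos in a product with a deterministic automaton for $\Phi(p)$, whereas the paper asserts their existence directly in the game; this changes nothing essential.
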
\label{lemma:build-strat}
\begin{proof}
  We assume that the strategy $\strat$ is given by a stochastic Moore
  machine ${\cal M}_{\strat}$ with set of memory states
  $M=\{ m_1,m_2, \dots , m_k \}$, with a designated initial memory
  state $m_{\sf init}$ and such that after an history $h$, the machine
  is in memory state ${\cal M}_{\strat}(h) \in M$. Each such memory
  state is labelled with a move noted ${\cal L}_{\strat}(m)$. If we
  take the product of ${\cal M}_{\strat}$ with the game $\game$, we
  obtain a structure with states that are pairs $(s,m)$ where $s$ is a
  state of the original game and $m$ is a memory state of
  ${\cal M}_{\strat}$.  This product can be seen as a new Moore
  machine that behaves exactly as ${\cal M}_{\strat}$. From now on, we
  make the hypothesis that ${\cal M}_{\strat}$ has this state space.

  We note $\valh{\star}{s}{p}$ the value of state $s$ in $\game$, this
  value is well defined as $\Phi(p)$ is prefix independent, and we
  note $\valh{\star}{s,m}{\sigma}$ the value of the strategy $\sigma$
  for any history $h$ that ends up in state $s$ of the $\game$ and
  memory state $m={\cal M}_{\strat}(h)$ of the machine.

  As the strategy $\sigma$ is $\star$-LA, if it is not admissible then
  it must be the case that there are state $s$, memory state $m$, and
  histories $h$ such that after history $h$, the game $\game$ is in
  state $s$, ${\cal M}_{\sigma}$ is in state $m$, and
  $\valh{\star}{s}{p} \not= \valh{\star}{s,m}{\sigma}$, i.e. $h$ is a
  witness that shows that $\sigma$ is not SCO. Then, we will modify
  systematically ${\cal M}_{\strat}$ in a way that it behaves as
  ${\cal M}_{\sigma}$ in all other histories and plays SCO in the
  histories that witness the fact that $\sigma$ is not SCO.

  We only need to consider the two following cases:
  \begin{enumerate}
  \item $s$ and $(s,m)$ are such that: $\valh{\star}{s}{p}=1$ and
    $\valh{\star}{s,m}{\sigma} \leq 0$, then we replace $(s,m)$ by a
    sub-machine that implements a $\star$-winning strategy from
    $(s,m)$. We know that finite memory machines that are
    $\star$-winning always exist and can be computed effectively for
    all $\omega$-regular objectives.
  \item $s$ and $(s,m)$ are such that: $\valh{\star}{s}{p}=0$ and
    $\valh{\star}{s,m}{\sigma}=-1$, then we replace $(s,m)$ by a
    sub-machine that plays any admissible strategy from $s$. We show
    this by establishing in the next claim that there is always such a
    finite state strategy and it can be computed effectively.
   \end{enumerate}

   \noindent {\bf Claim}. Given a game $\game$, a state $s$, and a
   player $p$, we can construct a finite state stochastic Moore
   machine that encodes an admissible strategy.

   We establish this claim constructively. We consider the following
   case study to describe the machine:
  \begin{itemize}
  \item if the value of $\valh{\star}{s}{p}=1$, then the machine plays
    a finite state $\star$-winning strategy, such a finite memory
    strategy always exists and can be computed effectively.
  \item if the value of $\valh{\star}{s}{p}=-1$, then the machine
    plays arbitrarily.
  \item if the value of $\valh{\star}{s}{p}=0$, then the machine
    selects a finite lasso-shape path
    $\rho=\rho_1 \cdot \rho^{\omega}_2$ such that $\rho$ is compatible
    with $\star$-LA moves of player $p$ and such that
    $\rho \in \Phi(p)$. Such a finite lasso-path always exists as
    $\valh{\star}{s}{p}=0$. Then the machine plays according to this
    lasso-path either forever or up to a deviation by another
    player. If the lasso path is played forever then the outcome is
    $\rho \in \Phi(p)$, and if there is a deviation, the new state of
    the game is $s'$ and the three rules here are applied from $s'$
    (according to the value of state $s'$).
  \end{itemize}
  Clearly, if entering a state $s$ with value $0$, we always choose
  the same finite lasso-path then the machine has finite state. As in
  states with value $0$, it always plays $\star$-LA moves and $\rho$
  from $s$ is such that $\rho \in \Phi(p)$, then we conclude that the
  strategy is $\star$-SCO and $\star$-LA and thus
  $\star$-admissible. So, we are done.
\end{proof}

\begin{theorem}[Assume-admissible synthesis]\label{theo:aasynth}
  Player $p$ has a $\star$-admissible strategy $\strat$ that is
  $\star$-winning against all player $-p$ $\star$-admissible
  strategies in $\game$ iff Player $p$ has an $\sure$-winning strategy
  in $\bothgameadmp{\star}{p}$ for the objective
  $\bothPhiAA{\star}{p}$. Such a $\star$-admissible strategy $\strat$
  can be effectively computed (from any player $p$ $\sure$-winning
  strategy in $\bothgameadmp{\star}{p}$).
\end{theorem}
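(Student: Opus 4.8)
The plan is to prove the two implications separately, exploiting the realisation and extension maps between $\game$ and $\bothgameadmp{\star}{p}$ together with Lemmas~\ref{lem:admtoAAAS}, \ref{lem:AAAStoadm} and~\ref{lem:admsynth}. For the direction from $\game$ to $\bothgameadmp{\star}{p}$, suppose $\strat$ is a $\star$-admissible player-$p$ strategy that is $\star$-winning against every profile of $\star$-admissible strategies of $-p$. Being admissible, $\strat$ is $\star$-LA, hence admits a Dirac realisation $\overline{\strat}\in\rectsetdet_p(\bothgameadmp{\star}{p})$, and I would show this $\overline{\strat}$ is $\sure$-winning for $\bothPhiAA{\star}{p}$. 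Arguing by contradiction, a losing run $\rho\in\outcome{\bothgameadmp{\star}{p},\overline{\strat}}$ would make the antecedent $\bigwedge_{q\neq p}\bothphivalzero{\star}{q}\wedge\bothphivalone{\star}{q}$ true while falsifying the consequent, i.e.\ it would satisfy $\neg\win{p}\wedge\Box\neg\bothvalpredstates{\star}{p}{1}$. This is precisely the hypothesis of Lemma~\ref{lem:AAAStoadm}, which then yields a profile $\tau$ of $\star$-admissible $-p$ strategies with $\outcome{\game,(\strat,\tau)}=\{\obs(\rho)\}$. Since $\rho\not\models\win{p}$ we have $\obs(\rho)\notin\Phi(p)$, so $(\strat,\tau)\not\starmodels{\Phi(p)}$, contradicting that $\strat$ wins against all admissible $-p$ profiles.

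Conversely, let $\stratbis$ be a $\sure$-winning strategy for $\bothPhiAA{\star}{p}$; as the objective is $\omega$-regular and the game finite, $\stratbis$ may be taken finite-memory. I would define a player-$p$ strategy $\strat$ in $\game$ that, reading histories through $\obs$, follows (the extension of) $\stratbis$ as long as the current state has $p$-value $0$, switches the first time a $p$-value-$1$ state is reached to a $\star$-winning strategy (randomised if $\star=\asure$), and plays a fixed LA move in $p$-value-$(-1)$ states. Then $\strat$ is finite-memory, $\star$-LA, and Dirac in value-$\leq 0$ histories. To see that $\strat$ is $\star$-winning against any $\star$-admissible profile $\tau$ of $-p$, I would first use Proposition~\ref{theo:Diracsuffices} and Lemma~\ref{lem:samedistronestep} to replace each $\tau_q$ by a $\star$-equivalent strategy that is Dirac in value-$\leq 0$ histories; the play of $(\strat,\tau)$ then follows a single deterministic path through $p$-value-$0$ states that lifts to a run $\rho\in\outcome{\bothgameadmp{\star}{p},\stratbis}$. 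Because each $\tau_q$ is $\star$-admissible, the argument of Lemma~\ref{lem:admtoAAAS} applied to player $q$ (using only that $\tau_q$ is $\star$-SCO and records $\star$-LA moves, both available in the states of $\bothgameadmp{\star}{p}$) shows $\rho\models\bothphivalzero{\star}{q}$, and $\tau_q$ being $\star$-winning from $q$-value-$1$ histories gives $\rho\models\bothphivalone{\star}{q}$. Thus the antecedent of $\bothPhiAA{\star}{p}$ holds on $\rho$, so $\stratbis$ being $\sure$-winning forces $\win{p}\vee\Diamond\bothvalpredstates{\star}{p}{1}$: either a $p$-value-$1$ state is reached and $\strat$ wins almost surely through its switched-in winning strategy, or $\win{p}$ holds along the deterministic outcome, whence $(\strat,\tau)\starmodels{\Phi(p)}$. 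Finally, $\strat$ is $\star$-LA but need not be $\star$-SCO, so I would apply Lemma~\ref{lem:admsynth} to obtain a finite-memory $\star$-admissible $\strat'$ with $\strat\bothdomstrateq{\star}\strat'$; domination preserves winning against each admissible $\tau$, so $\strat'$ is the required $\star$-admissible winning strategy. Effectivity holds throughout: $\bothgameadmp{\star}{p}$ and its restriction to LA moves are computable via Lemma~\ref{lem:dominemove}, the resulting $\omega$-regular imperfect-information game is solved by~\cite{DBLP:journals/tocl/ChatterjeeAH11}, and Lemma~\ref{lem:admsynth} is constructive.

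The hard part will be the measure-theoretic bridging in the reverse direction under the $\asure$ semantics: establishing $\Prob_{(\strat,\tau)}(\Phi(p))=1$ against a genuinely randomised admissible $\tau$. The key is that the equivalence Lemmas~\ref{lem:samedistronestep} and~\ref{lem:equivstrat} collapse all value-$\leq 0$ behaviour to a single Dirac path faithfully simulated in $\bothgameadmp{\star}{p}$, so that the only randomisation that matters occurs after a $p$-value-$1$ state is entered, where an almost-surely winning continuation takes over. One must also verify that $p$'s value cannot drop to $-1$ before a value-$1$ state is reached: such a drop would falsify the consequent $\win{p}\vee\Diamond\bothvalpredstates{\star}{p}{1}$, whereas the admissibility of $\tau$ forces the antecedent of $\bothPhiAA{\star}{p}$ to hold on the lifted run (via Lemma~\ref{lem:admtoAAAS} and the $\bothAfterHelpMove{\star}{q}$ bookkeeping), contradicting that $\stratbis$ is $\sure$-winning; so the drop cannot happen.
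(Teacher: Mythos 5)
Your left-to-right direction is sound and is essentially the paper's completeness argument stated contrapositively: a run $\rho\in\outcome{\bothgameadmp{\star}{p},\overline{\strat}}$ violating $\bothPhiAA{\star}{p}$ satisfies exactly the hypothesis of Lemma~\ref{lem:AAAStoadm}, which yields an admissible profile $\tau$ with $\outcome{\game,(\strat,\tau)}=\{\obs(\rho)\}$, contradicting that $\strat$ wins against all admissible profiles. The skeleton of your right-to-left direction (take an extension of $\stratbis$, switch to a winning strategy once a $p$-value-$1$ history is entered, then invoke Lemma~\ref{lem:admsynth} to pass to an admissible strategy weakly dominating it) also matches the paper.

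The genuine gap is in how you handle a \emph{randomised} admissible profile $\tau$ of $-p$: you propose to ``replace each $\tau_q$ by a $\star$-equivalent strategy that is Dirac in value-$\leq 0$ histories'' (via Proposition~\ref{theo:Diracsuffices} and Lemma~\ref{lem:samedistronestep}) and then prove that $\strat$ beats the replaced profile. This transfer is unsound, because the equivalence produced by Proposition~\ref{theo:Diracsuffices} is relative to player $q$'s \emph{own} objective $\Phi(q)$: it guarantees that $\tau_q$ and its replacement $\tau'_q$ win for $\Phi(q)$ against the same opposing strategies, and Lemma~\ref{lem:samedistronestep} only equalises one-step distributions on successors of $q$-value $0$ and on the \emph{aggregate} of successors of $q$-value $1$. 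Nothing prevents $\tau_q$ and $\tau'_q$ from steering the play to \emph{different individual states} of $q$-value $1$ or $-1$ (Lemma~\ref{lem:dominemove}~(\ref{item:2}) forces equality of successor states only when the value is $0$), and those distinct states may have different values \emph{for player $p$}. Hence $(\strat,\tau')\starmodels{\Phi(p)}$ does not imply $(\strat,\tau)\starmodels{\Phi(p)}$, so winning against the Dirac replacements proves nothing about the original $\tau$. (A secondary defect: $\tau'_q$ is Dirac only at histories of $q$-value $\leq 0$, so at histories of $p$-value $0$ but $q$-value $1$ it may still randomise, and your ``single deterministic path through $p$-value-$0$ states'' claim fails as well.) The paper avoids any replacement: keeping the randomised $\tau$, it notes that $\Prob_{(\strat',\tau)}(\bothphivalone{\star}{q})=1$ because each admissible $\tau_q$ wins from its value-$1$ histories, and then argues \emph{run by run} on $\outcome{\game,(\strat',\tau)}$: each run $\rho$ satisfying $\Box\neg\bothvalpredstates{\star}{p}{1}\wedge\bigwedge_{q\neq p}\bothphivalone{\star}{q}$ is lifted to a run $\overline{\rho}$ of $\bothgameadmp{\star}{p}$ using the actually played support actions $a^k_q\in\supp(\tau_q(\rho_{\leq k}))$ --- which are Dirac $\star$-LA moves by Lemma~\ref{lem:characadmmoves} --- so that $\overline{\rho}$ is simultaneously an outcome of $\stratbis$ and of realisations $\overline{\tau_q}$ of the $\tau_q$; Lemma~\ref{lem:admtoAAAS} then gives $\overline{\rho}\models\bothphivalzero{\star}{q}$, and the winningness of $\stratbis$ forces $\overline{\rho}\models\win{p}$, hence $\rho\in\Phi(p)$. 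Your proof can be repaired by substituting this per-run lifting on the support of $\tau$ for the replacement step.
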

\begin{proof}
  \noindent\textbf{Completeness:}
  Assume there exists an admissible strategy
  $\strat\in\rectset_p(\game)$ that wins against every admissible
  strategy.  Let $\overline{\strat}\in\rectsetdet_p(\game)$ be a
  realisation of $\strat$.  
  %By Lemma \ref{lem:admtoAAAS}, $\bothgameadmp{\star}{p},\overline{\strat} \models
  %\bothphivalzero{\star}{p}$ because $\strat$ is admissible.  
  Note
  that every runs
  $\rho\in\outcome{\bothgameadmp{\star}{p},\overline{\strat}}$ that
  satisfies $\Diamond \bothvalpredstates{\star}{p}{1}$ also satisfies
  $\bothPhiAA{\star}{p}$.  The other runs
  $\rho\in\outcome{\bothgameadmp{\star}{p},\overline{\strat}}$ satisfy
  $\neg\Diamond \bothvalpredstates{\star}{p}{1}\equiv \Box (\neg
  \bothvalpredstates{\star}{p}{1})$; for these runs we show that
  $\bigwedge_{q\neq p}
  \bothphivalzero{\star}{q}\wedge\bothphivalone{\star}{q}\rightarrow
  \win{p}$. So let
  $\rho\in\outcome{\bothgameadmp{\star}{p},\overline{\strat}}$ be such
  that
  $ \Box (\neg \bothvalpredstates{\star}{p}{1}) \wedge
  \bigwedge_{q\neq p}
  \bothphivalzero{\star}{q}\wedge\bothphivalone{\star}{q}$.  By Lemma
  \ref{lem:AAAStoadm}, there exists $\tau$ that only contains
  admissible profiles such that
  $\outcome{\strat,\tau}=\{\obs(\rho)\}$.  By assumption
  $(\strat,\tau)$ is winning for $\Phi(p)$, hence so is $\rho$.
  
  \noindent\textbf{Correctness:}
  We show the following stronger statement.  If a strategy $\strat$ is
  winning for $\bothPhiAA{\star}{p}$ in $\bothgameadmp{\star}{p}$ then
  one can construct a strategy $\hat \strat$ admissible that wins
  against every profile of admissible strategies as follows:
  \begin{inparaenum}[(i)]
  \item take an extension $\strat'$ of $\strat$; 
  \item modify $\strat'$ into $\strat''$ that plays a winning strategy
    as soon as a history of value $1$ is entered;
  \item use Lemma~\ref{lemma:build-strat} to design an admissible
    strategy $\hat \strat$ that weakly dominates $\strat''$.
  \end{inparaenum}

  We do the proof for the almost-sure semantics only, that is, for
  $\star=\asure$. The slightly simpler but similar proof for
  $\star=\sure$ is
  omitted. %\remn{il faut donc remplacer les $\star$ par des $\asure$ dans le restant de la preuve....}
  We first show that if a strategy $\strat$ is winning for
  $\bothPhiAA{\asure}{p}$ in $\bothgameadmp{\asure}{p}$ then for every
  extension $\strat'$ of $\strat$ and every profile $\tau$ of
  admissible strategies, it holds that
  $\game,(\strat',\tau)\rmodels{\win{p}\vee
    \Diamond\bothvalpredstates{\asure}{p}{1}}$.

  Let $\tau$ be a profile of admissible strategies. Note that
  $\Prob_{(\strat',\tau)}(\Diamond
  \bothvalpredstates{\asure}{p}{1}\vee \Phi(p))=1$ is equivalent to
  $\Prob_{(\strat',\tau)}(\Box (\neg
  \bothvalpredstates{\asure}{p}{1})\wedge \neg \Phi(p))=0$.  For
  $q\neq p$, $\tau_q$ is admissible, so, winning from histories of
  value $1$, hence
  $\Prob_{(\strat',\tau)}(\bothphivalone{\asure}{q})=1$ and we have
  $\Prob_{(\strat',\tau)}(\Box (\neg
  \bothvalpredstates{\asure}{p}{1})\wedge \neg
  \Phi(p))=\Prob_{(\strat',\tau)}( \bigwedge_{q\neq p}
  \bothphivalone{\asure}{q}\wedge\Box (\neg
  \bothvalpredstates{\asure}{p}{1})\wedge \neg \Phi(p) )$.  To show
  that this probability is null it suffices to show that every run
  $\rho\in \outcome{\game,(\strat',\tau)}$ that satisfies
  $\bigwedge_{q\neq p} \bothphivalone{\asure}{q}\wedge\Box (\neg
  \bothvalpredstates{\asure}{p}{1})$ also satisfies $\Phi(p)$.  Take
  such a run $\rho$, then there is a sequence of move profile
  $\vec a^k$ such that $a^k_q\in \supp(\tau_q(\rho_k))$ and
  $\rho_{k+1}=\transfun(\rho_{k}, \vec a^k)$.  Define the run
  $\overline{\rho}$ by $\overline{\rho}_k=(\rho_{k}, \vec a^k)$ for
  $k\geq 1$ and $\overline{\rho}_0=\vinit'$ and note that
  $\rho=\obs(\overline{\rho})$.  We also have that $\overline{\rho}$
  is a run of $\outcome{\bothgameadmp{\asure}{p},\strat}$ and for
  $q\neq p$ of $\outcome{\bothgameadmp{\asure}{p},\overline{\tau_q}}$
  where $\overline{\tau_q}$ is any realisation of $\tau_q$ for which
  $a^k_q=\tau_q(\rho_k)$ for every $k\geq 1$.  
  The run $\overline{\rho}$ satisfies
  $\bigwedge_{q\neq p} \bothphivalone{\asure}{q}$ by assumption on
  $\rho$ and $\bigwedge_{q\neq p} \bothphivalzero{\asure}{q}$ because
  all the $\tau_q$ are admissible (Lemma \ref{lem:admtoAAAS}).
  Putting all the things together $\overline{\rho}$ satisfies
  $\Phi(p)$ and so does $\rho$.  We conclude that $\strat'$ satisfies
  that for every profile $\tau$ of admissible strategies, it holds
  that
  $(\strat',\tau)\rmodels{\win{p}\vee
    \Diamond\bothvalpredstates{\asure}{p}{1}}$.  By construction
  $\strat''$ wins against every profile $\tau$ for which
  $(\strat',\tau)\rmodels{\Diamond\bothvalpredstates{\asure}{p}{1}}$.
  So $\strat''$ is $\asure$-winning against every profile $\tau$ of
  $\asure$-admissible profile.  Since
  $\strat''\bothdomstrateq{\asure} \hat{\strat}$, we deduce that
  $\hat{\strat}$ is $\asure$-winning against every $\asure$-admissible
  profile.
\end{proof}

Let us explain how we build a strategy in $\game$ with the desired
properties, from any player~$p$ strategy enforcing
$\bothPhiAA{\star}{p}$ in $\bothgameadmp{\star}{p}$. Remember that
$\bothgameadmp{\star}{p}$ ensures that the players play $\star$-LA
moves only. We will use $\bothPhiAA{\star}{p}$ to make sure that, when
SCO strategies are played by $-p$ (relying on the extra information we
have encoded in the states), then $p$ reaches a state of value $1$.
First, consider $\bothphivalzero{\star}{q}$ for $q\neq p$. Runs that
satisfy this formula are either those that visit states of value $0$
only finitely often ($\Diamond \neg\bothvalpredstates{\star}{q}{0}$);
or those that stay in states of value $0$, in which case they must be
either winning ($\win{q}$) or visit infinitely often states where
Player $q$ could have been helped by the other players
($\Box \Diamond \bothAfterHelpMove{\star}{q}$). This is a necessary
condition on runs visiting only value $0$ states for the strategy to
be SCO. Next, observe that $\bothphivalone{\star}{q}$ states that
\emph{if} a history of value $1$ is entered \emph{then} Player $q$
must win. This allows us to understand the left part of the
implication in $\bothPhiAA{\star}{p}$: the implication can be read as
`if all other players play a $\star$-admissible strategy, then either
$p$ should win ($\win{p}$) or a state of value $1$ for player $p$
should eventually be visited
($\Diamond\bothvalpredstates{\star}{p}{1}$)'.  Then a strategy
$\stratter$ (in $\game$) that wins against admissible strategies can
be extracted from a winning strategy $\stratbis$ (in
$\bothgameadmp{\star}{p}$) in a straightforward way, \emph{except}
when $\stratbis$ enforces to reach a state of value $1$
($\Diamond\bothvalpredstates{\star}{p}{1}$ in
$\bothPhiAA{\star}{p}$). In this case, $\strat$ cannot follow
$\stratbis$, but must rather switch to a \emph{winning} strategy,
which:
\begin{inparaenum}[(i)]
\item is guaranteed to exist since the state that has been reached has
  value $1$; and
\item can be computed using classical
  techniques~\cite{DBLP:journals/tocl/ChatterjeeAH11}.
\end{inparaenum}
The strategy $\stratter$ is not necessarily admissible but
by Theorem \ref{theo:exist-adm} (\ref{item:7}), 
%how to construct an admissible strategy $\strat$ 
there is an admissible strategy $\strat$ 
with $\stratter\bothdomstrateq{\star} \strat$.
By  weak domination, $\strat$ wins against more profiles
than $\stratter$, in particular, it wins against the profiles of admissible  
strategies of the other players.

\begin{figure}[ht]
  \centering
\scalebox{0.8}{
  \begin{tikzpicture}[->,>=stealth',shorten >=1pt,auto,node distance=1.55cm, initial text={}]
    \node[state, rectangle,thick, dashed, inner sep=1pt] (q00) {
      \begin{tikzpicture}[solid]%[rounded=0mm]
        \node[initial,state,rectangle] (q1)                      {$\overline{s_0}$};
        \node[state,rectangle,very thick]          (q12) [right = of q1]         {$s_0,(a,b')$};
        \node[state,rectangle]          (q11) [above right  =of q12]         {$s_0,(a,a')$};
        \node[state,rectangle]          (q13) [below right =of q12]         {$s_0,(b,a')$};     
        \path (q1) edge  node {$(a,b')$} (q12);
        \path (q1) edge [bend right] node [left= 0.5 cm] {$(b,a')$} (q13); 
        \path (q1) edge [bend left] node[pos=0.2]  {$(a,a')$} (q11);      
        \path (q11) edge  node[outer sep=-5pt] {$(a,b')$} (q12);
        \path (q11) edge [loop right] node {$(a,a')$} (q11);
        \path (q12) edge [loop below] node[outer sep=-5pt] {$(a,b')$} (q12);
        \path (q13) edge [loop right] node {$(b,a')$} (q13);
        \path (q11) edge [bend left] node {$(b,a')$} (q13);
        \path (q12) edge [bend left] node[outer sep=-5pt] {$(a,a')$} (q11);
        \path (q12) edge  node[outer sep=-5pt] {$(b,a')$} (q13);
        \path (q13) edge  node {$(a,a')$} (q11);
        \path (q13) edge [bend left] node[outer sep=-5pt] {$(a,b')$} (q12);      
      \end{tikzpicture}
    };
    \node[state,rectangle,very thick]          (q2) [right= of q00]         {$s_1,(b,b')$};
    \node[state,rectangle]          (q3) [right=of q2]         {$Win$};   
    \path (q00) edge node {$(b,b')$} (q2);
    \path (q2) edge  node {$(d,d')$} (q3);   

    \node[above right  = -1cm and 1cm of q00, gray] (label) {All states $\overline{s}$
      s.t. $\obs(\overline{s})= s_0$} ;
    \path[gray] (label.north west) edge[bend right] (q00.north east) ;
  \end{tikzpicture}
}
\caption{\label{fig:imperfectinfo} The game $\bothgameadmp{\asure}{1}$
  obtained from the game in \figurename~\ref{ex:runex}. Bold states
  $\big(s_0,(a,b')\big)$ and $\big(s_1,(b,b')\big)$ are the states of
  $\bothAfterHelpMove{\asure}{2}$. There is a $(b,b')$-labelled
  transition from all states in the dashed rectangle to $\big(s_1,(b,b')\big)$.}
\end{figure}

\begin{example}
  In our running example, observe that
  $\neg\bothvalpredstates{\asure}{2}{0}=\bothvalpredstates{\asure}{2}{1}=\{Win\}$
  since there is no state of value $-1$ in $\game$. Hence,
  $\win{2}=\Diamond Win=\Diamond \bothvalpredstates{\asure}{2}{1} =
  \Diamond \neg\bothvalpredstates{\asure}{2}{0}$. Finally,
  $\bothAfterHelpMove{\asure}{2}=\big\{\big(s_0,(a,b')\big),\big(s_1,(b,b')\big)\big\}$,
  so, after simplification:
  $\bothPhiAA{\asure}{1}=\big[\Diamond Win \vee
  \Box\Diamond\big((s_0,(a,b')) \vee(s_1,(b,b'))\big)\big]\rightarrow
  \Diamond Win$. Thus, to win in $\bothgameadmp{\asure}{1}$ (under the
  \emph{sure} semantics), player 1 must ensure to reach $Win$ as long
  as player 2 visits the set of \emph{bold} states in
  \figurename~\ref{fig:imperfectinfo} infinitely often. A winning
  strategy $\stratbis$ in $\bothgameadmp{\asure}{1}$ consists in
  (eventually) always
  %(Actually, a strategy that \emph{eventually} always
  %  plays $b$ is sufficient.) 
    playing $b$ from all states in the
  dashed rectangle; and $d$ from $\big(s_1,(b,b')\big)$. Observe that
  this strategy is compatible with $\obs$. From $\stratbis$, we can
  extract an admissible player $1$ strategy in $\game$: always play
  $b$ in $s_0$; always play $d$ in $s_1$; and play a winning strategy
  from $s_2$ (which is of value $1$), for instance: always play
  $0.5f+0.5g$ from $s_2$ like $\strat_3$ does.
\end{example}

We conclude by a remark on games with simple safety objectives.

\begin{remark}
  In the case of simple safety games, the situation is much
  simpler. We have seen in Theorem~\ref{theo:classifconc} that, for
  simple safety objectives, $\star$-LA strategies are exactly the
  admissible strategies. So, can simply build $\game_p$ from $\game$
  by pruning the actions which are not $\star$-LA (the labelling by
  actions is not necessary anymore since its sole purpose is to
  enforce SCO), and look for a player $p$ winning strategy in the
  resulting game.
\end{remark}

%%% Local Variables:
%%% mode: latex
%%% TeX-master: "main"
%%% End:

%\bibliography{main}

\end{document}